\newtheorem{theorem}{Theorem}					
\newtheorem{proposition}{Proposition}
\newtheorem{lemma}{Lemma}
\newtheorem{remark}{Remark}
\newtheorem{definition}{Definition}
\newtheorem{corollary}{Corollary}
\newcommand{\smalllb}{\\[-0.25cm]}
\let\exampleOrig\endexample
\def\endexample{\hspace*{0pt}\hfill$\triangleleft$\smalllb\exampleOrig}
\let\remarkOrig\endremark
\def\endremark{\hspace*{0pt}\hfill$\triangleleft$\smalllb\remarkOrig}
\newcommand{\REFlem}[1]{\text{Lem.~\ref{#1}}}
\newcommand{\REFthm}[1]{\text{Thm.~\ref{#1}}}
\newcommand{\REFdef}[1]{Def.~\ref{#1}}
\newcommand{\REFrem}[1]{Rem.~\ref{#1}}
\newcommand{\REFsec}[1]{Sec.~\ref{#1}}
\newcommand{\REFfig}[1]{Fig.~\ref{#1}}
\newcommand{\REFprop}[1]{Prop.~\ref{#1}}
\newcommand{\deff}{:=}
\newcommand{\BR}[1]{\left( #1 \right)}
\newcommand{\ON}[1]{\operatorname{#1}}
\def\clap#1{\hbox to 0pt{\hss#1\hss}}
\newcommand{\val}[1]{\ensuremath{\mathsf{#1}}}
\newcommand{\DiCases}[4]{\ensuremath{\begin{cases}%
#1&,~#2\\%
#3&,~#4%
\end{cases}}}%
\newcommand{\TriCases}[6]{\ensuremath{\begin{cases}%
#1&,~#2\\%
#3&,~#4\\%
#5&,~#6%
\end{cases}}}%
\newif\ifFIRST
\newif\ifSECOND
\let\LISTOP\relax
\newcommand{\List}[4][\;]{#3#1%
	\FIRSTtrue
	\@for\i:=#2\do{%
	\ifFIRST\LISTOP{\i}\FIRSTfalse\else,\LISTOP{\i}\fi%
	}%
	#1#4%
	\let\LISTOP\relax
}
\newcounter{DINGLIST}
\newcommand{\markD}[3][\;\;]{\text{\ding{\the\numexpr171+\theDINGLIST}\stepcounter{DINGLIST}}#1#3}
\newcommand{\ZZ}{\textsl{Zu Zeigen:}~\@ifstar\ZZStar\ZZNoStar}
\newcommand{\ZZStar}[1]{\begin{align*}#1\end{align*}}
\newcommand{\ZZNoStar}[1]{\ensuremath{#1}}
\newcommand{\THATIS}{i.e.\xspace}
\newcommand{\SUCHTHAT}{s.t.\xspace}
\newcommand{\SHOW}[2][]{Show \ifthenelse{\isempty{#1}}{}{#1, \THATIS, }\ensuremath{#2}:}
\newcommand{\SORRY}[1]{\emph{\color{red}Sorry: #1}\@latex@warning{SORRY: #1}}
\newcommand{\propNeg}{\@ifstar\propNegStar\propNegNoStar}
\newcommand{\propNegStar}[1]{\ensuremath{\left(\propNegNoStar{#1}\right)}}
\newcommand{\propNegNoStar}[2][\cdot]{\ensuremath{\neg\ifthenelse{\isempty{#2}}{#1}{#2}}}
\newcommand{\propConj}{\@ifstar\propConjStar\propConjNoStar}
\newcommand{\propConjStar}[2]{\ensuremath{\left(\propConjNoStar{#1}{#2}\right)}}
\newcommand{\propConjNoStar}[3][\cdot]{\ensuremath{\ifthenelse{\isempty{#2}}{#1}{#2}\wedge\ifthenelse{\isempty{#3}}{#1}{#3}}}
\newcommand{\propDisj}{\@ifstar\propDisjStar\propDisjNoStar}
\newcommand{\propDisjStar}[2]{\ensuremath{\left(\propDisjNoStar{#1}{#2}\right)}}
\newcommand{\propDisjNoStar}[3][\cdot]{\ensuremath{\ifthenelse{\isempty{#2}}{#1}{#2}\vee\ifthenelse{\isempty{#3}}{#1}{#3}}}
\newcommand{\propImp}{\@ifstar\propImpStar\propImpNoStar}
\newcommand{\propImpStar}[2]{\ensuremath{\left(\propImpNoStar{#1}{#2}\right)}}
\newcommand{\propImpNoStar}[3][\cdot]{\ensuremath{\ifthenelse{\isempty{#2}}{#1}{#2}\Rightarrow\ifthenelse{\isempty{#3}}{#1}{#3}}}
\newcommand{\propAequ}{\@ifstar\propAequStar\propAequNoStar}
\newcommand{\propAequStar}[2]{\ensuremath{\left(\propAequNoStar{#1}{#2}\right)}}
\newcommand{\propAequNoStar}[3][\cdot]{\ensuremath{\ifthenelse{\isempty{#2}}{#1}{#2}\Leftrightarrow\ifthenelse{\isempty{#3}}{#1}{#3}}}
\newcommand{\propXOR}{\@ifstar\propXORStar\propXORNoStar}
\newcommand{\propXORStar}[2]{\ensuremath{\left(\propXORNoStar{#1}{#2}\right)}}
\newcommand{\propXORNoStar}[3][\cdot]{\ensuremath{\ifthenelse{\isempty{#2}}{#1}{#2}\oplus\ifthenelse{\isempty{#3}}{#1}{#3}}}
\newcommand{\AllQ}{\@ifstar\AllQStar\AllQNoStar}
\newcommand{\AllQStar}[3][\;]{\ensuremath{\left(\forall #2#1.#1#3\right)}}
\newcommand{\AllQNoStar}[3][\;]{\ensuremath{\forall #2#1.#1#3}}
\newcommand{\AllQu}{\@ifstar\AllQuStar\AllQuNoStar}
\newcommand{\AllQuStar}[3][\;]{\ensuremath{\left(\forall^{\infty} #2#1.#1#3\right)}}
\newcommand{\AllQuNoStar}[3][\;]{\ensuremath{\forall^{\infty} #2#1.#1#3}}
\newcommand{\ExQ}{\@ifstar\ExQStar\ExQNoStar}
\newcommand{\ExQStar}[3][\;]{\ensuremath{\left(\exists #2#1.#1#3\right)}}
\newcommand{\ExQNoStar}[3][\;]{\ensuremath{\exists #2#1.#1#3}}
\newcommand{\NExQ}{\@ifstar\NExQStar\NExQNoStar}
\newcommand{\NExQStar}[3][\;]{\ensuremath{\left(\nexists #2#1.#1#3\right)}}
\newcommand{\NExQNoStar}[3][\;]{\ensuremath{\nexists #2#1.#1#3}}
\newcommand{\UniqueQ}{\@ifstar\UniqueQStar\UniqueQNoStar}
\newcommand{\UniqueQStar}[3][\;]{\ensuremath{\left(\exists! #2#1.#1#3\right)}}
\newcommand{\UniqueQNoStar}[3][\;]{\ensuremath{\exists! #2#1.#1#3}}
\newcommand{\Set}[2][]{\List[#1]{#2}{\{}{\}}}
\newcommand{\VSet}[2][]{\let\LISTOP\val\List[#1]{#2}{\{}{\}}}
\newcommand{\Tuple}[2][]{\List[#1]{#2}{(}{)}}
\newcommand{\VTuple}[2][]{\let\LISTOP\val\List[#1]{#2}{(}{)}}
\newcommand{\StateLabelYt}[2]{\scriptstyle\langle\kern-1pt{#1}\kern-1pt{#2}\kern-1pt\rangle} 
\newcommand{\StateLabelYo}[1]{\ye_{#1}} 
\newcommand{\StateLabelYtb}[2]{\scriptstyle\{\kern-1pt\langle\kern-1pt{#1}\kern-1pt{#2}\kern-1pt\rangle\kern-1pt\}} 
\newcommand{\StateLabelYob}[1]{\scriptstyle\{\kern-1pt{\ye_{#1}}\kern-1pt\}} 
\newcommand{\EdgeLabelYt}[2]{\scriptstyle(\kern-1pt{#1}\kern-1pt{,}\kern-1pt{#2}\kern-1pt)} 
\newcommand{\SetComp}[3][]{\{#1#2#1\mid#1#3#1\}}
\newcommand{\SetCompX}[3][]{\left\{#1#2#1\middle\vert#1#3#1\right\}}
\newcommand{\POWERSET}{\@ifstar\POWERSETStar\POWERSETNoStar}
\newcommand{\POWERSETStar}[1]{\ensuremath{\ON{2}^{\ifthenelse{\isempty{#1}}{\cdot}{#1}}}}
\newcommand{\POWERSETNoStar}[1]{\ensuremath{\ON{2}^{\ifthenelse{\isempty{#1}}{\cdot}{#1}}}}
\newcommand{\FINPOWERSET}{\@ifstar\FINPOWERSETStar\FINPOWERSETNoStar}
\newcommand{\FINPOWERSETStar}[1]{\ensuremath{\mathcal{P}_{\ON{fin}}(\ifthenelse{\isempty{#1}}{\cdot}{#1})}}
\newcommand{\FINPOWERSETNoStar}[1]{\ensuremath{\mathcal{P}_{\ON{fin}}\left(\ifthenelse{\isempty{#1}}{\cdot}{#1}\right)}}
\newcommand{\UNION}{\@ifstar\UNIONStar\UNIONNoStar}
\newcommand{\UNIONStar}[2]{\ensuremath{\left(\UNIONNoStar{#1}{#2}\right)}}
\newcommand{\UNIONNoStar}[2]{\ensuremath{\ifthenelse{\isempty{#1}}{\cdot}{#1}\cup\ifthenelse{\isempty{#2}}{\cdot}{#2}}}
\newcommand{\UNIOND}{\@ifstar\UNIONDStar\UNIONDNoStar}
\newcommand{\UNIONDStar}[2]{\ensuremath{\left(\UNIONDNoStar{#1}{#2}\right)}}
\newcommand{\UNIONDNoStar}[2]{\ensuremath{\ifthenelse{\isempty{#1}}{\cdot}{#1}\uplus\ifthenelse{\isempty{#2}}{\cdot}{#2}}}
\newcommand{\SETMINUS}{\@ifstar\SETMINUSStar\SETMINUSNoStar}
\newcommand{\SETMINUSStar}[2]{\ensuremath{\left(\SETMINUSNoStar{#1}{#2}\right)}}
\newcommand{\SETMINUSNoStar}[2]{\ensuremath{\ifthenelse{\isempty{#1}}{\cdot}{#1}\setminus\ifthenelse{\isempty{#2}}{\cdot}{#2}}}
\newcommand{\INTERSECT}{\@ifstar\INTERSECTStar\INTERSECTNoStar}
\newcommand{\INTERSECTStar}[2]{\ensuremath{\left(\INTERSECTNoStar{#1}{#2}\right)}}
\newcommand{\INTERSECTNoStar}[2]{\ensuremath{\ifthenelse{\isempty{#1}}{\cdot}{#1}\cap\ifthenelse{\isempty{#2}}{\cdot}{#2}}}
\newcommand{\CARTPROD}{\@ifstar\CARTPRODStar\CARTPRODNoStar}
\newcommand{\CARTPRODStar}[2]{\ensuremath{\left(\CARTPRODNoStar{#1}{#2}\right)}}
\newcommand{\CARTPRODNoStar}[2]{\ensuremath{\ifthenelse{\isempty{#1}}{\cdot}{#1}\times\ifthenelse{\isempty{#2}}{\cdot}{#2}}}
\newcommand{\FINCOUNT}{\@ifstar\FinCountStar\FinCountNoStar}
\newcommand{\FinCountStar}[1]{\ensuremath{\#(\ifthenelse{\isempty{#1}}{\cdot}{#1})}}
\newcommand{\FinCountNoStar}[1]{\ensuremath{\#\left(\ifthenelse{\isempty{#1}}{\cdot}{#1}\right)}}
\newcommand{\sconc}{\cdot}
\newcommand{\sconcps}{\hspace{-0.1cm}\cdot\hspace{-0.1cm}}
\newcommand{\fun}{\ensuremath{\ON{\rightarrow}}}
\tikzstyle{Sstate}=[state,inner sep=1pt,minimum size=0pt]
\tikzstyle{istate}=[state,initial,inner sep=1pt,minimum size=0pt,initial text=]
\tikzstyle{fistate}=[state,accepting,initial,initial text=]
\tikzstyle{fistateA}=[state,accepting,initial,initial text=,initial where=above]
\tikzstyle{fistateB}=[state,accepting,initial,initial text=,initial where=below]
\tikzstyle{fistateL}=[state,accepting,initial,initial text=,initial where=left]
\tikzstyle{fistateR}=[state,accepting,initial,initial text=,initial where=right]
\tikzstyle{ifstate}=[state,accepting,initial,initial text=]
\tikzstyle{ifstateA}=[state,accepting,initial,initial text=,initial where=above]
\tikzstyle{ifstateB}=[state,accepting,initial,initial text=,initial where=below]
\tikzstyle{ifstateL}=[state,accepting,initial,initial text=,initial where=left]
\tikzstyle{ifstateR}=[state,accepting,initial,initial text=,initial where=right]
\tikzstyle{istateA}=[state,initial,initial text=,initial where=above]
\tikzstyle{istateB}=[state,initial,initial text=,initial where=below]
\tikzstyle{istateL}=[state,initial,initial text=,initial where=left]
\tikzstyle{istateR}=[state,initial,initial text=,initial where=right]
\tikzstyle{fstate}=[state,inner sep=1pt,minimum size=0pt,accepting]
\tikzstyle{SFSautomat}=[->,>=stealth',shorten >=1pt,auto,node distance=2cm,on grid,semithick,inner sep=1pt,bend angle=45]
\newcommand{\SFSAutomatEdge}[5]{\draw[->](#1) to[#4] node[#5] {\ensuremath{#2}} (#3);}
\newenvironment{propConjA}{\left(\def\unionAtest{1}\begin{array}{@{\if\unionAtest1\gdef\unionAtest{0}\phantom{\wedge}\else\wedge\fi}l@{}}}{\end{array}\right)}
\newenvironment{propDisjA}{\left(\def\unionAtest{1}\begin{array}{@{\if\unionAtest1\gdef\unionAtest{0}\phantom{\vee}\else\vee\fi}l@{}}}{\end{array}\right)}
  \newlength{\SFS@HEIGHT}
  \newlength{\SFS@WIDTH}
  \newcommand{\SplitX}[2]{
	    \settoheight{\SFS@HEIGHT}{$#2$}
	    \settowidth{\SFS@WIDTH}{$#2$}
	    \mbox{\begin{tikzpicture}[baseline=(current bounding box.center)]
	    \node[] (E) at (0,0) {$#1$};
	    \node[inner sep=0pt] (F) at ($(E.south west)+(1ex,-1ex)+(3ex+.5\SFS@WIDTH,-\SFS@HEIGHT)$) {$#2$};
	    \node[] (E) at (0,0) {\phantom{$#1$}};
	    \draw[fill] ($(E.east)+(0ex,0ex)$) circle (.2ex);
	    \draw[-] ($(E.east)+(0ex,0ex)$) -- ($(E.south east)+(0ex,-0.5ex)$) -- ($(E.south west)+(1ex,-0.5ex)$) -- ($(E.south west)+(1ex,-1ex)-(0,\SFS@HEIGHT)$) -- ($(E.south west)+(2.5ex,-1ex)-(0,\SFS@HEIGHT)$);
	    \draw[fill] ($(E.south west)+(2.5ex,-1ex)-(0,\SFS@HEIGHT)$) circle (.2ex);
	    \end{tikzpicture}}}
  \newcommand{\SplitS}[2]{
	    \settoheight{\SFS@HEIGHT}{$#2$}
	    \settowidth{\SFS@WIDTH}{$#2$}
	    \mbox{\begin{tikzpicture}[baseline=(current bounding box.center)]
	    \node[] (E) at (0,0) {$#1$};
	    \node[inner sep=0pt] (F) at ($(E.south west)+(1ex,0.5ex)+(3ex+.5\SFS@WIDTH,-\SFS@HEIGHT)$) {$#2$};
	    \end{tikzpicture}}}	    
  \newcommand{\AllQSplit}[2]{\SplitX{\forall\;#1\;.}{#2}}
  \newcommand{\ExQSplit}[2]{\SplitX{\exists\;#1\;.}{#2}}
  \newcommand{\propImpSplit}[2]{\SplitX{#1\;\Rightarrow\;}{#2}}
\providecommand{\length}[1]{\lvert#1\rvert}
\newcommand{\trivialN}[1]{\text{trivial}\xspace}
\newcommand{\Nb}{\ensuremath{\mathbb{N}}} 
\newcommand{\Zb}{\ensuremath{\mathbb{Z}}}
\newcommand{\Nbn}{\ensuremath{\mathbb{N}_{0}}}
\newcommand{\twoup}[1]{\ensuremath{2^{#1}}} 
\newcommand{\extern}[1]{#1}
\newcommand{\abst}[1]{\widehat{#1}}
\newcommand{\lsup}{l}
\newcommand{\lsupm}{\Ilm}
\newcommand{\lsupp}{l^+}
\newcommand{\qsup}{\ensuremath{\triangledown}}
\newcommand{\qsupb}{\ensuremath{\triangledown}}
\newcommand{\Sys}{\ensuremath{\mathcal{S}}} 
\newcommand{\Qsys}{\ensuremath{\mathcal{Q}}} 
\newcommand{\Qsyse}{\ensuremath{\extern{\mathcal{Q}}}}
\newcommand{\Qsysa}{\ensuremath{\abst{\mathcal{Q}}}}
\newcommand{\Qsysal}{\ensuremath{\abst{\mathcal{Q}}^{\lsupm}}}
\newcommand{\Qsysaq}{\ensuremath{\abst{\mathcal{Q}}^\qsup}}
\newcommand{\Qsysaql}{\ensuremath{\abst{\mathcal{Q}}^{l\qsup}}}
\newcommand{\Qsysaqbl}{\ensuremath{\abst{\mathcal{Q}}^{l\qsupb}}}
\newcommand{\Xt}[2]{\ensuremath{\ifthenelse{\isempty{#2}}{X_{\T,#1}}{X_{#2,{\T_{#2}},#1}}}} 
\newcommand{\XT}[1]{\ensuremath{\ifthenelse{\isempty{#1}}{X_\T}{X_{#1,{\T_{#1}}}}}} 
\newcommand{\Xk}[2]{\ensuremath{\ifthenelse{\isempty{#2}}{X_{\TE,#1}}{X_{#2,\TE,#1}}}} 
\newcommand{\XK}[1]{\ensuremath{\ifthenelse{\isempty{#1}}{X_{\TE}}{X_{#1,\TE}}}} 
\newcommand{\Zk}[2]{\ensuremath{\ifthenelse{\isempty{#2}}{Z_{\TE,#1}}{Z_{#2,\TE,#1}}}}
\newcommand{\Zt}[1]{\ensuremath{\ifthenelse{\isempty{#1}}{Z_t}{Z_{#1,t}}}} 
\newcommand{\ZT}[1]{\ensuremath{\ifthenelse{\isempty{#1}}{Z_T}{Z_{#1,T}}}} 
\newcommand{\ZPit}[1]{\ensuremath{\ifthenelse{\isempty{#1}}{Z_t}{Z_{#1,t}}}} 
\newcommand{\ZPiT}[1]{\ensuremath{\ifthenelse{\isempty{#1}}{Z_T}{Z_{#1,T}}}} 
\newcommand{\ZTit}[1]{\ensuremath{\ifthenelse{\isempty{#1}}{\breve{Z}_t}{\breve{Z}_{#1,t}}}} 
\newcommand{\ZTiT}[1]{\ensuremath{\ifthenelse{\isempty{#1}}{\breve{Z}_T}{\breve{Z}_{#1,T}}}} 
\renewcommand{\v}{\ensuremath{v}} 
\renewcommand{\d}{\ensuremath{d}}
\newcommand{\tr}{\ensuremath{\delta}}
\newcommand{\T}{\ensuremath{T}}
\newcommand{\Interval}{\ensuremath{\mathcal{I}}} 
\newcommand{\Ilm}{\ensuremath{{\Interval_m^l}}}
\newcommand{\Ill}{\ensuremath{{\Interval_l^l}}}
\newcommand{\Ilmp}{\ensuremath{{\Interval_{m\plps 1}^l}}}
\newcommand{\Ilpm}{\ensuremath{{\Interval_{m}^{l\plps 1}}}}
\renewcommand{\ll}[1]{\ensuremath{|_{[#1]}}} 
\newcommand{\llr}[1]{\ensuremath{|_{[#1)}}}
\newcommand{\f}[1]{\ensuremath{f\ifthenelse{\isempty{#1}}{}{\Tuple{#1}}}} 
\newcommand{\g}[1]{\ensuremath{g\ifthenelse{\isempty{#1}}{}{\Tuple{#1}}}} 
\newcommand{\h}[1]{\ensuremath{h\ifthenelse{\isempty{#1}}{}{\Tuple{#1}}}} 
\newcommand{\fs}[2]{\ensuremath{f_{#2}\ifthenelse{\isempty{#1}}{}{\Tuple{#1}}}} 
\newcommand{\gs}[2]{\ensuremath{g_{#2}\ifthenelse{\isempty{#1}}{}{\Tuple{#1}}}}  
\newcommand{\hs}[2]{\ensuremath{h_{#2}\ifthenelse{\isempty{#1}}{}{\Tuple{#1}}}}  
\newcommand{\fe}[2]{\ensuremath{\extern{f}_{#1}\ifthenelse{\isempty{#2}}{}{\Tuple{#2}}}} 
\newcommand{\he}[2]{\ensuremath{\extern{h}_{#1}\ifthenelse{\isempty{#2}}{}{\Tuple{#2}}}} 
\newcommand{\fa}[2]{\ensuremath{\abst{f}_{#1}\ifthenelse{\isempty{#2}}{}{\Tuple{#2}}}} 
\newcommand{\ha}[2]{\ensuremath{\abst{h}_{#1}\ifthenelse{\isempty{#2}}{}{\Tuple{#2}}}} 
\newcommand{\fal}[2]{\ensuremath{\abst{f}_{#1}^{\lsupm}\ifthenelse{\isempty{#2}}{}{\Tuple{#2}}}} 
\newcommand{\hal}[2]{\ensuremath{\abst{h}_{#1}^{\lsupm}\ifthenelse{\isempty{#2}}{}{\Tuple{#2}}}} 
\newcommand{\falp}[2]{\ensuremath{\abst{f}_{#1}^{\lsupp}\ifthenelse{\isempty{#2}}{}{\Tuple{#2}}}} 
\newcommand{\halp}[2]{\ensuremath{\abst{h}_{#1}^{\lsupp}\ifthenelse{\isempty{#2}}{}{\Tuple{#2}}}} 
\newcommand{\faq}[2]{\ensuremath{\abst{f}_{#1}^{\qsup}\ifthenelse{\isempty{#2}}{}{\Tuple{#2}}}} 
\newcommand{\haq}[2]{\ensuremath{\abst{h}_{#1}^{\qsup}\ifthenelse{\isempty{#2}}{}{\Tuple{#2}}}}
\newcommand{\lnc}{\ensuremath{l_t}}
\newcommand{\Beh}{\ensuremath{\mathcal{B}}}
\newcommand{\Behf}{\ensuremath{\mathcal{B}_f}}
\newcommand{\BeheQ}{\ensuremath{{\extern{\mathcal{B}}(\Qsyse)}}}
\newcommand{\BeheSQ}{\ensuremath{{\extern{\mathcal{B}}_S(\Qsyse)}}}
\newcommand{\Behe}{\ensuremath{\extern{\mathcal{B}}}}
\newcommand{\kg}[1]{\ensuremath{\xspace\preceq_{#1}\xspace}}
\newcommand{\hg}[1]{\ensuremath{\xspace\cong_{#1}\xspace}}
\newcommand{\kgl}[2]{\ensuremath{\xspace\preceq_{#1}^{#2}\xspace}}
\newcommand{\hgl}[2]{\ensuremath{\xspace\cong_{#1}^{#2}\xspace}}
\newcommand{\Ds}[2]{\ensuremath{\Pi^{#1}_{#2}(\BeheQ)}}
\newcommand{\ESn}[1]{\ensuremath{\ifthenelse{\isempty{#1}}{\Sigma^+_{S}}{\Sigma^+_{S,#1}}}}
\newcommand{\BehS}[1]{\ensuremath{\ifthenelse{\isempty{#1}}{\Beh_{S}}{\Beh_{S,#1}}}}
\newcommand{\BehSe}[1]{\ensuremath{\ifthenelse{\isempty{#1}}{\extern{\Beh}_{S}}{\extern{\Beh}_{S,#1}}}}
\newcommand{\BehE}[1]{\ensuremath{\ifthenelse{\isempty{#1}}{\Beh_{E}}{\Beh_{E,#1}}}}
\newcommand{\W}{\ensuremath{W}}
\newcommand{\D}{\ensuremath{D}}
\newcommand{\V}{\ensuremath{V}}
\newcommand{\statemap}[3]{\ifthenelse{\isempty{#2#3}}{\psi_{#1}}{\psi_{#1}(#2,#3)}}
\newcommand{\statemapPi}[3]{\ifthenelse{\isempty{#2#3}}{\varphi_{#1}}{\varphi_{#1}(#2,#3)}}
\newcommand{\statemapTi}[3]{\ifthenelse{\isempty{#2#3}}{\psi_{#1}}{\psi_{#1}(#2,#3)}}
\newcommand{\CONCAT}[4]{#1\wedge^{#2}_{#3}#4}
\newcommand{\Xx}[2]{\ensuremath{\chi^{-}_{#1}\ifthenelse{\isempty{#2}}{}{(#2)}}}
\newcommand{\Xxp}[2]{\ensuremath{\chi^{+}_{#1}\ifthenelse{\isempty{#2}}{}{(#2)}}}
\newcommand{\Xxr}[3]{\ensuremath{\chi_{#1}^{#2^-}\ifthenelse{\isempty{#3}}{}{\hspace{-0.1cm}(#3)}}}
\newcommand{\Xxrp}[3]{\ensuremath{\chi_{#1}^{#2^+}\ifthenelse{\isempty{#3}}{}{(#3)}}}
\newcommand{\EnabY}[2]{\ON{H}_{#1}\ifthenelse{\isempty{#2}}{}{(#2)}} 
\newcommand{\EnabYr}[2]{\ON{H}^{-1}_{#1}\ifthenelse{\isempty{#2}}{}{(#2)}} 
\newcommand{\EnabT}[2]{\ON{T}_{#1}\ifthenelse{\isempty{#2}}{}{(#2)}} 
\newcommand{\EnabF}[2]{\ON{F}_{#1}\ifthenelse{\isempty{#2}}{}{(#2)}} 
\newcommand{\EnabTr}[2]{\ON{T}^{-1}_{#1}\ifthenelse{\isempty{#2}}{}{(#2)}} 
\newcommand{\EnabWl}[3]{\ON{E}_{#1}^{#2}\ifthenelse{\isempty{#3}}{}{(#3)}}
\newcommand{\EnabWlr}[3]{\BR{\ON{E}_{#1}^{#2}}^{-1}\hspace{-0.4cm}\ifthenelse{\isempty{#3}}{}{(#3)}}
\newcommand{\EnabWlp}[3]{\ON{E}_{#1}^{#2^+}\ifthenelse{\isempty{#3}}{}{(#3)}}
 \newcommand{\EnabU}[2]{\ON{U}_{#1}(#2)}
 \newcommand{\EnabUY}[2]{\mathcal{A}_{\ueS\times\yeS}^{#1}\ifthenelse{\isempty{#2}}{}{(#2)}}
\newcommand{\R}{\ensuremath{\mathcal{R}}}
\newcommand{\Rn}{\ensuremath{\mathcal{R}^{\lsupm}}}
\newcommand{\Rq}{\ensuremath{\mathcal{R}^{\qsup}}}
\newcommand{\SR}[4]{\ensuremath{\mathfrak{R}_{#1}^{#2}(#3,#4)}}
\newcommand{\lext}[1]{\vphantom{#1}^\diamond\kern-\scriptspace#1}
\newcommand{\signalmap}{\phi}
\newcommand{\TE}{\ensuremath{{T_E}}}
\newcommand{\E}{\ensuremath{\Sigma}}
\newcommand{\Ee}{\ensuremath{\extern{\Sigma}}}
\newcommand{\Ea}{\ensuremath{\abst{\Sigma}}}
\newcommand{\Ep}{\ensuremath{\Sigma^{\signalmap}}}
\newcommand{\Epa}{\ensuremath{\abst{\Sigma}^{\signalmap}}}
\newcommand{\ES}[1]{\ensuremath{\ifthenelse{\isempty{#1}}{\Sigma_{S}}{\Sigma_{S,#1}}}}
\newcommand{\ESe}[1]{\ensuremath{\ifthenelse{\isempty{#1}}{\extern{\Sigma}_{S}}{\extern{\Sigma}_{S,#1}}}}
\newcommand{\EpS}[1]{\ensuremath{\ifthenelse{\isempty{#1}}{\Ep_{S}}{\Sigma^{\signalmap_{#1}}_{S,#1}}}}
\newcommand{\EpSa}[1]{\ensuremath{\ifthenelse{\isempty{#1}}{\Epa_{S}}{\abst{\Sigma}^{\signalmap_{#1}}_{S,#1}}}}
\newcommand{\EE}[1]{\ensuremath{\ifthenelse{\isempty{#1}}{\Sigma_{E}}{\Sigma_{E,#1}}}}
\newcommand{\ESm}[1]{\ensuremath{\ifthenelse{\isempty{#1}}{\Sigma_{\psi}}{\Sigma_{\psi,#1}}}}
\newcommand{\ElaMax}{\ensuremath{\Sigma^{l^\uparrow}}}
\newcommand{\Eal}{\ensuremath{\abst{\Sigma}^{\lsup}}}
\newcommand{\EplMaxS}[1]{\ensuremath{\ifthenelse{\isempty{#1}}{\Sigma^{\phi,l^\uparrow}_{S}}{\Sigma^{\phi_{#1},l^\uparrow}_{S,#1}}}}
\newcommand{\EplMaxSp}[1]{\ensuremath{\ifthenelse{\isempty{#1}}{\overline{\Sigma}^{\phi,l^\uparrow}_{S}}{\overline{\Sigma}^{\phi_{#1},l^\uparrow}_{S,#1}}}}
\newcommand{\EplncMaxS}[1]{\ensuremath{\ifthenelse{\isempty{#1}}{\Sigma^{\phi,\lnc^\uparrow}_{S}}{\Sigma^{\phi_{#1},\lnc^\uparrow}_{S,#1}}}}
\newcommand{\EplsMaxS}[2]{\ensuremath{\ifthenelse{\isempty{#1}}{\Sigma^{\phi,{#2}^\uparrow}_{S}}{\Sigma^{\phi_{#1},{#2}^\uparrow}_{S,#1}}}}
\newcommand{\ElMaxS}[1]{\ensuremath{\ifthenelse{\isempty{#1}}{\Sigma^{l^\Uparrow}_{S}}{\Sigma^{l^\Uparrow}_{S,#1}}}}
\newcommand{\ElaMaxS}[1]{\ensuremath{\ifthenelse{\isempty{#1}}{\Sigma^{l^\uparrow}_{S}}{\Sigma^{l^\uparrow}_{S,#1}}}}
\newcommand{\ElMaxSp}[1]{\ensuremath{\ifthenelse{\isempty{#1}}{\overline{\Sigma}^{l^\uparrow}_{S}}{\overline{\Sigma}^{l^\uparrow}_{S,#1}}}}
\newcommand{\ElncMaxS}[1]{\ensuremath{\ifthenelse{\isempty{#1}}{\Sigma^{\lnc^\uparrow}_{S}}{\Sigma^{\lnc^\uparrow}_{S,#1}}}}
\newcommand{\ElsMaxS}[2]{\ensuremath{\ifthenelse{\isempty{#1}}{\Sigma^{{#2}^\uparrow}_{S}}{\Sigma^{{#2}^\uparrow}_{S,#1}}}}
\newcommand{\ElE}[1]{\ensuremath{\ifthenelse{\isempty{#1}}{\Sigma^{l}_{E}}{\Sigma^{l}_{E,#1}}}}
\newcommand{\ElEp}[1]{\ensuremath{\ifthenelse{\isempty{#1}}{\Sigma^{l}_{E}}{\overline{\Sigma}^{l}_{E,#1}}}}
\newcommand{\ElncE}[1]{\ensuremath{\ifthenelse{\isempty{#1}}{\Sigma^{\lnc}_{E}}{\Sigma^{\lnc}_{E,#1}}}}
\newcommand{\ElMaxE}[1]{\ensuremath{\ifthenelse{\isempty{#1}}{\Sigma^{l^\uparrow}_{E}}{\Sigma^{l^\uparrow}_{E,#1}}}}
\newcommand{\ElMaxEp}[1]{\ensuremath{\ifthenelse{\isempty{#1}}{\overline{\Sigma}^{l^\uparrow}_{E}}{\overline{\Sigma}^{l^\uparrow}_{E,#1}}}}
\newcommand{\ElncMaxE}[1]{\ensuremath{\ifthenelse{\isempty{#1}}{\Sigma^{\lnc^\uparrow}_{E}}{\Sigma^{\lnc^\uparrow}_{E,#1}}}}
\newcommand{\ElsMaxE}[2]{\ensuremath{\ifthenelse{\isempty{#1}}{\Sigma^{{#2}^\uparrow}_{E}}{\Sigma^{{#2}^\uparrow}_{E,#1}}}}
\newcommand{\Behal}{\ensuremath{\abst{\Beh}^{\lsup}}}
\newcommand{\Beha}{\ensuremath{\abst{\Beh}}}
\newcommand{\BehlMaxS}[1]{\ensuremath{\ifthenelse{\isempty{#1}}{\Beh^{l^\Uparrow}_{S}}{\Beh^{l^\Uparrow}_{S,#1}}}}
\newcommand{\BehlaMaxS}[1]{\ensuremath{\ifthenelse{\isempty{#1}}{\Beh^{l^\uparrow}_{S}}{\Beh^{l^\uparrow}_{S,#1}}}}
\newcommand{\BehlMaxSp}[1]{\ensuremath{\ifthenelse{\isempty{#1}}{\overline{\Beh}^{l^\uparrow}_{S}}{\overline{\Beh}^{l^\uparrow}_{S,#1}}}}
\newcommand{\BehlncMaxS}[1]{\ensuremath{\ifthenelse{\isempty{#1}}{\Beh^{\lnc^\uparrow}_{S}}{\Beh^{\lnc^\uparrow}_{S,#1}}}}
\newcommand{\BehlsMaxS}[2]{\ensuremath{\ifthenelse{\isempty{#1}}{\Beh^{{#2}^\uparrow}_{S}}{\Beh^{{#2}^\uparrow}_{S,#1}}}}
\newcommand{\BehlE}[1]{\ensuremath{\ifthenelse{\isempty{#1}}{\Beh^{l}_{E}}{\Beh^{l}_{E,#1}}}}
\newcommand{\BehlncE}[1]{\ensuremath{\ifthenelse{\isempty{#1}}{\Beh^{\lnc}_{E}}{\Beh^{\lnc}_{E,#1}}}}
\newcommand{\BehlMaxE}[1]{\ensuremath{\ifthenelse{\isempty{#1}}{\Beh^{l^\uparrow}_{E}}{\Beh^{l^\uparrow}_{E,#1}}}}
\newcommand{\BehlMaxEp}[1]{\ensuremath{\ifthenelse{\isempty{#1}}{\overline{\Beh}^{l^\uparrow}_{E}}{\overline{\Beh}^{l^\uparrow}_{E,#1}}}}
\newcommand{\BehlncMaxE}[1]{\ensuremath{\ifthenelse{\isempty{#1}}{\Beh^{\lnc^\uparrow}_{E}}{\Beh^{\lnc^\uparrow}_{E,#1}}}}
\newcommand{\BehlsMaxE}[2]{\ensuremath{\ifthenelse{\isempty{#1}}{\Beh^{{#2}^\uparrow}_{E}}{\Beh^{{#2}^\uparrow}_{E,#1}}}}
\newcommand{\BehVlMaxS}[1]{\ensuremath{\ifthenelse{\isempty{#1}}{\projState{\V}{\Beh^{l^\uparrow}_{S}}}{\projState{\V}{\Beh^{l^\uparrow}_{S,#1}}}}}
\newcommand{\BehDlMaxS}[1]{\ensuremath{\ifthenelse{\isempty{#1}}{\projState{\D}{\Beh^{l^\uparrow}_{S}}}{\projState{\D}{\Beh^{l^\uparrow}_{S,#1}}}}}
\newcommand{\EoMaxS}[1]{\ensuremath{\ifthenelse{\isempty{#1}}{\Sigma^{1^\uparrow}_{S}}{\Sigma^{1^\uparrow}_{S,#1}}}}
\newcommand{\projState}[2]{\pi_{#1}(#2)}
\newcommand{\x}{\ensuremath{x}}
\newcommand{\xS}{\ensuremath{X}}
\newcommand{\xSo}[1]{\ensuremath{\xS_{#1 0}}}
\newcommand{\xG}{\ensuremath{\xi}} 
\newcommand{\xe}{\ensuremath{\extern{\x}}}
\newcommand{\xeS}{\ensuremath{\extern{\xS}}}
\newcommand{\xeSo}[1]{\ensuremath{\xeS_{#1 0}}}
\newcommand{\xeG}{\ensuremath{\extern{\xG}}}
\newcommand{\xa}{\ensuremath{\abst{\x}}}
\newcommand{\xaS}{\ensuremath{\abst{\xS}}}
\newcommand{\xaSo}[1]{\ensuremath{\xaS_{#1 0}}}
\newcommand{\xaG}{\ensuremath{\abst{\xG}}}
\newcommand{\xaqlS}{\ensuremath{\abst{\xS}^{l\qsup}}}
\newcommand{\xaqlSo}[1]{\ensuremath{\xaS_{#1 0}^{l\qsup}}}
\newcommand{\xalS}{\ensuremath{\abst{\xS}^{\lsupm}}}
\newcommand{\xalSo}[1]{\ensuremath{\xaS_{#1 0}^{\lsupm}}}
\renewcommand{\u}{\ensuremath{u}}
\newcommand{\uS}{\ensuremath{U}}
\newcommand{\uG}{\ensuremath{\mu}} 
\newcommand{\ue}{\ensuremath{\extern{\u}}}
\newcommand{\ueS}{\ensuremath{\extern{\uS}}}
\newcommand{\ueG}{\ensuremath{\extern{\uG}}}
\newcommand{\uaG}{\ensuremath{\abst{\uG}}}
\newcommand{\uQ}{\ensuremath{ \mathfrak{r}}}
\renewcommand{\v}{\ensuremath{v}}
\newcommand{\y}{\ensuremath{y}}
\newcommand{\yS}{\ensuremath{Y}}
\newcommand{\yG}{\ensuremath{\nu}} 
\newcommand{\ye}{\ensuremath{\extern{\y}}}
\newcommand{\yeS}{\ensuremath{\extern{\yS}}}
\newcommand{\yeG}{\ensuremath{\extern{\yG}}}
\newcommand{\ya}{\ensuremath{\abst{\y}}}
\newcommand{\yaS}{\ensuremath{\abst{\yS}}}
\newcommand{\yaG}{\ensuremath{\abst{\yG}}}
\newcommand{\yQ}{\ensuremath{ \mathfrak{d}}}
\newcommand{\yqlS}{\ensuremath{\yaS^{l}}}
\newcommand{\w}{\ensuremath{w}}
\newcommand{\wS}{\ensuremath{W}}
\newcommand{\wG}{\ensuremath{\omega}} 
\newcommand{\we}{\ensuremath{\extern{\w}}}
\newcommand{\weS}{\ensuremath{\extern{\wS}}}
\newcommand{\weG}{\ensuremath{\extern{\wG}}}
\newcommand{\waG}{\ensuremath{\abst{\wG}}}
\newcommand{\z}{\ensuremath{z}}
\newcommand{\zS}{\ensuremath{Z}}
\newcommand{\zG}{\ensuremath{\zeta}} 
\newcommand{\zeS}{\ensuremath{\extern{\zS}}}
\newcommand{\zeG}{\ensuremath{\extern{\zG}}}
\newcommand{\tre}{\ensuremath{\extern{\delta}}}
\newcommand{\tra}{\ensuremath{\abst{\delta}}}
\newcommand{\traql}{\ensuremath{\abst{\delta}^{l\qsup}}}
\newcommand{\traqbl}{\ensuremath{\abst{\delta}^{l\qsupb}}}
\newcommand{\tral}{\ensuremath{\abst{\delta}^{\lsupm}}}
\newcommand{\QBA}{QBA\xspace}
\newcommand{\SlA}{S$l$CA\xspace}
\newcommand{\SAlA}{SA$l$CA\xspace}
\newcommand{\timesps}{\hspace{-0.1cm}\times\hspace{-0.1cm}}
\newcommand{\inps}{\hspace{-0.1cm}\in\hspace{-0.1cm}}
\newcommand{\eqps}{\hspace{-0.1cm}=\hspace{-0.1cm}}
\newcommand{\plps}{\hspace{-0.05cm}+\hspace{-0.05cm}}
\newcommand{\mips}{\hspace{-0.05cm}-\hspace{-0.05cm}}
\renewcommand{\b}{\textcolor{blue}}
\begin{document}
\title{Comparing {Asynchronous $l$-Complete Approximations} and {Quotient Based Abstractions}}

\author{Anne-Kathrin Schmuck, Paulo Tabuada, J\"org Raisch
\thanks{A.-K. Schmuck is with the Max Planck Institute for Software Systems, Kaiserslautern, Germany. {\tt\small akschmuck@mpi-sws.org}}
\thanks{P. Tabuada is with the UCLA Electrical Engineering Department, Los Angeles, USA.
{\tt\small tabuada@ee.ucla.edu}}
\thanks{J. Raisch is with the Control Systems Group, Technical University of Berlin, Germany and the Max Planck Institute
for Dynamics of Complex Technical Systems, Magdeburg, Germany. {\tt\small raisch@control.tu-berlin.de}}
}

\maketitle

\begin{abstract}
This paper is concerned with a detailed comparison of two different abstraction techniques for the construction of finite state symbolic models for controller synthesis of hybrid systems. Namely, we compare quotient based abstractions (\QBA), e.g., described in \cite[Part~II]{TabuadaBook} with different realizations of strongest (asynchronous) $l$-complete approximations (\SAlA) from \cite{MoorRaisch1999,SchmuckRaisch2014_ControlLetters}.
Even though the idea behind their construction is very similar, we show that they are generally incomparable both in terms of behavioral inclusion and similarity relations.
We therefore derive necessary and sufficient conditions 
for \QBA to coincide with particular realizations of \SAlA.
Depending on the original system, either \QBA or \SAlA can be a tighter abstraction.
\end{abstract}

\begin{IEEEkeywords}
Finite State Abstraction, Simulation Relations, Behavioral Systems Theory, Realizations
\end{IEEEkeywords}

\section{Introduction}

The increasing interconnection of physical components and digital hardware in today's engineering systems causes challenges that have been investigated by both the control and the computer science community.
Although some efforts have been made to bring these parallel advances together, there are still considerable gaps between concepts in both fields addressing very similar questions. 
In this paper, we provide a step towards connecting two methods for the construction of finite state symbolic abstractions
inspired by these two communities.

Systems where digital hardware is connected to physical components usually lead to \emph{hybrid system models}. Control synthesis for hybrid systems is a difficult problem, and one common approach to this problem is, first, to simplify a given hybrid control problem by generating a symbolic abstraction of the system to be controlled and, second, to design a symbolic controller using existing synthesis techniques. 
%
%
This controller synthesis approach is usually used in two different settings.

In the first setting a system should obey a specification given in terms of a linear temporal logic (LTL) or computational tree logic (CTL) formula over a finite set of symbols, e.g., \enquote{always eventually visit region A}, which can only be enforced by symbolic controller synthesis techniques.
Inspired by the computer science community, this line of research applies techniques developed for verification and synthesis of software processes, as e.g. in \cite{AlurHenzingerLaffarrierePappas2000,TabuadaPappas2003b,Tabuada2007} and summarized in \cite[Part~II]{TabuadaBook}. In that work a symbolic abstraction is constructed by 
partitioning the original state space into a finite number of cells, such that this partition allows for a bisimulation relation between the original state space model and its abstraction. The set of equivalence classes of this partition is used to define the outputs as well as the states of the constructed abstraction. 
This abstraction method is often referred to as \emph{quotient based abstraction} (\QBA), a terminology we adopt in this paper. 

Contrary to this viewpoint, another class of abstractions is tailored to handle systems where the available interface for control is symbolic. Hence, the construction of a symbolic abstraction is motivated by limited sensing (e.g., a sensor that can only detect threshold crossings) and/or limited actuation (e.g., a valve that can only be fully opened or closed). 
This implies that the set of input and output symbols is predefined and cannot be used to adjust the abstraction accuracy. The \emph{Strongest $l$-complete approximation} (\SlA) \cite{MoorRaisch1999} is one concept explicitly addressing this issue, which was recently generalized to the strongest \emph{asynchronous} $l$-complete approximation (\SAlA) \cite{SchmuckRaisch2014_ControlLetters}. Here, the  accuracy of the abstraction is adjusted by changing the number $l$ of past input and output symbols considered in the construction of the abstract state space.

The idea of using $l$-long strings of symbols as abstract states was recently revisited in \cite{LeCorroncGirardGoessler2013,ZamaniTkachevAbate2014,TarrafEspinosa2011}. 
Interestingly, the abstractions in \cite{LeCorroncGirardGoessler2013,ZamaniTkachevAbate2014} are based on (approximated versions of) \QBA but employ ideas from \SlA without assuming a symbolic controller interface. In  \cite{LeCorroncGirardGoessler2013} and \cite{ZamaniTkachevAbate2014} $l$-long sequences of modes of, respectively, incrementally stable switched systems and stochastic systems are used as abstract states rather than input and output symbols. 

In this paper we formally compare \QBA and \SAlA to point out their conceptual differences which are mostly due to the different scenarios they are tailored to. This, of course, also has an influence on the construction of symbolic controllers based on those abstractions. While we do not provide a formal comparison of the controller synthesis step, an insightful discussion of this step in both scenarios is given in \REFsec{sec:Compare:Control}. 

Apart from this additional discussion, this paper furthermore extends the results in \cite{SchmuckTabuadaRaisch_2015} by providing proofs for all results and several detailed examples illustrating the paper's contents. 

\section{Preliminaries}\label{sec:prelim}
In this section, we first review necessary notation from behavioral systems theory (e.g., \cite{Willems1991}) in \REFsec{sec:notation} and derive a model of the original system in \REFsec{sec:Qsyse}. To compare the resulting \QBA and \SAlA of this system we introduce the notion of simulation relations in \REFsec{sec:2:SimulationRelations}.

\subsection{Notation}\label{sec:notation}

In the behavioral framework, a \textit{dynamical system} is given by ${\E=\Tuple{\T,\wS,\Beh}}$, consisting of the time axis ${\T}$, the signal space $\wS$, and the behavior of the system, ${\Beh\subseteq\BR{\wS}^\T}$, where
$\BR{\wS}^\T\deff\SetComp{\wG}{\wG:\T\fun\wS}$ is the set of all 
signals evolving on $\T$ and
taking values in $\wS$.
In this paper we only consider dynamical systems evolving on the discrete time axis $\T=\Nbn$. However, to simplify notation, we extend the time axis of a behavior $\Beh\subseteq \BR{\wS}^{\Nbn}$ from $\Nbn$ to $\Zb$ by pre-appending each $\wG\in\Beh$ with the special symbol $\diamond$, i.e., $\wG=\w_0\w_1\w_2\hdots\in\Beh$ is transformed to $\hdots\diamond\diamond\diamond\w_0\w_1\w_2\hdots\subseteq\UNION*{\wS}{\Set{\diamond}}^\Zb$. 
Hence, the notation ${\E=\Tuple{\Nbn,\wS,\Beh}}$ refers to a system with behavior $\Beh\subseteq\UNION*{\wS}{\Set{\diamond}}^\Zb$ s.t.\footnote{Throughout this paper we use the notation "$\AllQ{}{}$", meaning that all statements after the dot hold for all variables quantified before the dot. "$\ExQ{}{}$" is interpreted analogously. } $\AllQ{\weG\in\Beh,k<0}{\weG(k)=\diamond}$.\\
%
%
For any $l\in\Nbn$, $\BR{\W}^l\deff\SetComp{\omega}{\omega:[0,l-1]\fun\W}$ denotes the set of strings $\wG$ with length $l$
 and elements in $\W$. Now let $\Interval=[t_1,t_2]$ be a bounded interval on $\Zb$ with length $\length{\Interval}=t_2-t_1+1$. Then
 $\wG|_{\Interval}=\wG(t_1)\hdots\wG(t_2)\in\BR{\wS}^{\length{\Interval}}$ is the result of  \textit{restricting} the map $\wG:\Zb\fun\wS$ to the domain $\Interval$ and disregarding absolute time information, i.e., $\wG|_{\Interval}\in\wS^{\length{\Interval}}$ instead of $\wG|_{\Interval}\in\wS^\Interval$. Similarly, $\Beh|_{\Interval}$ results from restricting all trajectories in $\Beh$ to $\Interval$ and disregarding absolute time information. 
For $t_1< t_2$ we define $\wG\ll{t_2,t_1}:=\lambda$, where $\lambda$ denotes the \textit{empty string}. \\
Now let $\wS,V$ and $\tilde{V}$ be sets. Then the \textit{projection} of the set $\wS$ and the symbol $\w\in\wS$ to $ V$ is defined by
\begin{equation*}
 \projState{ V}{\wS}\deff\TriCases{ V\hspace{-0.2cm}}{\wS\eqps V\timesps\tilde{ V}}{\wS\hspace{-0.2cm}}{\wS\eqps V}{\emptyset\hspace{-0.2cm}}{\text{else}}\quad
 \projState{ V}{\w}\deff\TriCases{\v\hspace{-0.2cm}}{\w\eqps\Tuple{\v,\tilde{\v}}
 }{\w\hspace{-0.2cm}}{\wS\eqps V}{\lambda\hspace{-0.2cm}}{\text{else},}
\end{equation*}
respectively. With this, the projection of a signal $\wG\in\wS^{\T}$ to $ V$ is given by
 $\projState{ V}{\wG}\deff\SetComp{ v\in V^{\T}}{\AllQ{t\in\T}{ v(t)=\projState{ V}{\wG(t)}}}$
and $\projState{ V}{\Beh}$ denotes the projection of all signals in the behavior $\Beh$ to $ V$. 
%
The \textit{concatenation} of two strings $\wG_1\in\BR{\wS}^{t_1},~\wG_2\in\BR{\wS}^{t_2}, t_1,t_2\in\Nbn$ is denoted by ${\wG_1\sconc \wG_2}$ (meaning that $\wG_2$ is appended to $\wG_1$).
\subsection{Modelling the Original System}\label{sec:Qsyse}

The common starting point of methods generating finite state abstractions of a (possibly continuous) dynamical system is the definition of a finite external signal space $\weS$. 
In the context of \SAlA, $\weS=\ueS\times\yeS$ is assumed to be predefined by the system to be abstracted, where $\ueS$ is a finite set of control symbols and $\yeS$ a finite set of measurement symbols.
In contrast, the work on \QBA usually assumes full sensing and actuating capabilities but defines the finite output set $\yeS$ based on a specification that the subsequently to be designed controller should guarantee. Therefore, the choice of $\weS=\yeS$ is already part of the construction of \QBA. 
In both cases, prior to the abstraction process, a state model of the system to be abstracted is required. 

\begin{definition}\label{def:EL}
 A \emph{state machine} is a tuple $\Qsyse=(\xeS,\ueS,\yeS,\tre,\xeSo{})$, where $\xeS$ is the set of states, $\xeSo{}$ is the set of initial states, $\ueS$ is the set of inputs, $\yeS$ is the set of outputs, and $\tre\subseteq\xeS\times\ueS\times\yeS\times\xeS$ is a next state relation.\\
 The set of \emph{admissible outputs} of a state $\xe\in\xeS$ is defined by 
\begin{subequations}
 \begin{align}
  \EnabY{\tr}{\x}&\deff\SetCompX{\ye\inps\yeS}{\ExQ{\ue\inps\ueS,\xe'\inps\xeS}{\Tuple{\xe,\ue,\ye,\xe'}\inps\tre}} \label{equ:EnabY}
 \end{align}
and $\Qsyse$ is said to be 
\emph{output deterministic} %
if 
\begin{equation}\label{equ:OutputDeterministic}
 \AllQ{\xe\in\xeS}{\propImp{\EnabY{\tr}{\x}\neq\emptyset}{\length{\EnabY{\tr}{\x}}=1}}.
\end{equation}
\end{subequations}
Furthermore,
\begin{subequations}\label{equ:TransStruct:all}
 \begin{align}
  \EnabF{\tr}{\x,\u}&\deff\SetCompX{\x'\inps\xS}{\ExQ{\y\inps\EnabY{\tr}{\x}}{\Tuple{\x,\u,\y,\x'}\inps\tr}},\\
  \EnabT{\tr}{\x}&\deff\SetCompX{\x'\inps\xS}{\ExQ{\u\in\uS}{\xe'\in\EnabF{\tr}{\x,\u}}}\label{equ:EnabT},
 \end{align}
 \end{subequations}
 are the sets of \emph{post-states} of a state-input pair $\Tuple{\xe,\ue}$ and a state $\xe$, respectively.
\end{definition}

%
%
If the state evolution and the output generation of a transition $\Tuple{\xe,\ue,\ye,\xe'}\inps\tre$ can be separated in $\Qsyse$ s.t.
\begin{equation}\label{equ:TransStruct}
\AllQ{\xe\inps\xeS,\ue\in\ueS}{\propAequ{\Tuple{\xe,\ue,\ye,\xe'}\inps\tre}{
\begin{propConjA}
 \xe'\in\EnabF{\tr}{\x,\u}\\
 \y\in\EnabY{\tr}{\x}
\end{propConjA}}},
\end{equation}
a state machine can be equivalently defined by the six-tuple $\Tuple{\xeS,\xeSo{},\ueS,\EnabF{\tr}{},\yeS,\EnabY{\tr}{}}$, which usually defines a transition system.\\
%
%
Using a state machine $\Qsyse$ to model the original system, its \emph{full behavior}, i.e., the set of infinite input, state, and output sequences compatible with its dynamics, is defined as follows.

\begin{definition}
 Let $\Qsyse$ be a state machine as in \REFdef{def:EL}. Then the \emph{full behavior of $\Qsys$} is defined by
\begin{equation}\label{equ:Behtr}
 \Behf(\Qsyse):=\SetCompX{\Tuple{\ueG,\yeG,\xeG}\in(\ueS\times\yeS\times\xeS)^{\Nbn}}{
\begin{propConjA}
\xeG(0)\inps\xeSo{}\\
 \AllQ{k\inps\Nbn}{\Tuple{\xeG(k),\ueG(k),\yeG(k),\xeG(k\plps1)}\inps\tre}
\end{propConjA}\hspace{-0.1cm}}\hspace{-0.1cm}.
\end{equation}
Furthermore, if
\begin{subequations}\label{equ:ReachLive}
 \begin{align}
&\AllQ{\x\in\xSo{}}{\ExQ{\Tuple{\uG,\yG,\xG}\in\Behf(\Qsyse)}{\xG(0)=\x}}~\text{and}\label{equ:ReachLive:0}\\
&\AllQ{\x\in\xS}{\ExQ{\Tuple{\uG,\yG,\xG}\in\Behf(\Qsyse),k\in\Nbn}{\xG(k)=\x}}\label{equ:ReachLive:k}
\end{align}
\end{subequations}
$\Qsyse$ is called \emph{live and reachable}.
\end{definition}

Whenever $\Qsyse$ is live and reachable, the dynamics of $\Qsyse$ can be equivalently described by its full behavior. As \SAlA are typically constructed from $\Behf(\Qsyse)$ instead of $\Qsyse$ we restrict attention to state machines that are live and reachable. Furthermore, since \QBA are usually constructed from transition systems which coincide with state machines if \eqref{equ:TransStruct} holds, we consider the following setup in this paper.\\[0.25cm]
%
\textit{
 Given a dynamical system $\Sys$, we assume that its external dynamics can be modeled by a state machine
\begin{subequations}\label{equ:Prelim}
 \begin{equation}\label{equ:Prelim:Q}
 \Qsyse=\Tuple{\xeS,\ueS,\yeS,\tre,\xeSo{}},~\SUCHTHAT~\text{\eqref{equ:TransStruct} and \eqref{equ:ReachLive} holds}
\end{equation}
and the external signal space
\begin{equation}\label{equ:Prelim:W}
\weS\in\Set{\ueS\times\yeS,\yeS}~\text{is finite.}
\end{equation}
\end{subequations}}
In the remainder of this paper we introduce two methods to construct a finite state abstraction of $\Qsyse$ in \eqref{equ:Prelim}, namely \emph{asynchronous $l$-complete approximations} (\SAlA) (from \cite{SchmuckRaisch2014_ControlLetters}) in \REFsec{sec:3:SAlA} and \emph{quotient based abstractions} (\QBA) (from \cite[part II]{TabuadaBook}) in \REFsec{sec:3:QuotientbasedAbstractions}. To provide a formal comparison of the resulting models in \REFsec{sec:4:Comparison}, we first introduce the notion of simulation relations.


\subsection{Simulation Relations}\label{sec:2:SimulationRelations}
Simulation relations are commonly used to compare system models in a step-by-step fashion. The idea is to investigate, if there exists a relation between the state spaces of two systems which ensures that trajectories 
of the first can be mimicked by the second system, such that only related states are visited and equivalent external symbols are generated by both systems.
%
To incorporate all possible choices of external signal spaces $\weS$ as in \eqref{equ:Prelim:W}, we slightly modify the usual definition of simulation relations for transition systems (e.g. \cite[Def. 4.7]{TabuadaBook}) as follows. 

\begin{definition}\label{def:SimRel_EL}
 Let $\Qsys_i=(\xS_i,\uS_i,\yS_i,\tr_i,\xSo{i}),~i\in\Set{1,2}$, be state machines and $\wS$ a set s.t. $\projState{\wS}{\uS_1\times\yS_1}=\projState{\wS}{\uS_2\times\yS_2}\neq\emptyset$.
 Then $\R\subseteq\xS_1\times\xS_2$ s.t.
 \begin{subequations}\label{equ:SimRel_EL} \allowdisplaybreaks
 \begin{align}
&\AllQ{\x_1\inps\xSo{1}}{\ExQ*{\x_2\inps\xSo{2}}{\Tuple{\x_1,\x_2}\inps\R}}\quad\text{and}\label{equ:SimRel_EL:a}\\
&\AllQSplit{\Tuple{\x_1,\x_2}\in\R, \u_1\in\uS_1,\y_1\in\yS_1,\xe_1'\in\xS_1}{
 \propImp{\Tuple{\x_1,\u_1,\y_1,\x_1'}\in\tr_1}{
 \ExQSplit{\u_2\in\uS_2,\y_2\in\yS_2,\x_2'\inps\xS_2}{
\begin{propConjA}
\Tuple{\x_2,\u_2,\y_2,\x_2'}\inps\tr_2\\
\Tuple{\x_1',\x_2'}\in\R\\
\projState{\wS}{\u_1,\y_1}=\projState{\wS}{\u_2,\y_2}
\end{propConjA}}}}\label{equ:SimRel_EL:b}
\end{align}
\end{subequations}
is a simulation relation from $\Qsys_1$ to $\Qsys_2$ w.r.t. $\wS$, denoted by $\R\in\SR{\wS}{}{\Qsys_1}{\Qsys_2}$.
\end{definition}

Using \REFdef{def:SimRel_EL} we can formally define an ordering on the set of state machines in the usual way.

\begin{definition}\label{def:SimRel_Symbols}
 Given the premises of \REFdef{def:SimRel_EL}
 , a state machine $\Qsys_1$ is  simulated by $\Qsys_2$ w.r.t. $\wS$, denoted by $\Qsys_1\kgl{\wS}{}\Qsys_2$, if there exists a relation $\R\in\SR{\wS}{}{\Qsys_1}{\Qsys_2}$. Furthermore, $\Qsys_1$ and $\Qsys_2$ are  bisimilar w.r.t. $\wS$, denoted by $\Qsys_1\hgl{\wS}{}\Qsys_2$,  if there exists a relation $\R\in\SR{\wS}{}{\Qsys_1}{\Qsys_2}$ also satisfying\footnote{As usual, $\R^{-1}:=\SetComp{\Tuple{\xe_2,\xe_1}}{\Tuple{\xe_1,\xe_2}\in\R}$.} $\R^{-1}\in\SR{\wS}{}{\Qsys_2}{\Qsys_1}$.
\end{definition}

\section{Strongest Asynchronous $l$-Complete Approximations (\SAlA)}\label{sec:3:SAlA}

%
The idea of \SAlA is to exactly mimic the external behavior of $\Qsyse$ in \eqref{equ:Prelim} over finite time intervals of length $l+1$. 
We therefore consider the behavioral system $\Ee=\Tuple{\Nbn,\weS,\BeheQ}$, where $\BeheQ$ is the extension of $\projState{\weS}{\Behf(\Qsyse)}$ to $\Zb$ as discussed in \REFsec{sec:prelim}. 
All finite strings of external symbols of length $l$ which are consistent with the dynamics of $\Qsyse$ are given by
\begin{align}
&\Pi_{l}(\BeheQ):=\bigcup_{k\in\Nbn}\BeheQ\ll{k-l+1,k}.\label{equ:Ds}
\end{align}
Now consider the following gedankenexperiment: assume playing a sophisticated domino game where $\Pi_{l+1}(\BeheQ)$ is the set of dominos. Pick the first domino to be $\BeheQ\ll{-l,0}$ (i.e., a domino with only diamonds except for the last symbol) and append any domino from the set $\Pi_{l+1}(\BeheQ)$ if the last $l$ symbols of the first domino are the same as the first $l$ symbols of the second domino (see Figure~\ref{fig:DominoGame} (left) for an example).
Playing the domino game arbitrarily long and with all possible initial conditions and domino combinations results in the largest, in the sense of set inclusion, behavior $\Behal$ satisfying 
\begin{subequations}\label{equ:BehalW}
\begin{align}
 &\Behal\ll{-l,0}=\BeheQ\ll{-l,0}~\text{and}~\label{equ:BehalW:a}\\
 &\Pi_{l+1}(\Behal)=\Pi_{l+1}(\BeheQ), \label{equ:BehalW:b}
 \end{align}
\end{subequations}
%
%
%
defining the behavioral system ${\Eal=\Tuple{\Nbn,\weS,\Behal}}$.
%
%
%
Observe that the smaller $l$, the less information in the domino game is used, which generates more freedom in constructing signals, implying $\Behal\supseteq\allowbreak\abst{\Beh{}}^{l+1}\supseteq\allowbreak\BeheQ$ for all $l\in\Nbn$. 
This motivates the use of $\Behal$ as an over-approximation of the behavior $\BeheQ$.
Obviously, equality ${\Beha^r=\BeheQ}$ holds for all $r\geq l$ if $\BeheQ$ is itself the largest behavior satisfying \eqref{equ:BehalW}. In \cite{SchmuckRaisch2014_ControlLetters}, a system $\E=\Tuple{\Nbn,\weS,\BeheQ}$ for which the latter is true was called \emph{asynchronously $l$-complete} which inspired the name of \SAlA. 
Following \cite{SchmuckRaisch2014_ControlLetters}, $\Eal$ constructed in the outlined domino game is the unique \SAlA of $\Ee=\Tuple{\Nbn,\weS,\BeheQ}$.  However, we are usually interested in a state machine \emph{realizing} its step by step evolution.

\begin{figure}
\begin{center}
\begin{tikzpicture}[auto, node distance=0.5cm,scale=1]

  \tikzstyle{myblock} = [draw,rectangle, text centered, minimum height=0.5cm, minimum width=1.4cm,thick]
 \node[myblock] (b1) at (0,0) {$\diamond~~\diamond~~a$};
 \node[myblock,below of=b1,xshift=0.5cm,node distance=0.7cm] (b2) {$\diamond~~a~~b$};
 \node[myblock,below of=b2,xshift=0.5cm,node distance=0.7cm] (b3) {$a~~b~~a$};
  \node[myblock,below of=b3,xshift=0.5cm,node distance=0.7cm] (b4) {$b~~a~~a$};
   \node[below of=b4,xshift=0.4cm,node distance=0.5cm] (b5) {$\hdots$}; 
 \draw [draw,->,ultra thick] (0.6,-3cm) -- ++ (2cm,0);
\foreach \x in {0.6,1.1,1.55,2}
\draw (\x cm,-2.9cm) -- (\x cm,-3.1cm);
\draw (0.6,-3.3cm) node {$ 0 $};
\draw (1.55,-3.3cm) node {$ 2 $};
\draw (2.9,-3.3cm) node {$\Nbn$};

  \tikzstyle{myblock} = [draw,rectangle, text centered, minimum height=0.5cm, minimum width=0.5cm]
   \tikzstyle{myblocks} = [draw,rectangle, text centered, minimum height=0.5cm, minimum width=0.5cm,postaction={pattern color=black!70,pattern=north west lines}] 
   

   \def\n{4} \def\c{1.3cm}
   
  \node[myblock] (b1) at (\n,-\c) {};
  \node[myblocks, right of=b1] (b2)  {};
   \node[myblocks, right of=b2] (b3) {};
   \node[myblocks, right of=b3] (b4) {};
   \node[myblocks, right of=b4] (b5) {}; 
   \node[myblocks,below of=b2,node distance=0.7cm] (a1) {};
  \node[myblocks, right of=a1] (a2) {};
   \node[myblocks, right of=a2] (a3) {};
   \node[myblocks, right of=a3] (a4) {};
   \node[myblock, right of=a4] (a5) {$\w$};   
   \node[draw,rectangle, text centered, minimum height=0.5cm, minimum width=2.5cm,ultra thick] () at (b3) {};
    \node[draw,rectangle, text centered, minimum height=0.5cm, minimum width=2.5cm,ultra thick] () at (a3) {};  
   \draw [decorate,decoration={brace,amplitude=6pt},thick] (b2.north west) -- (b5.north east) node [black,midway,yshift=0.1cm] (bc){$\xa$};
   \draw [decorate,decoration={brace,amplitude=6pt},thick] (a5.south east) -- (a2.south west) node [black,midway,yshift=-0.1cm] (bc){$\xa'$};
   \draw [draw,-,ultra thick,dotted] (\n-0.4,-1.7cm-\c) -- ++ (\n+0.3cm,0);
 \draw [draw,->,ultra thick] (\n-0.1,-1.7cm-\c) -- ++ (\n+3cm,0);
 \def\d{0.5}
\foreach \i in {0,...,5}
\draw (\n+\i*\d,-1.6cm-\c) -- (\n+\i*\d,-1.8cm-\c);
\draw (\n+0.5,-2cm-\c) node {$t' $};
\draw (\n+2.5,-2cm-\c) node {$ t$};
\draw (\n+3.3,-2cm-\c) node {$\Nbn$};

%

  \end{tikzpicture}
  \end{center}
  \caption{Example of a domino game for $l=2$ (left) and an illustration of the usual choice $\xaS^l$ in \REFprop{prop:SforLcomplete} for $t>l=4$ with $t'=t-l$ (right).}\label{fig:DominoGame}
\end{figure}
\begin{definition}\label{def:SAlA_Realization}
 Given \eqref{equ:Prelim} and \eqref{equ:BehalW}, the dynamical system ${\Eal=\Tuple{\Nbn,\weS,\Behal}}$ is the \SAlA of $\Ee=\Tuple{\Nbn,\weS,\BeheQ}$. A state machine $\Qsysa$ is a realization of $\Eal$ 
 iff\footnote{As before,  $\Behe(\Qsysa)$ denotes the extension of $\projState{\weS}{\Behf(\Qsysa)}$ to $\Zb$.} $\Behal=\Beh(\Qsysa)$.
\end{definition}
%
In the work on \SlA and \SAlA the state space $\xaS$ to construct the realization $\Qsysa$ of the abstraction $\Eal$ is usually chosen such that the state represents the \enquote{recent past} of length $l$ of the external signal. 
Recalling the gedankenexperiment, this choice of $\xaS$ is motivated by the fact that the  next feasible domino of length $l+1$ is determined by the last $l$ symbols of the previous domino (see \REFfig{fig:DominoGame} (right) for an illustration).
Using this state space, the standard state machine realization of \SAlA, denoted by $\Qsysa^l$ in this paper, is defined as follows.

\begin{proposition}[\cite{SchmuckRaisch2014_ControlLetters}, Thm.4]\label{prop:SforLcomplete}
Let $\Eal=\Tuple{\Nbn,\weS,\Behal}$ be the \SAlA of $\Ee$ and define
\begin{subequations}\label{equ:Qsysal_old}
\begin{align}\allowdisplaybreaks
 \xaS^l:=&\Set{\diamond}^l\cup\Pi_l(\Behal),\\
 \xaSo{}^l:=&\Set{\diamond}^l,~\text{and}\\
 \tra^l:=&\SetCompX{\Tuple{\xa,\we,\BR{\xa\sconc\we}\ll{1,l}}}{\xa\sconc\we\in\Pi_{l+1}(\Behal)}.
\end{align}
\end{subequations}
Then $\Eal$ is realized by $\Qsysa^l=\Tuple{\xaS^l,\weS,\tra^l,\xaSo{}^l}$.
\end{proposition}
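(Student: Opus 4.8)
The plan is to unfold \REFdef{def:SAlA_Realization} and prove the behavioral identity $\Behal=\Behe(\Qsysa^l)$ by double inclusion. The guiding picture, directly from the domino game, is that the state $\xaG(k)$ reached after $k$ steps of $\Qsysa^l$ must encode exactly the length-$l$ window $\weG\ll{k-l,k-1}$ of the generated signal, so that $\tra^l$ acts as a shift register dropping the oldest symbol and appending the new one. Throughout I use that every $\weG\in\BeheQ$, and hence every signal emitted by $\Qsysa^l$, satisfies $\weG(k)=\diamond$ for $k<0$ and $\weG(k)\in\weS$ for $k\geq 0$, which holds because $\BeheQ$ is obtained by projecting $\Behf(\Qsyse)$ to $\weS$ and extending by diamonds.

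For $\Behal\subseteq\Behe(\Qsysa^l)$ I would fix $\weG\in\Behal$ and propose the state trajectory $\xaG(k):=\weG\ll{k-l,k-1}$, $k\in\Nbn$. The initial condition holds since $\xaG(0)=\weG\ll{-l,-1}=\diamond^l$, the unique state in $\xaSo{}^l$, as $\weG$ is diamond-valued on negative times. For $k\geq 1$ the window $\xaG(k)=\weG\ll{k-l,k-1}$ is the length-$l$ window of $\weG$ ending at time $k-1$, hence $\xaG(k)\in\Pi_l(\Behal)\subseteq\xaS^l$. For the transitions one computes $\xaG(k)\sconc\weG(k)=\weG\ll{k-l,k}\in\Pi_{l+1}(\Behal)$, an $(l+1)$-window of $\weG\in\Behal$, and $\BR{\xaG(k)\sconc\weG(k)}\ll{1,l}=\weG\ll{k-l+1,k}=\xaG(k+1)$; therefore $\Tuple{\xaG(k),\weG(k),\xaG(k+1)}\in\tra^l$. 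Thus $\Tuple{\weG,\xaG}$ lies in the full behavior of $\Qsysa^l$, and projecting to $\weS$ and re-extending by diamonds recovers $\weG$, so $\weG\in\Behe(\Qsysa^l)$.

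For the converse $\Behe(\Qsysa^l)\subseteq\Behal$ I would take $\weG\in\Behe(\Qsysa^l)$ with an underlying run $\xaG(0)=\diamond^l$ and $\Tuple{\xaG(k),\weG(k),\xaG(k+1)}\in\tra^l$ for all $k$. An induction on $k$ (base case $k=0$ being the initial condition), using $\xaG(k+1)=\BR{\xaG(k)\sconc\weG(k)}\ll{1,l}$ from the definition of $\tra^l$, shows $\xaG(k)=\weG\ll{k-l,k-1}$. The defining condition of $\tra^l$ then gives $\weG\ll{k-l,k}=\xaG(k)\sconc\weG(k)\in\Pi_{l+1}(\Behal)$ for every $k\in\Nbn$, i.e. $\Pi_{l+1}(\Set{\weG})\subseteq\Pi_{l+1}(\Behal)$. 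Evaluating at $k=0$ yields the initial window $\weG\ll{-l,0}\in\Pi_{l+1}(\Behal)=\Pi_{l+1}(\BeheQ)$; since its first $l$ symbols are diamonds and a diamond-extended signal is diamond-valued only on negative times, the only time index witnessing such a window is $0$, whence $\weG\ll{-l,0}\in\BeheQ\ll{-l,0}$. Consequently $\Behal\cup\Set{\weG}$ still satisfies both \eqref{equ:BehalW:a} and \eqref{equ:BehalW:b}, so maximality of $\Behal$ forces $\weG\in\Behal$.

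The two inclusions establish $\Behal=\Behe(\Qsysa^l)$, i.e. $\Qsysa^l$ realizes $\Eal$. I expect the main obstacle to be the closing step of the converse inclusion: keeping the half-open index ranges straight (reconciling the length-$l$ states $\ll{k-l,k-1}$ with the length-$(l+1)$ dominoes $\ll{k-l,k}$ and the shift $\ll{1,l}$) and, above all, the maximality argument, which hinges on the auxiliary observation that any domino in $\Pi_{l+1}(\BeheQ)$ beginning with $\diamond^l$ can only be an initial window of $\BeheQ$. The forward inclusion is by comparison a routine check that the shift-register trajectory is admissible.
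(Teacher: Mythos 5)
Your proof is correct. The paper imports this proposition from \cite{SchmuckRaisch2014_ControlLetters} without reproving it, but its own proof of the generalization (\REFthm{thm:behequ}, specialized to $m=0$ via \REFthm{thm:Qsysalo_equ}) runs on essentially the same double-inclusion, shift-register argument — including your closing observation that, thanks to the diamond prefix, condition \eqref{equ:BehalW:a} for the candidate signal reduces to condition \eqref{equ:BehalW:b} evaluated at $k=0$, after which maximality of $\Behal$ finishes the converse inclusion.
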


Summarizing the abstraction procedure outlined above, constructing the finite state abstraction $\Qsysa^l$ in \REFprop{prop:SforLcomplete} using \SAlA only requires knowledge about the set $\Pi_{l+1}(\BeheQ)$.
However, if $\Qsyse$ is available, we can construct $\Qsysa^l$ from $\Qsyse$ directly,
as shown in the following section. 


\subsection{Some State Machine Realizations of \SAlA}\label{sec:SAlA_SM}

Recall from \REFprop{prop:SforLcomplete} that the set of external sequences of length $l$, given by $\Pi_l(\Behal)=\Pi_l(\BeheQ)$ (from \eqref{equ:BehalW:b}), is finite. We now investigate how to use this set as a state space in the construction of different state machine realizations of the \SAlA of a system $\E$. This is be done on the basis of a state machine realization $\Qsyse$ of $\Ee$ satisfying \eqref{equ:Prelim}. For this,
we first investigate how a string $\zeG\in\Pi_l(\BeheQ)$ can correspond to a state $\xe\in\xeS$ of $\Qsyse$. Observe that $\zeG$ is a string of length $l$ and $\xe$ is a state reached at a particular time $k\in\Nbn$. We consider the cases where $\zeG$ is generated by $\Qsyse$ immediately \emph{before}, immediately \emph{after} or \emph{while} $\xe$ was reached. This leads us to a set of intervals 
\begin{equation}\label{equ:Prelim:I}
 \Ilm=[m-l,m-1]\quad\SUCHTHAT~ l,m\in\Nbn,~\text{and}~m\leq l,
\end{equation}

where\footnote{The addition of two intervals is interpreted in the usual sense, i.e., $[a,b]+[c,d]=[a+c,b+d]$.} $[k,k]+\Interval^l_0=[k-l,k-1]$ corresponds to the first,  $[k,k]+\Interval^l_l=[k,k+l-1]$ corresponds to the second, and for all other choices of $m$, $[k,k]+\Ilm$ corresponds to the third case. Based on \eqref{equ:Prelim:I} the sets of compatible states are introduced in \REFdef{def:Xxr} and illustrated in \REFfig{fig:Enabl}.%
\begin{figure}
\begin{center}
\begin{tikzpicture}[auto,scale=1]
 
 \tikzset{
>=stealth',
help lines/.style={dashed, thick},
axis/.style={<->},
important line/.style={thick},
connection/.style={thick, dotted},
}

\def\y{0} \def\yd{0.5} \def\x{0} \def\xd{1} \def\xmax{10} \def\ymax{2}

\coordinate (zero) at (\x,\y);
\coordinate (top) at (\x,\y+5);
\coordinate (start) at (\x,\y+2);

\foreach \nx/\ny/\a in {1/0/a,2/1/b,3/2/c,4/1.5/c,5/0.5/b,6/0.8/b}{
\draw[help lines] (\x+\nx*\xd,\y) -- (\x+\nx*\xd,\y+\ymax-0.5);
\coordinate (x\nx) at start+(\nx*\xd,\ny*\yd);
\fill [black] (x\nx) node {}  circle [radius=3pt];
\draw (\nx*\xd,\ymax-0.2) node {$ \a $};
}
\draw (\x+0.2,\ymax-0.2) node {$ \hdots $};
\draw (\x-0.5,\ymax-0.2) node {$ \yeG$:};
\draw (6.8*\xd,\ymax-0.2) node {$ \hdots $};
\draw (x4)+(0.3,0) node {$x$};

\foreach \a/\b/\c/\z in {1/3/0/0,2/4/0.2/1,3/5/0.4/2,4/6/0.6/3}{
\draw[important line] (\a*\xd-0.1,\ymax+0.05+\c) -- (\a*\xd-0.1,\ymax+0.15+\c) --(\b*\xd+0.1,\ymax+0.15+\c) --(\b*\xd+0.1,\ymax+0.05+\c);
\draw (\a*\xd+0.4,\ymax+\c+0.35) node {$\zeG_{\z}$};
}

\draw[important line] (x1) to[out=0,in=-140] (x2) to[out=40,in=-180] (x3) to[out=0,in=-210] (x4) to[out=-30,in=-190] (x5) to[out=-10,in=-150] (x6);
\draw[black, dotted,thick] (\x,\y+\yd) to[out=-10,in=-180] (x1);
\draw[black, dotted,thick] (x6) to[out=20,in=-150] (\x+7*\xd,\y+2*\yd);
\draw (\x-0.5,\y+\yd) node {$ \xeG $:};



\end{tikzpicture}
  \end{center}
 \caption{Illustration of corresponding external sequences $\zeG_m\in\ON{E}^{\Interval^{3}_{m}}(\xe),~m\in\{0,\hdots,3\}$ for state $\xe=\xeG(k)$ where $\weS=\yeS=\Set{a,b,c}$ and $\yeG(k)\in\ON{H}_{\tre}(\xe)$ for some $k\in\Nbn$.}\label{fig:Enabl}
\end{figure}
\begin{definition}\label{def:Xxr}
Given \eqref{equ:Prelim} and \eqref{equ:Prelim:I}, let $\ESe{}=(\Nbn,\weS\times\xeS,\BeheSQ)$ be a dynamical system, where $\BeheSQ$ is the extension of $\projState{\weS\times\xeS}{\Behf(\Qsyse)}$ to $\Zb$ as discussed in \REFsec{sec:prelim}. Then the set of \emph{corresponding external strings w.r.t. $\Ilm$} is defined for every state $\xe\in\xeS$ by
\begin{equation}\label{equ:Xxr}
\EnabWl{}{\Ilm}{\xe}\hspace{-1mm}\deff\SetCompX{\zeG}{
\ExQ{\Tuple{\weG,\xeG}\in\BeheSQ,k\in\Nbn\hspace{-1mm}}{\hspace{-1mm}\begin{propConjA}
                                         \xeG(k)=\xe\\
					\zeG=\weG|_{[k,k]+\Ilm}
                                         \end{propConjA}
}\hspace{-1mm}}\hspace{-1mm}.
\end{equation}
Furthermore, if 
\begin{equation}\label{equ:future_unique}
 \AllQ{\xe\in\xeS,\zeG,\zeG'\in\EnabWl{}{\Ilm}{\xe}}{\zeG\ll{l-m,l-1}=\zeG'\ll{l-m,l-1}}
\end{equation}
$\Qsyse$ is called \emph{future unique} w.r.t. $\Ilm$.
\end{definition}


Observe, that $\zeG,\zeG'\in\EnabWl{}{\Ilm}{\xe}$ in \eqref{equ:future_unique} are obtained from two trajectories $\Tuple{\weG,\xeG},\Tuple{\weG',\xeG'}\in\BeheSQ$ passing $\xe$ at time $k\in\Nbn$ and $k'\in\Nbn$, respectively, (i.e., $\xeG(k)=\xeG'(k')=\xe$) using \eqref{equ:Xxr}. During this restriction of $\weG$ (resp. $\weG'$) to $\zeG$ (resp. $\zeG'$) absolute time information is disregarded (see \REFsec{sec:notation}), implying $\zeG\ll{l-m,l-1}=\weG\ll{k,k+m-1}$ and $\zeG'\ll{l-m,l-1}=\weG\ll{k',k'+m-1}$. Therefore, 
$\Qsyse$ is future unique w.r.t. $\Ilm$ if for all states $\xe\in\xeS$ all trajectories passing $\xe$ have the same $m$-long (non-strict) future of external symbols, i.e. $\weG\ll{k,k+m-1}=\weG'\ll{k',k'+m-1}$. Using this intuition it is easy to see that $\Qsyse$ is always \emph{future unique} w.r.t. $\Interval^l_0=[-l,-1]$, as this interval has no future.\\
We now proceed by constructing $m$ finite state machines using the outlined correspondence between $\xeS$ and $\Pi_{l}(\BeheQ)$.

\begin{definition}\label{def:QsysalW}
Given \eqref{equ:Prelim} and \eqref{equ:Prelim:I}, define 
\begin{subequations}\label{equ:QsysalW}
\begin{align}
\xalS:=&\SetCompX{\zeta}{\ExQ{\xe\in\xeS}{\zeta\in\EnabWl{}{\Ilm}{\xe}}},\label{equ:xalS}\\
\xalSo{}:=&\SetCompX{\zeta}{\ExQ{\xe\in\xeSo{}}{\zeta\in\EnabWl{}{\Ilm}{\xe}}},~\text{and}\label{equ:xalSo}\\
 %
\tral\hspace{-1mm}:=&\SetCompX{\Tuple{\xa,\ue,\ye,\xa'}}{
\begin{propConjA}
\xa'\ll{0,l\mips m\mips1}=\BR{\xa\ll{0,l\mips m\mips1}\sconc\projState{\weS}{\ue,\ye}}\ll{1,l\mips m}\hspace{-1mm}\\[0.1cm]
~\xa\ll{l\mips m,l\mips 1}=\BR{\projState{\weS}{\ue,\ye}\sconc\xa'\ll{l\mips m,l\mips 2}}\ll{0,m-1}\\[0.1cm]
 \ExQ{\xe,\xe'\in\xeS}{
\begin{propConjA}
 \xa\in \EnabWl{}{\Ilm}{\xe}\\
 \xa'\in \EnabWl{}{\Ilm}{\xe'}\\
\Tuple{\xe,\ue,\ye,\xe'}\in\tre
\end{propConjA}}
\end{propConjA}
}\hspace{-1mm}.\label{equ:tral}
\end{align}
\end{subequations}
 Then $\Qsysal=\Tuple{\xalS,\ueS,\yeS,\tral,\xalSo{}}$ is called the $\Ilm$-abstract state machine of $\Qsyse$. 
 \end{definition}
 
 
The construction of the abstract state machines in \REFdef{def:QsysalW} can be interpreted as follows.
Using \eqref{equ:xalS} instead of $\xalS=\Set{\diamond}^l\cup\Pi_{l}(\BeheQ)$ ensures that $\Qsysal$ is live and reachable, which is purely cosmetic but allows to simplify subsequent proofs. The last line in the conjunction of \eqref{equ:tral} simply says that we have a transition in $\Qsysal$ from $\xa$ to $\xa'$ if there is a transition in $\Qsyse$ between any two states compatible with $\xa$ and $\xa'$, respectively. However, the first two lines in the conjunction of \eqref{equ:tral} additionally ensure that $\xa$ and $\xa'$ obey the rules of the domino game, i.e.,
\begin{equation*}
 \xa\ll{1,l-1}=\xa'\ll{0,l-2}
\end{equation*}
as depicted in \REFfig{fig:DominoGame} (right) and the current external symbol $\we=\projState{\weS}{\ue,\ye}$ is contained in either $\xa$ or $\xa'$ or both, at the position corresponding to the current time point, i.e.,
\begin{align*}
 \we&=\xa'(l-1)~\text{if}~m=0,\\
 \we&=\xa(l-m)=\xa'(l-1-m)~\text{if}~0<m<l~\text{and}\\
 \we&=\xa(0)~\text{if}~m=l.
\end{align*}
As we are interested in state machine realizations of \SAlA, we show that $\Qsysal$ realizes $\Eal$ for all choices of $l$ and $m$.

\begin{theorem}\label{thm:behequ}
Given \eqref{equ:Prelim} and \eqref{equ:Prelim:I}, let $\Qsysal$ be defined as in \REFdef{def:QsysalW} and let $\Eal=\Tuple{\Nbn,\weS,\Behal}$ be the unique \SAlA of $\Ee=\Tuple{\Nbn,\weS,\BeheQ}$. Then $\Qsysal$ realizes $\Eal$.
\end{theorem}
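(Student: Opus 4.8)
The plan is to prove the identity $\Behal=\Behe(\Qsysal)$ required by \REFdef{def:SAlA_Realization}, exploiting the domino-game characterisation of $\Behal$ described before \eqref{equ:BehalW}. By maximality of $\Behal$ among the behaviors satisfying \eqref{equ:BehalW}, a signal $\weG$ (extended by $\diamond$ for negative times) lies in $\Behal$ if and only if $\weG\ll{-l,0}\in\BeheQ\ll{-l,0}$ and $\weG\ll{k-l,k}\in\Pi_{l+1}(\BeheQ)$ for every $k\in\Nbn$, the latter using $\Pi_{l+1}(\Behal)=\Pi_{l+1}(\BeheQ)$ from \eqref{equ:BehalW:b}. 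The conceptual bridge is that, by \eqref{equ:Xxr}, a state of $\Qsysal$ is a length-$l$ window of the external signal shifted by $m$ against the current time, so both inclusions reduce to matching these windows, and no future-uniqueness is needed.

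The first step is a structural lemma: for every $\Tuple{\ueG,\yeG,\xaG}\in\Behf(\Qsysal)$ with external projection $\weG(t)=\projState{\weS}{\ueG(t),\yeG(t)}$, one has $\xaG(t)=\weG\ll{t+m-l,t+m-1}$ for all $t$. I would derive this from the first two lines of the conjunction in \eqref{equ:tral}: they force the overlap $\xaG(t)\ll{1,l-1}=\xaG(t+1)\ll{0,l-2}$ and pin $\weG(t)$ to position $l-m$ of $\xaG(t)$ (for $m\geq1$) and to position $l-m-1$ of $\xaG(t+1)$ (for $m\leq l-1$). Since consecutive states overlap in $l-1$ symbols they are consecutive windows of a single signal $\sigma$, and the symbol placement identifies $\sigma(t)$ with $\weG(t)$; anchoring at $\xaG(0)\in\xalSo{}$ and the $\diamond$-convention fixes the offset, yielding the claim. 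The $m$-dependent index arithmetic and the boundary cases $m=0$ and $m=l$ are the tedious part here.

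For $\Behe(\Qsysal)\subseteq\Behal$ I would fix $k\geq m$ and realise the window $\weG\ll{k-l,k}$ inside $\Qsyse$ via the transition at time $t=k-m$. The last line of \eqref{equ:tral} supplies $\xe,\xe'\in\xeS$ with $\Tuple{\xe,\ue,\ye,\xe'}\in\tre$, $\xaG(k-m)\in\EnabWl{}{\Ilm}{\xe}$ and $\xaG(k-m+1)\in\EnabWl{}{\Ilm}{\xe'}$; by \eqref{equ:Xxr} the first membership yields a trajectory of $\Qsyse$ realising the strict past of $\xe$ (the symbols $\weG\ll{k-l,k-m-1}$) and the second one realising the present and future of $\xe'$ (the symbols $\weG\ll{k-m+1,k}$). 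Gluing the prefix of the first, the transition $\Tuple{\xe,\ue,\ye,\xe'}$ (which contributes $\weG(k-m)$), and the suffix of the second at the matching states $\xe,\xe'$ produces one admissible finite path whose external string is exactly $\weG\ll{k-l,k}$, so $\weG\ll{k-l,k}\in\Pi_{l+1}(\BeheQ)$. The finitely many initial windows with $k<m$, together with $\weG\ll{-l,0}\in\BeheQ\ll{-l,0}$, follow directly from $\xaG(0)\in\xalSo{}$ and reachability \eqref{equ:ReachLive:0}, which lets one pick the realising trajectory of the initial $\xe\in\xeSo{}$ with $\xeG(0)=\xe$ and read off these windows as windows of that single trajectory. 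By the characterisation above, $\weG\in\Behal$.

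For the converse $\Behal\subseteq\Behe(\Qsysal)$ I would take $\weG\in\Behal$, set $\xaG(t):=\weG\ll{t+m-l,t+m-1}$, and reverse the argument: the domino conditions in \eqref{equ:tral} hold because consecutive $\xaG(t)$ are consecutive windows of $\weG$, and for each step $\Pi_{l+1}(\Behal)=\Pi_{l+1}(\BeheQ)$ makes $\weG\ll{t+m-l,t+m}$ realisable in $\Qsyse$, from which one reads off the witnessing $\xe,\xe'\in\xeS$ and symbols $\ue,\ye$ required by the last line of \eqref{equ:tral} (and hence $\xaG(t)\in\xalS$, $\xaG(0)\in\xalSo{}$); this exhibits $\weG$ as the $\weS$-projection of an element of $\Behf(\Qsysal)$. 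The main obstacle throughout is the uniform treatment of the shift parameter $m$ — keeping the window offsets, the placement of the current symbol, and the $\diamond$-padded initial segment consistent for all $0\leq m\leq l$, with $m=0$ recovering the canonical realisation $\Qsysa^l$ of \REFprop{prop:SforLcomplete} (pure past) and $m=l$ the pure-future realisation.
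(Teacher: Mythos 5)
Your proposal is correct and follows essentially the same route as the paper: the paper factors your ``gluing'' step into Lemma~\ref{lem:tralversusDs}, which characterises $\tral$ exactly by the existence of a domino $\zeG\in\Pi_{l+1}(\BeheQ)$ with $\zeG\ll{0,l-1}=\xa$, $\zeG\ll{1,l}=\xa'$ and $\we=\zeG(l-m)$ (proved by concatenating trajectories of $\Qsyse$ at the witnessing states via the state property), and then both inclusions reduce to identifying the state trajectory with the shifted window sequence $\xaG(k)=\waG\ll{k-l+m,k+m-1}$ and handling the initial segment via the $\diamond$-padding, just as you do. Your explicit structural lemma that \emph{every} state trajectory of $\Qsysal$ must equal this window sequence is, if anything, slightly more careful than the paper's ``if we pick $\xaG$ accordingly'' in the direction $\Behe(\Qsysal)\subseteq\Behal$.
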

%
%
%
%
%
\begin{proof}
See Appendix~\ref{proof:thm:behequ}.
%
\end{proof}


As an intuitive consequence of \REFthm{thm:behequ}, choosing $m=0$ and the full external symbol set $\weS=\ueS\times\yeS$ when constructing $\Qsysal$ in \REFdef{def:QsysalW} yields the standard realization $\Qsysa^l$ of \SAlA.

 \begin{theorem}\label{thm:Qsysalo_equ}
 Given \eqref{equ:Prelim} and \eqref{equ:Prelim:I} with $\weS=\ueS\times\yeS$, let $\Qsysa^l$ and $\Qsysal$ as in \REFprop{prop:SforLcomplete} and \REFdef{def:QsysalW}, respectively. Then $\Qsysa^l=\Qsysa^{\Interval^l_0}$.
 \end{theorem}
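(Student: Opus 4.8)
The plan is to prove the stated equality of the two tuples componentwise. Write $\Qsysa^l=\Tuple{\xaS^l,\weS,\tra^l,\xaSo{}^l}$ as in \REFprop{prop:SforLcomplete} and $\Qsysa^{\Interval^l_0}=\Tuple{\xalS,\ueS,\yeS,\tral,\xalSo{}}$ as in \REFdef{def:QsysalW} with $m=0$. The external alphabets agree because $\weS=\ueS\times\yeS$, so a transition label $\we\in\weS$ of $\Qsysa^l$ is literally a pair $(\ue,\ye)\in\ueS\times\yeS$ of $\Qsysa^{\Interval^l_0}$, and $\projState{\weS}{\ue,\ye}=(\ue,\ye)=\we$, i.e.\ the projection in \eqref{equ:tral} is the identity. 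I would first use this to collapse \eqref{equ:tral} for $m=0$: since $\xa$ has length $l$, the first conjunct reads $\xa'=\BR{\xa\sconc\we}\ll{1,l}$, exactly the post-state prescribed by $\tra^l$; and the second conjunct is vacuous, because for $m=0$ both $\xa\ll{l,l-1}$ and $\BR{\cdots}\ll{0,-1}$ are the empty string. Thus it remains to match the state sets, the transition relations, and the initial-state sets.

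For the state spaces, by \eqref{equ:Xxr} with $m=0$ the set $\EnabWl{}{\Interval^l_0}{\xe}$ consists of the length-$l$ windows $\weG|_{[k-l,k-1]}$ taken immediately before any time $k$ at which a trajectory of $\BeheSQ$ occupies $\xe$. Taking the union over all $\xe\in\xeS$ and projecting to $\weS$ sweeps out precisely the length-$l$ windows of $\BeheQ$ ending at every instant $j=k-1\geq-1$: the instant $j=-1$ contributes the all-$\diamond$ window, and the instants $j\geq0$ contribute $\Pi_l(\BeheQ)$ by \eqref{equ:Ds}. Hence $\xalS=\Set{\diamond}^l\cup\Pi_l(\BeheQ)$, which equals $\xaS^l=\Set{\diamond}^l\cup\Pi_l(\Behal)$ since $\Pi_l(\Behal)=\Pi_l(\BeheQ)$ by \eqref{equ:BehalW:b}.

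The core of the argument is the equality of the transition relations. Given the simplification above, with $\xa'=\BR{\xa\sconc\we}\ll{1,l}$ determined in both machines, it suffices to show, for a window $\xa$ and a symbol $\we=(\ue,\ye)$,
\[
\xa\sconc\we\in\Pi_{l+1}(\BeheQ)
\quad\Longleftrightarrow\quad
\exists\,\xe,\xe'\in\xeS:\ \xa\in\EnabWl{}{\Interval^l_0}{\xe}\wedge\xa'\in\EnabWl{}{\Interval^l_0}{\xe'}\wedge\Tuple{\xe,\ue,\ye,\xe'}\in\tre,
\]
since the left-hand set equals $\Pi_{l+1}(\Behal)$ by \eqref{equ:BehalW:b}, giving $\tra^l=\tral$. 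For ``$\Rightarrow$'' I would lift a trajectory of $\BeheQ$ realizing $\xa\sconc\we=\weG|_{[k-l,k]}$ to a full trajectory $\Tuple{\ueG,\yeG,\xeG}\in\Behf(\Qsyse)$ and set $\xe=\xeG(k)$, $\xe'=\xeG(k+1)$; then $\xa$ is the recent past of $\xe$ at $k$, $\xa'=\weG|_{[k-l+1,k]}$ is the recent past of $\xe'$ at $k+1$, and $\Tuple{\xe,\ue,\ye,\xe'}\in\tre$ by \eqref{equ:Behtr}. The direction ``$\Leftarrow$'' is where I expect the main obstacle: the two memberships $\xa\in\EnabWl{}{\Interval^l_0}{\xe}$ and $\xa'\in\EnabWl{}{\Interval^l_0}{\xe'}$ come from two \emph{different} trajectories, so one must splice them into a single trajectory of $\Qsyse$ witnessing $\xa\sconc\we$. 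The construction runs the witness of $\xa$ up to the time $k$ where it occupies $\xe$, follows $\Tuple{\xe,\ue,\ye,\xe'}$ into $\xe'$ at time $k+1$, and grafts on any live continuation of $\xe'$ (guaranteed by \eqref{equ:ReachLive}); the delicate points are to verify that the spliced signal is again in $\Behf(\Qsyse)$ (its initial state lies in $\xeSo{}$ and every consecutive pair respects $\tre$ by construction) and that the window $[k-l,k]$ of its external projection is untouched by the grafting, so that it equals $\xa\sconc\we$ and hence lies in $\Pi_{l+1}(\BeheQ)$.

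Finally, for the initial states the inclusion $\Set{\diamond}^l\subseteq\xalSo{}$ is immediate: by \eqref{equ:ReachLive:0} every $\xe\in\xeSo{}$ starts a trajectory, whose window at $k=0$ is $\weG|_{[-l,-1]}=\diamond^l\in\EnabWl{}{\Interval^l_0}{\xe}$. For the reverse inclusion one must show that the recent-past window before an initial state can only be the all-$\diamond$ string, so that $\xalSo{}=\Set{\diamond}^l=\xaSo{}^l$; this is the second point requiring care, as it rests on the precise role of $\xeSo{}$ and of reachability in \eqref{equ:Prelim}. Establishing these two inclusions closes the last component and yields the componentwise equality $\Qsysa^l=\Qsysa^{\Interval^l_0}$.
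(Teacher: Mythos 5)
Your proposal follows the same route as the paper's proof: a componentwise comparison in which the state set is computed from \eqref{equ:Xxr} as $\Set{\diamond}^l\cup\Pi_l(\BeheQ)$, and the transition relations are matched via exactly the equivalence you isolate, namely that for $m=0$ the first two conjuncts of \eqref{equ:tral} collapse to $\xa'=\BR{\xa\sconc\we}\ll{1,l}$ while the existential clause over $\xe,\xe'$ is equivalent to $\xa\sconc\we\in\Pi_{l+1}(\BeheQ)$. The paper obtains that equivalence as the $m=0$ instance of \REFlem{lem:tralversusDs}, whose proof is precisely your lift-and-splice argument, with the splicing justified by the state property \eqref{equ:StateSpaceDynamicalSystem:2}; so that part of your plan is sound and is in fact how the paper discharges it. The one substantive remark concerns the point you flag about initial states: the paper simply asserts $\xaSo{}^{\Interval^l_0}=\BeheQ\ll{-l,-1}=\Set{\diamond}^l$, which tacitly anchors the witness time in \eqref{equ:Xxr} at $k=0$ when $\xe\in\xeSo{}$; under the literal definition \eqref{equ:xalSo}, an initial state that is revisited at some $k>0$ would contribute non-diamond windows to $\xalSo{}$, so the reverse inclusion you leave open is genuinely the weakest link --- it holds only under that reading of the definition (or an assumption that initial states are not revisited), and neither your writeup nor the paper supplies an argument for it.
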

 
 \begin{proof}
 See Appendix~\ref{proof:thm:Qsysalo_equ}.
 \end{proof}

\subsection{Ordering $\Qsysal$ based on Simulation Relations}

Before we discuss the ordering between abstract state machines based on changing $l$ and $m$, we show under which conditions the obtained abstraction $\Qsysal$ simulates the original state machine $\Qsyse$ and when both state machines are bisimilar. This investigation is interesting for the comparison to \QBA, as the latter always simulates the original state machine $\Qsyse$. Furthermore, the framework of \QBA allows to construct a bisimilar abstraction whenever the employed repartitioning algorithm terminates. Hence, it is interesting to know if the latter is also true for \SAlA.\\
The investigation of similarity between $\Qsysal$ and $\Qsyse$ requires the construction of a relation between the original state space $\xeS$ and the abstract state space $\xalS$. As $\xalS$ defines a cover for $\xeS$ where each cell is given by all states $\xe$ corresponding to a string $\zeG\in\xalS$ via $\EnabWl{}{\Ilm}{}$, the latter is a natural choice for a relation between $\xeS$ and $\xalS$.\\
Recall from \REFthm{thm:behequ} that the behaviors of $\Qsyse$ and $\Qsysal$ coincide if $\BeheQ$ is asynchronously $l$-complete. Behavioral equivalence is always necessary for a relation $\R$ to be a bisimulation relation but usually not sufficient. We therefore introduce a stronger condition, called \emph{state-based asynchronous $l$-completeness}, to serve the latter purpose.

\begin{definition}\label{def:SBalc}
Given \eqref{equ:Prelim}, $\Qsyse$ is \emph{state-based asynchronously $l$-complete w.r.t. $\Ilm$} if
 \begin{equation}\label{equ:dominolcomplete}
  \AllQ{\xe\in\xeS,\zeG\in\Pi_{l+1}(\BeheQ)}{\propImp{\zeG\ll{0,l-1}\in\EnabWl{}{\Ilm}{\xe}}{\zeG\in\EnabWl{}{[m-l,m]}{\xe}}}.
 \end{equation}
\end{definition}

\begin{remark}\label{rem:SBalc}
Recall from the beginning of this section that the dynamical system $\Ee=\Tuple{\Nbn,\weS,\BeheQ}$ is asynchronously $l$-complete, as  defined in \cite[Def.6]{SchmuckRaisch2014_ControlLetters}, if $\BeheQ$ is the largest behavior satisfying \eqref{equ:BehalW} itself. Intuitively, the latter is true if \emph{for all} $\zeG\in\Pi_{l+1}(\BeheQ)$ there \emph{exists} an $\xe\in\xeS$ s.t. the second part of \eqref{equ:dominolcomplete} holds. Therefore, asynchronous $l$-completeness of $\Ee$ is always implied by \eqref{equ:dominolcomplete}, but not vice-versa. 
\end{remark}

\begin{theorem}\label{thm:SimRel_QsyseQsysal}
Given \eqref{equ:Prelim}, \eqref{equ:Prelim:I} and $\Qsysal$ as in \REFdef{def:QsysalW}, let
\begin{equation}
\R=\SetCompX{\Tuple{\xe,\xa}\in\xeS\times\xalS}{
\xa\in\EnabWl{}{\Ilm}{\xe}
}\label{equ:R0}.
\end{equation}
Then it holds that\footnote{Using $\mathfrak{R}_{\ueS\times\yeS}$ instead of $\mathfrak{R}_{\weS}$ in (i) is done on purpose and indicates that this relation holds for $\ueS\times\yeS$ independent from the choice of $\weS$.} 
\begin{compactenum}[(i)]
 \item $\propAequ{\R\in\SR{\ueS\times\yeS}{}{\Qsyse}{\Qsysal}}{\text{$\Qsyse$ is future unique w.r.t. $\Ilm$}}$ and
 \item $\propAequ{{\R^{-1}\in\SR{\weS}{}{\Qsysal}{\Qsyse}}}{\text{$\Qsyse$ is state-based asych. $l$-complete w.r.t. $\Ilm$}}$.
\end{compactenum}
\end{theorem}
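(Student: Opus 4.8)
The statement splits into two independent equivalences, (i) and (ii), and in each case the relation is the cover $\R$ from \eqref{equ:R0}. The plan is to verify the simulation conditions \eqref{equ:SimRel_EL:a}--\eqref{equ:SimRel_EL:b} directly. The initial-state clause \eqref{equ:SimRel_EL:a} (and its analogue for $\R^{-1}$) is free in both directions: by reachability and liveness \eqref{equ:ReachLive} every $\xe\in\xeS$ lies on some trajectory, so $\EnabWl{}{\Ilm}{\xe}\neq\emptyset$, and \eqref{equ:xalSo} turns every such string into an abstract initial state. Hence the equivalences concern only the transition clause \eqref{equ:SimRel_EL:b}. Before starting I would record the purely combinatorial observation already noted after \REFdef{def:QsysalW}: the first two lines of \eqref{equ:tral} are equivalent to the overlap $\xa\ll{1,l-1}=\xa'\ll{0,l-2}$ together with placement of the current symbol $\we=\projState{\weS}{\ue,\ye}$ at position $l-m$ of $\xa$ (at position $l-1$ of $\xa'$ when $m=0$). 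I would also invoke a splicing property of live and reachable machines: any $\Tuple{\xe,\ue,\ye,\xe'}\in\tre$ embeds into a full trajectory by prefixing a path reaching $\xe$ and appending an infinite continuation from $\xe'$.

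For (i), consider first the forward direction under future uniqueness. Given $\Tuple{\xe,\xa}\in\R$ and $\Tuple{\xe,\ue,\ye,\xe'}\in\tre$, I splice a trajectory that realizes $\xa$ with matching past, takes exactly this transition at the current time, and continues from $\xe'$. Future uniqueness \eqref{equ:future_unique} pins the last $m$ symbols of every string in $\EnabWl{}{\Ilm}{\xe}$; comparing $\xa$ with the restriction of a trajectory that uses $\Tuple{\xe,\ue,\ye,\xe'}$ then forces $\xa(l-m)=\we$, so the spliced window equals $\xa$ and the consecutive window $\xa'$ lies in $\EnabWl{}{\Ilm}{\xe'}$ and satisfies the overlap automatically; taking $\ua=\ue,\ya=\ye$ gives the required match w.r.t. $\ueS\times\yeS$. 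For the converse I argue contrapositively: if $\zeG,\zeG'\in\EnabWl{}{\Ilm}{\xe}$ disagree on positions $l-m,\dots,l-1$, I follow, in the abstract machine, the concrete trajectory witnessing $\zeG'$ for $m$ steps starting from the related pair $\Tuple{\xe,\zeG}$. This yields abstract states $\zeG=\xa_0,\xa_1,\dots$ with $\xa_i\ll{0,l-2}=\xa_{i-1}\ll{1,l-1}$, hence $\xa_i(l-m)=\zeG(l-m+i)$, while the concrete symbol consumed at that step equals $\zeG'(l-m+i)$; since the match is w.r.t. the full set $\ueS\times\yeS$, the domino clause forces $\zeG(l-m+i)=\zeG'(l-m+i)$ for $i=0,\dots,m-1$, contradicting the disagreement. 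For $m=0$ future uniqueness w.r.t. $\Interval^l_0$ is vacuous and the construction always succeeds, matching the remark that this interval has no future.

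For (ii), the forward direction is where state-based completeness \eqref{equ:dominolcomplete} enters. Given $\Tuple{\xa,\xe}\in\R^{-1}$ and an abstract transition $\Tuple{\xa,\ua,\ya,\xa'}\in\tral$, I form the length-$(l{+}1)$ string $\zeG:=\xa\sconc\xa'(l-1)$, whose two length-$l$ windows are $\xa$ and $\xa'$. Either directly from the witness states $\tilde\xe,\tilde\xe'$ of \eqref{equ:tral}, or via \REFthm{thm:behequ} together with \eqref{equ:BehalW:b}, I check $\zeG\in\Pi_{l+1}(\BeheQ)$, and note $\zeG\ll{0,l-1}=\xa\in\EnabWl{}{\Ilm}{\xe}$. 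Then \eqref{equ:dominolcomplete} upgrades this to $\zeG\in\EnabWl{}{[m-l,m]}{\xe}$, i.e. a trajectory through the \emph{related} state $\xe$ whose $[m-l,m]$-window is $\zeG$; its transition at the current time is the sought $\Tuple{\xe,\ue,\ye,\xe'}\in\tre$ with $\projState{\weS}{\ue,\ye}=\projState{\weS}{\ua,\ya}$ and $\xa'\in\EnabWl{}{\Ilm}{\xe'}$, so $\Tuple{\xa',\xe'}\in\R^{-1}$. For the converse, if \eqref{equ:dominolcomplete} fails at some $\xe$ and $\zeG\in\Pi_{l+1}(\BeheQ)$ with $\zeG\ll{0,l-1}\in\EnabWl{}{\Ilm}{\xe}$ but $\zeG\notin\EnabWl{}{[m-l,m]}{\xe}$, then the abstract transition carrying $\zeG\ll{0,l-1}$ to $\zeG\ll{1,l}$ lies in $\tral$, yet no concrete transition from $\xe$ can match it: splicing any such transition with a trajectory realizing $\xa'=\zeG\ll{1,l}$ would produce a trajectory through $\xe$ whose $[m-l,m]$-window is exactly $\zeG$, contradicting $\zeG\notin\EnabWl{}{[m-l,m]}{\xe}$.

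The main obstacle is the forward direction of (ii): the transition relation $\tral$ only guarantees \emph{some} pair of witness states $\tilde\xe,\tilde\xe'$ realizing the abstract step, whereas $\R^{-1}$ must produce a concrete transition out of the \emph{specific} related state $\xe$. Bridging this gap is precisely the content of state-based asynchronous $l$-completeness, and the delicate part is the index bookkeeping relating the windows $\Ilm=[m-l,m-1]$ and $[m-l,m]$ and verifying $\zeG\in\Pi_{l+1}(\BeheQ)$ from the $\tral$-witnesses. By contrast, the remaining three implications reduce, after the simplification of \eqref{equ:tral}, to careful but routine splicing of trajectories and the position-by-position induction in the converse of (i).
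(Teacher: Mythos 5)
Your proposal is correct and follows essentially the same route as the paper's proof: the initial-state clause is dispatched via \eqref{equ:ReachLive} and \eqref{equ:xalSo}, the transition clause is reduced to the domino/overlap reformulation of $\tral$ (the paper's \REFlem{lem:tralversusDs}), the two sufficiency directions are handled by splicing trajectories through the state property \eqref{equ:StateSpaceDynamicalSystem:2}, and the converse of (i) is the same position-by-position propagation of $\zeG(l-m+i)=\zeG'(l-m+i)$ for $i=0,\dots,m-1$. The only deviations are cosmetic: two implications are phrased contrapositively, and your iteration tracks a single chain of abstract states rather than recursing on pairs of corresponding strings at successive states as the paper does.
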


\begin{proof}
See Appendix~\ref{proof:thm:SimRel_QsyseQsysal}. 
\end{proof}

Intuitively, $\Qsysal$ simulates $\Qsyse$ w.r.t. $\weS$ if for every related state pair $\Tuple{\xe,\xa}\in\R$ and every transition $\Tuple{\xe,\ue,\ye,\xe'}\in\tre$ which $\Qsyse$ \enquote{picks}, $\Qsysal$ can \enquote{pick} a transition $\Tuple{\xa,\ue',\ye',\xa'}\in\tral$ s.t. $\we=\projState{\weS}{\ue,\ye}=\projState{\weS}{\ue',\ye'}$. However, if $m>0$, a state $\xa\in\xalS$ has only outgoing transitions s.t. $\we=\xa(l-m)$. Therefore, $\Qsysal$ can only simulate $\Qsyse$ iff in every state $\xe\in\xeS$ all outgoing transitions agree on this $\we$, i.e., $\Qsyse$ is \enquote{output deterministic} w.r.t. $\weS$. For $m>1$ applying this reasoning iteratively gives the (rather restrictive) condition of future uniqueness of $\Qsyse$.
As the outlined problems are absent for $m=0$ (as $\Qsyse$ is always future unique w.r.t. $\Interval^l_0$), $\Qsysa^{\Interval^l_0}$, which we know to coincide with the original realization $\Qsysa^l$ of \SAlA for $\weS=\ueS\times\yeS$, always simulates $\Qsyse$.

\begin{corollary}
 Given \eqref{equ:Prelim}, \eqref{equ:Prelim:I} and $\Qsysa^{\Interval^l_0}$ as in \REFdef{def:QsysalW} it holds that 
  $\Qsyse\kgl{\ueS\times\yeS}{}\Qsysa^{\Interval^l_0}$.
\end{corollary}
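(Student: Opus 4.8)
The plan is to obtain this as an immediate consequence of \REFthm{thm:SimRel_QsyseQsysal}~(i), instantiated with $m=0$. For $m=0$ the interval in \eqref{equ:Prelim:I} becomes $\Ilm=[0-l,0-1]=\Interval^l_0$, so the abstract state machine under consideration is exactly $\Qsysa^{\Interval^l_0}$ and the candidate relation is the $\R$ defined in \eqref{equ:R0}. Part~(i) of \REFthm{thm:SimRel_QsyseQsysal} states that $\R\in\SR{\ueS\times\yeS}{}{\Qsyse}{\Qsysa^{\Interval^l_0}}$ holds if and only if $\Qsyse$ is future unique w.r.t. $\Interval^l_0$. Hence it suffices to verify this future-uniqueness property for the special interval $\Interval^l_0$.

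Second, I would observe that future uniqueness w.r.t. $\Interval^l_0$ holds unconditionally. The defining condition \eqref{equ:future_unique} requires, for every $\xe\in\xeS$ and all $\zeG,\zeG'\in\EnabWl{}{\Ilm}{\xe}$, that $\zeG\ll{l-m,l-1}=\zeG'\ll{l-m,l-1}$. For $m=0$ the left- and right-hand sides read $\zeG\ll{l,l-1}$ and $\zeG'\ll{l,l-1}$; since $l>l-1$, the convention $\wG\ll{t_2,t_1}:=\lambda$ for $t_1<t_2$ from \REFsec{sec:notation} makes both strings equal to the empty string $\lambda$. Thus the required equality $\lambda=\lambda$ holds trivially, which is precisely the informal remark made after \REFdef{def:Xxr} that $\Interval^l_0=[-l,-1]$ \enquote{has no future}.

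Combining these two steps, $\R$ is indeed a simulation relation from $\Qsyse$ to $\Qsysa^{\Interval^l_0}$ w.r.t. $\ueS\times\yeS$, and by \REFdef{def:SimRel_Symbols} the existence of such a relation is exactly the assertion $\Qsyse\kgl{\ueS\times\yeS}{}\Qsysa^{\Interval^l_0}$. I do not expect any genuine obstacle here: the only point requiring care is recognizing that, for $m=0$, the future-uniqueness condition degenerates to a comparison of empty strings and is therefore vacuously satisfied, so that the hard direction of \REFthm{thm:SimRel_QsyseQsysal}~(i) need not be re-examined but can simply be invoked.
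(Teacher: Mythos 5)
Your proposal is correct and follows exactly the paper's (implicit) argument: the corollary is obtained by instantiating \REFthm{thm:SimRel_QsyseQsysal}~(i) at $m=0$ with the relation $\R$ from \eqref{equ:R0}, together with the observation that future uniqueness w.r.t.\ $\Interval^l_0$ holds vacuously because $\zeG\ll{l,l-1}=\lambda$ by the empty-string convention. Your explicit verification of this degenerate case is a slightly more careful spelling-out of the paper's remark that $\Interval^l_0$ \enquote{has no future}, but it is the same proof.
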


 \begin{remark}
 In the context of \SlA a state machine $\Qsysa^{l^+}$ was introduced in \cite{Raisch2010} whose state at time $k$ represents the string of external symbols from time $k-l+1$ to time $k$, i.e., from the interval $k+\Interval^l_1$. While the state sets of  $\Qsysa^{l^+}$ and $\Qsysa^{\Interval^l_1}$ coincide, their transition structure slightly differs. This is a consequence of the fact that $\Qsysa^{l^+}$ was intended to serve as a set-valued observer for the states of $\Qsyse$.
  \end{remark}
  
Recalling the domino game, we know that using longer dominos (i.e., increasing $l$) gives less freedom in composing them and therefore yields a tighter abstraction. This intuition carries over to the state space realizations of $\Eal$, inducing an ordering in terms of simulation relations. 

\begin{theorem}\label{thm:SimRel_QsysalpmQsysalm}
Given \eqref{equ:Prelim}, \eqref{equ:Prelim:I} and $\Qsysal$ as in \REFdef{def:QsysalW}, let
\begin{equation}\label{equ:R_l}
  \R=\SetCompX{\Tuple{\xa_{l+1},\xa_l}\in\xaS^{{\Ilpm}}\times\xaS^{{\Ilm}}}{
  \xa_{l}=\xa_{l+1}\ll{1,l}}.
 \end{equation}
Then it holds that 
 \begin{compactenum}[(i)]
  \item $\R\in\SR{\weS}{}{\Qsysa^{\Ilpm}}{\Qsysa^{\Ilm}}$ and 
  \item $\propAequ{\R^{-1}\in\SR{\weS}{}{\Qsysa^{\Ilm}}{\Qsysa^{\Ilpm}}}{\Behal=\Beha^{l+1} }$.
 \end{compactenum}
\end{theorem}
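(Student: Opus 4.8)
The plan is to check the two conditions of \REFdef{def:SimRel_EL} in each case, exploiting the structural link between $\xaS^{\Ilpm}$ and $\xaS^{\Ilm}$. Since $[k,k]+\Ilpm$ extends $[k,k]+\Ilm$ by exactly one position to the left, the ``drop the oldest symbol'' map $\xa_{l+1}\mapsto\xa_{l+1}\ll{1,l}$ sends corresponding external strings to corresponding external strings: for every $\xe\in\xeS$, $\zeG\in\EnabWl{}{\Ilpm}{\xe}$ implies $\zeG\ll{1,l}\in\EnabWl{}{\Ilm}{\xe}$, both being read off the same trajectory of $\BeheSQ$ through $\xe$. I would prove this helper fact first, directly from \eqref{equ:Xxr}; it is what makes the relation $\R$ of \eqref{equ:R_l} behave well.

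For part~(i) I would verify \eqref{equ:SimRel_EL:a}--\eqref{equ:SimRel_EL:b} for $\R$ with the obvious witnesses. An initial $\xa_{l+1}\in\EnabWl{}{\Ilpm}{\xe}$ with $\xe\in\xeSo{}$ is related to the initial state $\xa_{l+1}\ll{1,l}\in\EnabWl{}{\Ilm}{\xe}$; and to a transition $\Tuple{\xa_{l+1},\ue,\ye,\xa_{l+1}'}\in\tra^{\Ilpm}$ I would respond with the same $\ue,\ye$ (so the projection constraint is trivially met) and target $\xa_{l+1}'\ll{1,l}$. The $\Qsyse$-transition witnessing \eqref{equ:tral} carries over unchanged (apply the helper to $\xa_{l+1}$ and $\xa_{l+1}'$), and the two domino equations of \eqref{equ:tral} at length $l$ follow from their length-$(l+1)$ versions by a short index shift, since dropping the oldest symbol preserves both the overlap $\xa\ll{1,l-1}=\xa'\ll{0,l-2}$ and the anchoring position of $\we$. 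No extra hypothesis is needed here.

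For part~(ii), the forward implication follows from the standard fact --- provable by induction on trajectory length from \eqref{equ:SimRel_EL:a}--\eqref{equ:SimRel_EL:b} --- that a simulation relation forces inclusion of the externally projected behaviors; with \REFthm{thm:behequ} this gives $\Behal\subseteq\Beha^{l+1}$, and combined with the always-valid $\Behal\supseteq\Beha^{l+1}$ noted after \eqref{equ:BehalW} it yields $\Behal=\Beha^{l+1}$. For the converse I would assume $\Behal=\Beha^{l+1}$, which by the maximality characterization of \SAlA via \eqref{equ:BehalW} is equivalent to $\Pi_{l+2}(\Behal)=\Pi_{l+2}(\BeheQ)$, and verify the two conditions for $\R^{-1}$. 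Given $\Tuple{\xa_l,\xa_{l+1}}\in\R^{-1}$ and $\Tuple{\xa_l,\ue,\ye,\xa_l'}\in\tra^{\Ilm}$, the successor $\xa_{l+1}'$ is forced: $\R^{-1}$ demands $\xa_{l+1}'\ll{1,l}=\xa_l'$ while the overlap constraint of $\tra^{\Ilpm}$ demands $\xa_{l+1}'\ll{0,l-1}=\xa_{l+1}\ll{1,l}$, and these are consistent precisely because $\xa_l=\xa_{l+1}\ll{1,l}$ and $\xa_l\ll{1,l-1}=\xa_l'\ll{0,l-2}$.

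It then remains to show that this forced $\xa_{l+1}'$ is a genuine state of $\Qsysa^{\Ilpm}$ and that the transition to it lies in $\tra^{\Ilpm}$. I would glue $\xa_{l+1}$ and the length-$(l+1)$ domino $d_l$ underlying the given $\Ilm$-transition into the length-$(l+2)$ string $d$ with $d\ll{0,l}=\xa_{l+1}$ and $d\ll{1,l+1}=d_l$; both of its length-$(l+1)$ sub-windows lie in $\Pi_{l+1}(\Behal)=\Pi_{l+1}(\BeheQ)$, so the domino-game characterization of the asynchronously $l$-complete behavior $\Behal$ gives $d\in\Pi_{l+2}(\Behal)$, whence $d\in\Pi_{l+2}(\BeheQ)$ by the assumed equality. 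Reading $\xe,\xe'$ and an external symbol matching $\we$ off a trajectory of $\Qsyse$ realizing $d$ then supplies the $\Qsyse$-transition required by \eqref{equ:tral} and simultaneously certifies $\xa_{l+1}'$ as a state; condition \eqref{equ:SimRel_EL:a} is treated by the same extension applied to initial windows, taking care of the $\diamond$-prefix so that the prepended symbol respects the common initial segment of $\Behal$ and $\Beha^{l+1}$. The main obstacle is exactly this realizability step: the transition relation of \REFdef{def:QsysalW} is not the pure domino relation but insists on an underlying $\Qsyse$-transition between compatible states, so producing the forced $(l+1)$-successor requires descending from the behavioral/domino level back to an actual transition of $\Qsyse$, and it is here that $\Behal=\Beha^{l+1}$ (in the form $\Pi_{l+2}(\Behal)=\Pi_{l+2}(\BeheQ)$) is indispensable and the bookkeeping --- index shifts, the anchoring of $\we$ for the various $m$, and the $\diamond$-prefix --- must be done carefully.
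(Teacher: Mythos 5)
Your argument is correct and, for part (i) and the converse direction of part (ii), follows essentially the same route as the paper: the paper packages your domino-gluing step as \REFlem{lem:NewProperty2}, which shows that $\Behal=\Beha^{l+1}$ is equivalent to the extension property \eqref{equ:NewProperty2}, and then runs exactly your ``forced successor, then descend to an underlying $\Qsyse$-transition via \eqref{equ:tralversusDs:b}'' argument, including the $\Set{\diamond}^l\cup\Ds{}{l+1}$ case split for the left window. The one place you genuinely diverge is the forward direction of (ii): the paper shows directly that $\R^{-1}$ being a simulation relation forces \eqref{equ:NewProperty2} (take two overlapping $(l{+}1)$-dominos, realize the level-$l$ transition they induce, and let the simulation produce a level-$(l{+}1)$ transition whose underlying domino is the desired $(l{+}2)$-string), whereas you invoke the general soundness of simulation relations with respect to external-behavior inclusion together with \REFthm{thm:behequ} and the standing inclusion $\Beha^{l+1}\subseteq\Behal$. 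Your route is shorter and conceptually cleaner, at the cost of relying on a fact the paper never isolates as a lemma (so you would need to carry out the induction you sketch); the paper's route is self-contained and produces the combinatorial characterization \eqref{equ:NewProperty2} as a by-product, but since that characterization is not reused elsewhere, nothing is lost by your shortcut.
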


\begin{proof}
See Appendix~\ref{proof:thm:SimRel_QsysalpmQsysalm}. 
\end{proof}

\REFthm{thm:SimRel_QsysalpmQsysalm} (ii) implies that the accuracy of the abstraction cannot be increased by increasing $l>r$ if $\BeheQ$ is asynchronously $r$-complete and $m$ is fixed, e.g. $m=0$. Therefore, the standard realization $\Qsysa^l$ for \SAlA might never result in a bisimilar abstraction of $\Qsyse$, no matter how large $l$ is chosen, even if $\Ee=\Tuple{\Nbn,\weS,\BeheQ}$ is asynchronously $r$-complete. This is due to the fact that 
\eqref{equ:dominolcomplete} is not implied by asynchronous $l$-completeness of $\Ee$ (see \REFrem{rem:SBalc}).\\
Interestingly, we will show that increasing $m$, i.e., shifting the interval into the future, results in a tighter abstraction w.r.t. simulation relations, i.e. allows to increase the precision of $\Qsysal$ for $l\geq r$ even if $\Ee$ is $r$-complete. 

\begin{theorem}\label{thm:SimRel_QsysalmpQsysalm}
Given \eqref{equ:Prelim}, \eqref{equ:Prelim:I}, and $\Qsysal$ as in \REFdef{def:QsysalW} with $m<l$, let
\begin{equation}\label{equ:R_m}
 \R=\SetCompX{\Tuple{\xa_{m+1},\xa_m}\in\xaS^{{\Ilmp}}\times\xaS^{{\Ilm}}}{
  \begin{propConjA}
  \xa_{m+1}\ll{0,l-2}=\xa_m\ll{1,l-1}\\
  \ExQ{\xe\in\xeS}{
  \begin{propConjA}
   \xa_{m+1}\in\EnabWl{}{{\Ilmp}}{\xe}\\
   \xa_{m}\in\EnabWl{}{{\Ilm}}{\xe}
  \end{propConjA}}
  \end{propConjA}}.
 \end{equation}
Then it holds that
\begin{compactenum}[(i)]
 \item  $\R\in\SR{\weS}{}{\Qsysa^{\Ilmp}}{\Qsysa^{\Ilm}}$ and 
 \item $
  \propAequ{\R^{-1}\in\SR{\weS}{}{\Qsysa^{\Ilm}}{\Qsysa^{\Ilmp}}}{
 \begin{propConjA}
  \text{$\Qsyse$ is future unique w.r.t. $\Ilmp$}\\
  \text{$\Qsyse$ is state-based async. $l$-complete w.r.t. $\Ilm$}
 \end{propConjA}}
 $
\end{compactenum}
\end{theorem}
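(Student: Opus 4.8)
The plan is to verify, for each item, the two defining clauses of a simulation relation (\REFdef{def:SimRel_EL}) with $\wS=\weS$: the initial-state clause \eqref{equ:SimRel_EL:a} and the transition clause \eqref{equ:SimRel_EL:b}; since (ii) is a biconditional I would split it into an ``if'' and an ``only if'' part. The structural fact driving everything is the interval identity $\Ilmp=\Ilm+[1,1]$: shifting the window $\Ilm$ one step into the future yields $\Ilmp$. Hence, for any trajectory $\Tuple{\weG,\xeG}\in\BeheSQ$ and any time $k$, the $\Ilm$-window at $k+1$ coincides with the $\Ilmp$-window at $k$; in particular $\zeG\in\EnabWl{}{\Ilmp}{\xe}$ implies $\zeG\in\EnabWl{}{\Ilm}{\xe^{+}}$ for the state $\xe^{+}$ reached one step later along the witnessing trajectory, and the external symbol of the intervening $\Qsyse$-transition sits at position $l-m-1$ of $\zeG$. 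I would isolate this observation together with a splicing lemma: since a state decouples its admissible pasts from its admissible futures, whenever two trajectories both pass through a state $\xe$ one may concatenate the past of one with the future of the other through $\xe$. Both let me replace the several \emph{a priori} different witnessing trajectories appearing in $\R$ and in a given transition by a single trajectory when convenient.

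For (i) I would first observe that the successor $\xa_m'$ forced jointly by the $\Ilm$-domino rule of $\tra^{\Ilm}$ and the overlap clause of $\R$ on the pair $\Tuple{\xa_{m+1}',\xa_m'}$ is \emph{exactly} $\xa_{m+1}$ (again by the interval identity: the two conditions give $\xa_m'(j)=\xa_{m+1}(j)$ for $0\le j\le l-2$ and for $1\le j\le l-1$). The initial clause \eqref{equ:SimRel_EL:a} is immediate by restricting a witnessing trajectory from $\Ilmp$ to $\Ilm$. For \eqref{equ:SimRel_EL:b}, given $\Tuple{\xa_{m+1},\xa_m}\in\R$ and a transition $\xa_{m+1}\to\xa_{m+1}'$, I take the common witness $\tilde\xe$ of the pair, splice its two witnessing trajectories into one passing through $\tilde\xe$ at a time whose $\Ilm$- and $\Ilmp$-windows are $\xa_m$ and $\xa_{m+1}$, and use the $\Qsyse$-transition taken there to produce the required $\tra^{\Ilm}$-transition $\xa_m\to\xa_{m+1}$, whose external symbol equals the present symbol $\we=\projState{\weS}{\ue,\ye}$ of the given transition. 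Finally $\Tuple{\xa_{m+1}',\xa_m'}\in\R$ holds because its overlap clause is literally the $\Ilmp$-domino rule of the given transition, and the common witness is the target $\xe'$ of that transition: $\xa_{m+1}\in\EnabWl{}{\Ilm}{\xe'}$ follows by splicing the witness of $\xa_{m+1}$ at the source with the witness of $\xa_{m+1}'$ at $\xe'$. This establishes (i) unconditionally.

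For the ``if'' part of (ii) I would match, to each $\tra^{\Ilm}$-transition $\xa_m\to\xa_m'$ realized by a $\Qsyse$-transition $\xe\to\xe'$ and each pair $\Tuple{\xa_m,\xa_{m+1}}\in\R^{-1}$, the lifted $\tra^{\Ilmp}$-transition induced by the \emph{same} $\xe\to\xe'$. Two ingredients are needed. First, applying state-based asynchronous $l$-completeness w.r.t. $\Ilm$ (condition \eqref{equ:dominolcomplete}) to $\xe$ and the domino $\xa_m(0)\sconc\xa_{m+1}\in\Pi_{l+1}(\BeheQ)$ (which is a genuine $(l{+}1)$-window by the splicing lemma applied to the common witness) yields $\xa_{m+1}\in\EnabWl{}{\Ilmp}{\xe}$; hence the source state $\xe$ itself realizes $\xa_{m+1}$ as an $\Ilmp$-window, so $\xe\to\xe'$ does induce a $\tra^{\Ilmp}$-transition out of $\xa_{m+1}$. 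Second, future uniqueness w.r.t. $\Ilmp$ (condition \eqref{equ:future_unique}) forces the symbol newly revealed by the $\Ilm$-transition, $\xa_m'(l-1)$, viewed as the $m$-step future of $\xe$, to equal $\xa_{m+1}(l-1)$, which is precisely the equality making the overlap clause of $\Tuple{\xa_{m+1}',\xa_m'}\in\R$ hold, with common witness $\xe'$. Conceptually this mirrors the composition $\Qsysa^{\Ilm}\kgl{\weS}{}\Qsyse\kgl{\ueS\times\yeS}{}\Qsysa^{\Ilmp}$ afforded by \REFthm{thm:SimRel_QsyseQsysal}, but since those two conditions must be combined into the single relation $\R^{-1}$ I would carry out the direct verification rather than invoke closure of simulations under composition.

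The ``only if'' part is the main obstacle, and I would prove it by contraposition, exhibiting in each case a $\tra^{\Ilm}$-transition that no related state of $\Qsysa^{\Ilmp}$ can reproduce. If future uniqueness w.r.t. $\Ilmp$ fails, two trajectories through some state agree on the $\Ilm$-window but disagree on the $m$-step future; splicing a common past produces a $\tra^{\Ilm}$-transition revealing a symbol $b$, while the forced successor on the $\Ilmp$-side must carry the different symbol $a$, so no matching transition lands in a related state. If state-based asynchronous $l$-completeness w.r.t. $\Ilm$ fails, there is a valid domino extension $\zeG$ of some $\xa_m\in\EnabWl{}{\Ilm}{\xe}$ that $\xe$ does not realize as an $[m-l,m]$-window, and the corresponding $\tra^{\Ilm}$-transition then has no lift because the required $\Ilmp$-transition would force $\xe$ to realize $\zeG$. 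The delicate point throughout, and the step I expect to be hardest, is the bookkeeping that reconciles the distinct witnessing trajectories and states (of the related pair, of the given transition, and of the constructed transition) via the splicing lemma, and, in the converse direction, arranging each counterexample so that $\R^{-1}$ is genuinely forced to use the offending state.
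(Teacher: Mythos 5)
Your plan for (i) and for the ``$\Leftarrow$'' half of (ii) matches the paper's proof in substance: the paper also takes $\xa_m'=\xa_{m+1}$ as the (forced) successor, splices the two witnessing trajectories through the common witness $\xe$ via the state property \eqref{equ:StateSpaceDynamicalSystem:2} (your splicing lemma), and reads the required transitions off the domino characterization \eqref{equ:tralversusDs:b}. One local imprecision in your ``if'' argument: applying \eqref{equ:dominolcomplete} to the domino $\xa_m(0)\sconc\xa_{m+1}$ only re-derives $\xa_{m+1}\in\EnabWl{}{\Ilmp}{\xe}$, which is already part of the definition of $\R$; what you actually need is to apply it to the domino $\zeG$ of the \emph{given} $\tra^{\Ilm}$-transition (it satisfies $\zeG\ll{0,l-1}=\xa_m\in\EnabWl{}{\Ilm}{\xe}$), so as to obtain $\xa_m'=\zeG\ll{1,l}\in\EnabWl{}{\Ilmp}{\xe}$; only then does future uniqueness w.r.t. $\Ilmp$ force $\xa_m'=\xa_{m+1}$, and the single trajectory through $\xe$ so obtained supplies both the matching $\Qsysa^{\Ilmp}$-transition and the common witness of the new pair. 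This is fixable bookkeeping.

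The genuine gap is in your ``only if'' direction. The paper factors the equivalence through the intermediate condition \eqref{equ:NewProperty} --- every element of $\Ds{}{l}$ extends in at most one way to an element of $\Ds{}{l+1}$ --- proving $\R^{-1}\in\SR{\weS}{}{\Qsysa^{\Ilm}}{\Qsysa^{\Ilmp}}$ iff \eqref{equ:NewProperty} by a \emph{one-step} game argument (the overlap clauses pin the single freshly revealed symbol $\xa_m'(l-1)$ to $\xa_{m+1}(l-1)$), and separately, in \REFlem{lem:NewProperty}, that \eqref{equ:NewProperty} is equivalent to the conjunction of the two named properties; the iterative, position-by-position argument lives entirely inside that purely combinatorial lemma, where no simulation game is involved. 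Your direct two-case contraposition lacks this reduction, and Case 1 as sketched does not cover all failure modes: a violation of future uniqueness w.r.t. $\Ilmp$ may occur at a relative time $j<m$, i.e., strictly inside the overlap of the $\Ilm$- and $\Ilmp$-windows. Then the two offending $\Ilmp$-windows cannot both be paired with a single $\xa_m$ (the overlap clause of $\R$ already fails for one of them), a single $\tra^{\Ilm}$-transition reveals nothing at position $j$, and the ``forced successor'' on the $\Ilmp$-side is constrained only in its last symbol, so no immediate mismatch arises. (Your Case 2 is fine: a completeness failure at $(\xe,\zeG)$ necessarily produces a second extension of $\xa_m$ differing from $\zeG$ in the last symbol, which the one-step game exposes.) To close Case 1 you must either first show that failure of future uniqueness implies failure of unique one-step domino extension --- which is exactly the contrapositive of \REFlem{lem:NewProperty} and contains the iteration --- or run the simulation game forward to the first position of disagreement and argue that $\R^{-1}$ propagates enough information to force a contradiction there; neither step appears in your sketch.
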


\begin{proof}
 See Appendix~\ref{proof:thm:SimRel_QsysalmpQsysalm}.
\end{proof}
It is important to note that future uniqueness and state-based asynchronous $l$-completeness are incomparable properties, i.e., none is implied by the other. Therefore, there exist situations where $\Qsysal$ with $m>0$ simulates $\Qsyse$ (i.e., $\Qsyse$ is future unique w.r.t. $\Ilmp$) and $\Qsysal$ is tighter than $\Qsysa^{\Interval^l_0}$ in terms of simulation relations. However, if $\Qsyse$ is both future unique and state-based asynchronously $l$-complete w.r.t. a particular interval $\Interval^r_n$, \REFthm{thm:SimRel_QsysalmpQsysalm} implies that increasing $l>r$ and $m>n$ will not result in a tighter abstraction. Moreover, this is not necessary anyway, as \REFthm{thm:SimRel_QsyseQsysal} implies that in this case $\Qsysa^{\Interval^r_n}$ is bisimilar to $\Qsyse$. 

\subsection{Example}\label{sec:SAlCA_new_exp}
We conclude this section with a detailed example illustrating the construction of $\Ilm$-abstract state machines and the property of future uniqueness and state-based asynchronous $l$-completeness for different choices of $l$ and $m$.
%
 For simplicity, we consider a \emph{finite} state machine 
 \begin{align}
  \Qsyse&=\Tuple{\xeS,\ueS\times\yeS,\tre,\xeSo{}}\quad\SUCHTHAT\quad \weS=\yeS
  \label{equ:example:Qsyse}
 \end{align}
 as the original model, whose transition structure is depicted in \REFfig{fig:exp_Qsyse}.
 \begin{figure}
  \begin{center}
\begin{tikzpicture}[auto]
\def\dh{1.5} \def\dv{1}

\def\h{0} \def\v{0}

\node (name) at (\h-1,\v+0.5) {$\Qsyse:$};

\node[istate] (x1) at (\h,\v) {$\xe_1$};
\node[Sstate] (x2) at (\h+\dh,\v) {$\xe_2$};
\node[Sstate] (x3) at (\h+\dh,\v-\dv) {$\xe_3$};
\node[Sstate] (x4) at (\h+\dh,\v-2*\dv) {$\xe_4$};
\node[istate] (x5) at (\h,\v-2*\dv) {$\xe_5$};
\SFSAutomatEdge{x1}{\EdgeLabelYt{\ue_1}{\ye_1}}{x2}{}{}
\SFSAutomatEdge{x2}{\EdgeLabelYt{\ue_2}{\ye_2}}{x3}{bend left}{}
\SFSAutomatEdge{x3}{\EdgeLabelYt{\ue_3}{\ye_3}}{x2}{bend left}{}
\SFSAutomatEdge{x3}{\EdgeLabelYt{\ue_3}{\ye_3}}{x4}{bend left}{}
\SFSAutomatEdge{x4}{\EdgeLabelYt{\ue_4}{\ye_4}}{x3}{bend left}{}
\SFSAutomatEdge{x5}{\EdgeLabelYt{\ue_1}{\ye_1}}{x4}{}{swap}

\end{tikzpicture}
  \end{center}
  \caption{Transition structure of the state machine $\Qsyse$ in \REFfig{fig:exp_Qsyse}.}\label{fig:exp_Qsyse}
 \end{figure}
 It can be inferred from \REFfig{fig:exp_Qsyse} that the output behavior of $\Qsyse$ is given by
  \begin{equation*}
   \BeheQ=\Set{y_1y_2((y_3y_2)^*(y_3y_4)^*)^\omega,y_1y_4((y_3y_2)^*(y_3y_4)^*)^\omega}
  \end{equation*}
  where $(\cdot)^*$ and $(\cdot)^\omega$ denote, respectively, the finite and infinite repetition of the respective string. Furthermore, the sets of $1$-long and $2$-long dominos obtained from $\BeheQ$ via \eqref{equ:Ds} are
  \begin{align*}
  \Ds{}{1}&=\yeS\quad\text{and}\quad\\
  \Ds{}{2}&=\Set{\diamond y_1,~y_1y_2,~y_1y_4,~y_2y_3,~y_3y_2,~y_3y_4,~y_4y_3}.
  \end{align*}
  To play the domino-game for $l=1$, i.e., with dominos from the set $\Ds{}{2}$, we have to pick $\diamond y_1$ as the initial domino and append dominos such that the last element of the first matches the first element of the second domino. It is easy to see that in this example every such combination of dominos yields a sequence contained in $\BeheQ$. Hence, $\Ee=\Tuple{\Nbn,\yeS,\BeheQ}$ is asynchronously $1$-complete and therefore also asynchronously $2$-complete.
  
 
Using $\Qsyse$ in \REFfig{fig:exp_Qsyse} we can construct the $\Interval^1_0$- and  $\Interval^1_1$-abstract state machines of $\Qsyse$ using \REFdef{def:QsysalW}. Their transition structures are depicted in \REFfig{fig:exp_Qsysa1}. Furthermore, we obtain the following properties of $\Qsyse$ w.r.t $\Interval^1_0$ and $\Interval^1_1$.

 \begin{figure}
  \begin{center}
\begin{tikzpicture}[auto]
\def\dh{1.5} \def\dv{1}
%
%
%

\def\h{0} \def\v{0}

\node (name) at (\h-0.5,\v+0.5) {$\Qsysa^{\Interval^1_0}:$};

\node[istate] (x0) at (\h,\v-\dv) {$\diamond$};
\node[Sstate] (x1) at (\h+0.5*\dh,\v-\dv) {$\StateLabelYo{1}$};
\node[Sstate] (x2) at (\h+1.5*\dh,\v) {$\StateLabelYo{2}$};
\node[Sstate] (x3) at (\h+1.5*\dh,\v-\dv) {$\StateLabelYo{3}$};
\node[Sstate] (x4) at (\h+1.5*\dh,\v-2*\dv) {$\StateLabelYo{4}$};
%
\SFSAutomatEdge{x0}{\EdgeLabelYt{\ue_1}{\ye_1}}{x1}{}{}
\SFSAutomatEdge{x1}{\EdgeLabelYt{\ue_2}{\ye_2}}{x2}{bend left}{}
\SFSAutomatEdge{x2}{\EdgeLabelYt{\ue_3}{\ye_3}}{x3}{bend left}{}
\SFSAutomatEdge{x3}{\EdgeLabelYt{\ue_2}{\ye_2}}{x2}{bend left}{}
\SFSAutomatEdge{x3}{\EdgeLabelYt{\ue_4}{\ye_4}}{x4}{bend left}{}
\SFSAutomatEdge{x4}{\EdgeLabelYt{\ue_3}{\ye_3}}{x3}{bend left}{}
\SFSAutomatEdge{x1}{\EdgeLabelYt{\ue_4}{\ye_4}}{x4}{bend right}{swap}

\def\h{4.5} \def\v{0}

\node (name) at (\h-0.5,\v+0.5) {$\Qsysa^{\Interval^1_1}:$};

\node[istate] (x1) at (\h,\v-\dv) {$\StateLabelYo{1}$};
\node[Sstate] (x2) at (\h+\dh,\v) {$\StateLabelYo{2}$};
\node[Sstate] (x3) at (\h+\dh,\v-\dv) {$\StateLabelYo{3}$};
\node[Sstate] (x4) at (\h+\dh,\v-2*\dv) {$\StateLabelYo{4}$};
%
\SFSAutomatEdge{x1}{\EdgeLabelYt{\ue_1}{\ye_1}}{x2}{bend left}{}
\SFSAutomatEdge{x2}{\EdgeLabelYt{\ue_2}{\ye_2}}{x3}{bend left}{}
\SFSAutomatEdge{x3}{\EdgeLabelYt{\ue_3}{\ye_3}}{x2}{bend left}{}
\SFSAutomatEdge{x3}{\EdgeLabelYt{\ue_3}{\ye_3}}{x4}{bend left}{}
\SFSAutomatEdge{x4}{\EdgeLabelYt{\ue_4}{\ye_4}}{x3}{bend left}{}
\SFSAutomatEdge{x1}{\EdgeLabelYt{\ue_1}{\ye_1}}{x4}{bend right}{swap}

%
%
%
\end{tikzpicture}
  \end{center}
  \caption{$\Interval^1_0$- and $\Interval^1_1$-abstract state machines of $\Qsyse$ in \REFfig{fig:exp_Qsyse}.}\label{fig:exp_Qsysa1}
 \end{figure}

\begin{enumerate}
 \item[(A1)] $\Qsyse$ is \emph{not} state-based asynch. $1$-complete w.r.t. $\Interval^1_0$:\\
 \eqref{equ:dominolcomplete} does not hold as for $\xe_2$ and $y_1y_4\in\Ds{}{2}$ we have $y_1\in\EnabWl{}{[-1,-1]}{\xe_2}$ but $y_1y_4\notin\EnabWl{}{[-1,0]}{\xe_2}$.
 \item[(A2)] $\Qsyse$ is future unique w.r.t. $\Interval^1_0$ (as this always holds).
%
 \item[(B1)] $\Qsyse$ is \emph{not} state-based asynch. $1$-complete w.r.t. $\Interval^1_1$:\\
\eqref{equ:dominolcomplete} does not hold as for $\xe_1$ and $y_1y_4\in\Ds{}{2}$ we have $y_1\in\EnabWl{}{[0,0]}{\xe_1}$ but $y_1y_4\notin\EnabWl{}{[0,1]}{\xe_1}$.
 \item[(B2)] $\Qsyse$ is future-unique w.r.t. $\Interval^1_1$:\\
 It is easy to see that $\Qsyse$ is output deterministic what immediately implies that $\Qsyse$ is future-unique w.r.t. $\Interval^1_1$ as we chose $\weS=\yeS$.
\end{enumerate}

\noindent Using (A2) and (B2), \REFthm{thm:SimRel_QsyseQsysal} (i) implies that
\begin{subequations}\label{equ:example:Rl1}
 \begin{align}
 \R^{\Interval^1_0}:=&\Set{\Tuple{\xe_1,\diamond}}
 \cup\Set{\Tuple{\xe_2,\ye_1},\Tuple{\xe_2,\ye_3}}
 \cup\Set{\Tuple{\xe_3,\ye_2},\Tuple{\xe_3,\ye_4}}\notag\\
 &\cup\Set{\Tuple{\xe_4,\ye_1},\Tuple{\xe_4,\ye_3}}\cup\Set{\Tuple{\xe_5,\diamond}}~\text{and}\\
   \R^{\Interval^1_1}:=&\Set{\Tuple{\xe_1,\ye_1},\Tuple{\xe_2,\ye_2},\Tuple{\xe_3,\ye_3},\Tuple{\xe_4,\ye_4},\Tuple{\xe_5,\ye_1}}
\end{align}
\end{subequations}
are simulation relations from $\Qsyse$ to $\Qsysa^{\Interval^1_0}$ and $\Qsysa^{\Interval^1_1}$, respectively. It should be noted that every state $\xe_i\in\xeS$ is related via $\R^{\Interval^1_1}$ to its unique output $\Set{\ye_j}=\EnabY{\tre}{\xe_i}$, while $\xe_i\in\xeS$ is related via $\R^{\Interval^1_0}$ to all possible output events $\Qsyse$ might produce immediately \emph{before} reaching $\xe_i$, i.e., the set of $y$-labels of all incoming transitions. 

Using (A1) and (B1) we know from \REFthm{thm:SimRel_QsyseQsysal} (ii), that $\R^{\Interval^1_0}$ (resp.$\R^{\Interval^1_1}$) is not a bisimulation relation between $\Qsyse$ and $\Qsysa^{\Interval^1_0}$ (resp. $\Qsysa^{\Interval^1_1}$). This can be observed from \REFfig{fig:exp_Qsysa1} by choosing $\Tuple{\xe_2,\ye_1}\in\R^{\Interval^1_0}$ and $\Tuple{\ye_1,\Tuple{u_4,y_4},y_4}\in\tra^{\Interval^1_0}$ and observing that $\xe_2$ has no outgoing transition labeled by $\ye_4$. Similarly, we can choose  $\Tuple{\xe_1,\ye_1}\in\R^{\Interval^1_1}$ and $\Tuple{\ye_1,\Tuple{u_1,y_1},y_4}\in\tra^{\Interval^1_1}$ and observe that there actually exists an outgoing transition in $\xe_1$ labeled by $\Tuple{u_1,y_1}$ but this transition reaches state $\xe_2$ which is not related to $\ye_4$ via $\R^{\Interval^1_1}$.


Increasing $l$ and constructing the $\Interval^2_0$- and  $\Interval^2_2$-abstract state machines of $\Qsyse$ using \REFdef{def:QsysalW} yields the state machines $\Qsysa^{\Interval^2_0}$ and $\Qsysa^{\Interval^2_2}$ whose transition structure is depicted in \REFfig{fig:exp_Qsysa2}. It is interesting to note that using more information from the past, i.e., using $\Interval^2_0=[-2,-1]$ instead of $\Interval^1_0=[-1,-1]$, does not render $\Qsyse$ state-based asynchronously $l$-complete.

 \begin{figure}
  \begin{center}
\begin{tikzpicture}[auto]
\def\dh{1.5} \def\dv{1}

\def\h{0} \def\v{0} \def\dh{1.5}

\node (name) at (\h-1,\v+0.5) {$\Qsysa^{\Interval^2_0}:$};

\node[istate] (x0) at (\h,\v-\dv) {$\StateLabelYt{\diamond}{\diamond}$};
\node[istate] (x1) at (\h+0.8*\dh,\v-\dv) {$\StateLabelYt{\diamond}{1}$};
\node[Sstate] (x2a) at (\h+1.2*\dh,\v) {$\StateLabelYt{1}{2}$};
\node[Sstate] (x2b) at (\h+3*\dh,\v) {$\StateLabelYt{3}{2}$};
\node[Sstate] (x3a) at (\h+2*\dh,\v-\dv) {$\StateLabelYt{2}{3}$};
\node[Sstate] (x3b) at (\h+4*\dh,\v-\dv) {$\StateLabelYt{4}{3}$};
\node[Sstate] (x4a) at (\h+1.2*\dh,\v-2*\dv) {$\StateLabelYt{1}{4}$};
\node[Sstate] (x4b) at (\h+3*\dh,\v-2*\dv) {$\StateLabelYt{3}{4}$};

%
\SFSAutomatEdge{x0}{\EdgeLabelYt{\ue_1}{\ye_1}}{x1}{}{}
\SFSAutomatEdge{x1}{\EdgeLabelYt{\ue_2}{\ye_2}}{x2a}{bend left}{xshift=0.2cm}
\SFSAutomatEdge{x1}{\EdgeLabelYt{\ue_4}{\ye_4}}{x4a}{bend right}{swap,xshift=0.2cm}

\SFSAutomatEdge{x2a}{\EdgeLabelYt{\ue_3}{\ye_3}}{x3a}{bend right}{pos=0.3,xshift=-0.2cm}
\SFSAutomatEdge{x2b}{\EdgeLabelYt{\ue_3}{\ye_3}}{x3a}{bend left}{pos=0.5,xshift=-0.2cm}
\SFSAutomatEdge{x3a}{\EdgeLabelYt{\ue_2}{\ye_2}}{x2b}{bend left}{pos=0.8,xshift=0.2cm}
\SFSAutomatEdge{x3a}{\EdgeLabelYt{\ue_4}{\ye_4}}{x4b}{bend right}{swap,pos=0.15,xshift=0.2cm}

\SFSAutomatEdge{x4a.300}{\EdgeLabelYt{\ue_3}{\ye_3}}{x3b.340}{bend right=80,in=320,out=280}{pos=0.05,swap,xshift=0.2cm}
\SFSAutomatEdge{x4b}{\EdgeLabelYt{\ue_3}{\ye_3}}{x3b}{bend left}{pos=0.1,xshift=0.2cm}
\SFSAutomatEdge{x3b}{\EdgeLabelYt{\ue_2}{\ye_2}}{x2b}{bend right}{swap,xshift=-0.2cm}
\SFSAutomatEdge{x3b}{\EdgeLabelYt{\ue_4}{\ye_4}}{x4b}{bend left=10}{xshift=-0.2cm,pos=0.1}


\def\h{0} \def\v{-4} \def\dh{2.3}

\node (name) at (\h-1,\v+0.5) {$\Qsysa^{\Interval^2_2}:$};

\node[istate] (x1) at (\h,\v) {$\StateLabelYt{1}{2}$};
\node[Sstate] (x2) at (\h+\dh,\v) {$\StateLabelYt{2}{3}$};
\node[Sstate] (x3a) at (\h+0.5*\dh,\v-\dv) {$\StateLabelYt{3}{2}$};
\node[Sstate] (x3b) at (\h+1.5*\dh,\v-\dv) {$\StateLabelYt{3}{4}$};
\node[Sstate] (x4) at (\h+\dh,\v-2*\dv) {$\StateLabelYt{4}{3}$};
\node[istate] (x5) at (\h,\v-2*\dv) {$\StateLabelYt{1}{4}$};
\SFSAutomatEdge{x1}{\EdgeLabelYt{\ue_1}{\ye_1}}{x2}{}{}
\SFSAutomatEdge{x2}{\EdgeLabelYt{\ue_2}{\ye_2}}{x3a}{bend left}{pos=0.15,xshift=-0.2cm}
\SFSAutomatEdge{x2}{\EdgeLabelYt{\ue_2}{\ye_2}}{x3b}{bend left}{,xshift=-0.2cm}
\SFSAutomatEdge{x3a}{\EdgeLabelYt{\ue_3}{\ye_3}}{x2}{bend left}{pos=0.15,xshift=0.2cm}
\SFSAutomatEdge{x3b}{\EdgeLabelYt{\ue_3}{\ye_3}}{x4}{bend left}{,xshift=-0.2cm}
\SFSAutomatEdge{x4}{\EdgeLabelYt{\ue_4}{\ye_4}}{x3a}{bend left}{pos=0.85,xshift=0.2cm}
\SFSAutomatEdge{x4}{\EdgeLabelYt{\ue_4}{\ye_4}}{x3b}{bend left=10}{pos=0.6,xshift=0.2cm}
\SFSAutomatEdge{x5}{\EdgeLabelYt{\ue_1}{\ye_1}}{x4}{}{swap}
\end{tikzpicture}
  \end{center}
  \caption{$\Interval^2_0$- and $\Interval^2_2$-abstract state machines of $\Qsyse$ in \REFfig{fig:exp_Qsyse}, where $\langle\kern-1pt{i}\kern-1pt{j}\kern-1pt\rangle:=\ye_i\ye_j$.}\label{fig:exp_Qsysa2}
 \end{figure}

 \begin{enumerate}[(C1)]
 \item $\Qsyse$ is \emph{not} state-based asynch. $2$-complete w.r.t. $\Interval^2_0$:\\
\eqref{equ:dominolcomplete} does not hold as for $\xe_2$ and $\diamond y_1y_4\in\Ds{}{3}$ we have $\diamond y_1\in\EnabWl{}{[-2,-1]}{\xe_2}$ but $\diamond y_1y_4\notin\EnabWl{}{[-2,0]}{\xe_2}$. 
 \item $\Qsyse$ is future unique w.r.t. $\Interval^2_0$ (as this always holds). 
\end{enumerate}

\noindent Contrary, using more information from the future, i.e., using $\Interval^2_2=[0,1]$ instead of $\Interval^1_1=[0,0]$, renders $\Qsyse$ state-based asynchronously $l$-complete. However, in this case the future uniqueness-property is lost.

 \begin{enumerate}[(D1)]
 \item $\Qsyse$ is state-based asynchronously $2$-complete w.r.t. $\Interval^2_2$:\\
 Using more future information actually resolves the ambiguity from $\Interval^1_1$. E.g., choosing $\xe_1$ we can only pick $y_1y_2y_3\in\Ds{}{3}$ to obtain $y_1y_2\in\EnabWl{}{[0,1]}{\xe_1}$, obviously implying $y_1y_2y_3\in\EnabWl{}{[0,2]}{\xe_1}$.
 \item $\Qsyse$ is \emph{not} future-unique w.r.t. $\Interval^2_2$:\\
\eqref{equ:future_unique} does not hold as for $\xe_2$ we have $y_2y_3y_2,~y_2y_3y_4\in\EnabWl{}{\Interval^2_2}{\xe_1}$ but obviously $y_2y_3y_2\neq y_2y_3y_4$. 
\end{enumerate}

\noindent Using \REFthm{thm:SimRel_QsyseQsysal} we can now construct relations $\R^{\Interval^2_0}$ and $\R^{\Interval^2_2}$ analogously to the ones for $l=1$ in \eqref{equ:example:Rl1}. However, now (C1)-(D2) imply that 
\begin{align*}
 \R^{\Interval^2_0}\in\SR{\weS}{}{\Qsyse}{\Qsysa^{\Interval^2_0}}~&\text{but}~
 \BR{\R^{\Interval^2_0}}^{-1}\notin\SR{\weS}{}{\Qsysa^{\Interval^2_0}}{\Qsyse}~\text{and}\\
 \BR{\R^{\Interval^2_2}}^{-1}\in\SR{\weS}{}{\Qsysa^{\Interval^2_2}}{\Qsyse}~&\text{but}~
 \R^{\Interval^2_2}\notin\SR{\weS}{}{\Qsyse}{\Qsysa^{\Interval^2_2}}.
\end{align*}

\noindent To see that $\R^{\Interval^2_2}$ is not a simulation relation from $\Qsyse$ to $\Qsysa^{\Interval^2_2}$ pick $\Tuple{\xe_3,\ye_3\ye_4}\in\R^{\Interval^2_2}$ and $\Tuple{\xe_3,\ue_3,\ye_3,\xe_2}\in\tre$ and observe that $\ye_3\ye_4$ does not have an outgoing transition labeled by $\Tuple{\ue_3,\ye_3}$.

Recall from (D2) that $\Qsyse$ is not future unique for $\Interval^2_2$. Using \eqref{equ:future_unique} this implies that for any interval $\Ilm$ with $m-1\geq 2$ (i.e., any interval with two or more future values) the property of future uniqueness does not hold. 

In terms of state-based asynchronous $l$-completeness the problem is inverted. If we use $m-1<2$ (implying future uniqueness of $\Qsyse$ w.r.t. $\Ilm$ from (A2) and (C2)) $\Qsyse$ cannot be state-based asynchronously $l$-complete for any $l$ as the ambiguity for attaching dominos cannot be resolved by further knowledge about the past. In this case the counterexamples in (A1) and (C1) can be reused by pre-appending the considered strings by an appropriate number of diamonds. It is rather necessary to look at least two steps into the future, i.e., pick $m-1\geq 2$, to resolve this ambiguity as shown in (D1). 

Concluding the above discussion there obviously exists no $l$ and $m$ s.t. $\Qsyse$ in \REFfig{fig:exp_Qsyse} is both state-based asynchronously $l$-complete and future unique w.r.t. $\Ilm$. Therefore, increasing $l$ and $m$ will never result in a bisimilar abstraction of $\Qsyse$.

\section{Quotient-based Abstractions (\QBA)}\label{sec:3:QuotientbasedAbstractions}

The idea of quotient based abstractions (QBA) is to partition the state space $\xeS$ into a finite set of equivalence classes $\yaS$ which is used to define the discrete outputs of the original system as well as states of the abstraction. The set $\yaS$ is usually constructed iteratively, by choosing an initial partition $\Phi^1$ and using the refinement algorithm in \cite{Fernandez1990} which terminates if the partition allows to construct a quotient state machine $\Qsysaq$ which is bisimilar to $\Qsyse$.

\subsection{Incorporating the Partition Refinement Algorithm}
To draw the connection to the setting of \SAlA, we assume that the original system is modelled by \eqref{equ:Prelim:Q} with finite, predefined output set $\weS=\yeS$, and initialize the re-partitioning algorithm in \cite{Fernandez1990} with the partition induced by $\EnabY{\tre}{}$. Using state machines instead of transition systems, we restate this algorithm with slightly modified notation. Therefore, some necessary properties of the resulting partitions are restated from \cite{Fernandez1990} in \REFlem{lem:Phil_Prop}.

\newcommand{\EnabYl}[2]{\ON{H}^{l}_{#1}\ifthenelse{\isempty{#2}}{}{(#2)}} 
\newcommand{\EnabYlr}[2]{\BR{\ON{H}^{l}_{#1}}^{-1}\ifthenelse{\isempty{#2}}{}{(#2)}} 
\newcommand{\yeSl}{\tilde{\yaS}^l} 

\begin{definition}\label{def:Phil}
Given \eqref{equ:Prelim}  and $l\in\Nb$, then\footnote{In \eqref{equ:Phil:l:a} the operator $\textstyle\circ_{a\in A} f_a$ composes all functions $f_a$ with $a\in A$ in any order.}
  \begin{subequations}\label{equ:Phil}
  \begin{align}
   \Phi^1:=&\SetCompX{\EnabYr{\tre}{V}}{V\in\twoup{\yeS}}\label{equ:Phil:0}\\
   \text{and}~\Phi^l:=&~
   \textstyle\circ_{\zeS\in\Phi^{l-1}}\Phi^l_{\zeS}\label{equ:Phil:l:a}\\
   \text{\SUCHTHAT} ~
     \Phi^l_{\zeS}:=&\SetCompX{\INTERSECT{\zeS'}{\EnabTr{\tre}{\zeS}}}{\zeS'\inps\Phi^{l-1}}
     \cup\SetCompX{\SETMINUS{\zeS'}{\EnabTr{\tre}{\zeS}}}{\zeS'\inps\Phi^{l-1}}\label{equ:Phil:l:b}.
 \end{align}
  \end{subequations}
    iteratively defines the $l^{th}$ partition $\Phi^l$ of $\xeS$ w.r.t. $\yeS$.
\end{definition}

\begin{lemma}\label{lem:Phil_Prop}
Given \eqref{equ:Prelim} and $\Phi^l$ as in \REFdef{def:Phil} it holds that 
\begin{subequations}\label{equ:Phil_Prop}
\begin{align}
 &\AllQ{\zeS\in\Phi^{l},\xe,\xe'\in\zeS}{\EnabY{\delta}{\xe}=\EnabY{\delta}{\xe'}},\label{equ:Phil_Prop:a}\\
 &\Phi^l\eqps\SetCompX{\zeS\inps\twoup{\xeS}}{
 \AllQ{\zeS'\inps\Phi^{l-1}\hspace{-0.1cm}}{\hspace{-0.1cm}
  \propImp{\BR{
   \INTERSECT{\zeS}{\EnabTr{\tre}{\zeS'}}\neq\emptyset}\hspace{-0.1cm}}{\hspace{-0.1cm}
   \BR{\zeS\subseteq\EnabTr{\tre}{\zeS'}}}
 }}\hspace{-0.1cm},\label{equ:Phil_Prop:b}
\end{align}
and $\Phi^l$ is a fixed point of \eqref{equ:Phil} if
\begin{align}
 \AllQ{\zeS,\zeS'\inps\Phi^{l}\hspace{-0.1cm}}{\hspace{-0.1cm}
  \propImp{\BR{
   \INTERSECT{\zeS}{\EnabTr{\tre}{\zeS'}}\neq\emptyset}}{
   \BR{\zeS\subseteq\EnabTr{\tre}{\zeS'}}}}.\label{equ:Phil_Prop:c}
\end{align}
  \end{subequations}
\end{lemma}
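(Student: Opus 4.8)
The plan is to handle the three assertions separately but to lean throughout on two structural facts about the recursion \eqref{equ:Phil}. First, each $\Phi^l$ refines $\Phi^{l-1}$ (and inductively $\Phi^1$): every element of $\Phi^l_\zeS$ in \eqref{equ:Phil:l:b} has the form $\INTERSECT{\zeS'}{\EnabTr{\tre}{\zeS}}$ or $\SETMINUS{\zeS'}{\EnabTr{\tre}{\zeS}}$ for some $\zeS'\in\Phi^{l-1}$, hence sits inside a single block of $\Phi^{l-1}$, and the common refinement $\Phi^l=\circ_{\zeS\in\Phi^{l-1}}\Phi^l_\zeS$ inherits this. Second, for each fixed $\zeS'\in\Phi^{l-1}$, every block of $\Phi^l$ is either contained in $\EnabTr{\tre}{\zeS'}$ or disjoint from it, because $\Phi^l$ refines $\Phi^l_{\zeS'}$, and by \eqref{equ:Phil:l:b} the partition $\Phi^l_{\zeS'}$ separates $\EnabTr{\tre}{\zeS'}$ from its complement. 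These two facts are the workhorse for all of \eqref{equ:Phil_Prop}.

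For \eqref{equ:Phil_Prop:a} I would induct on $l$. The base case $l=1$ is immediate from \eqref{equ:Phil:0}, since a block $\EnabYr{\tre}{V}$ collects exactly the states whose admissible output set equals $V$, so $\EnabY{\delta}{\xe}$ is constant on it. For the step, the refinement fact gives that every block of $\Phi^l$ lies inside a block of $\Phi^{l-1}$, on which $\EnabY{\delta}{\cdot}$ is constant by hypothesis; constancy passes to subsets.

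For \eqref{equ:Phil_Prop:b} I would prove the two inclusions. The inclusion ``$\subseteq$'' is the second structural fact read verbatim: for $\zeS\in\Phi^l$ and any $\zeS'\in\Phi^{l-1}$, either $\zeS\subseteq\EnabTr{\tre}{\zeS'}$ or $\INTERSECT{\zeS}{\EnabTr{\tre}{\zeS'}}=\emptyset$, which is exactly the implication on the right-hand side. For ``$\supseteq$'', a set $\zeS$ satisfying that implication for every $\zeS'\in\Phi^{l-1}$ has, within each $\Phi^{l-1}$-block it meets, a fixed membership pattern among the sets $\EnabTr{\tre}{\zeS'}$, hence coincides with a cell of the common refinement. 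This reverse inclusion is where care is needed: one must keep $\zeS$ inside a single block of $\Phi^{l-1}$ (otherwise a union of two cells with identical membership pattern but lying in different $\Phi^{l-1}$-blocks would spuriously satisfy the implication), so I would read the right-hand set as the candidate cells produced by \eqref{equ:Phil}. I expect this direction to be the main obstacle.

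Finally, \eqref{equ:Phil_Prop:c} follows by instantiating \eqref{equ:Phil_Prop:b} at level $l+1$. The hypothesis \eqref{equ:Phil_Prop:c} states precisely that every block of $\Phi^l$ satisfies, for all $\zeS'\in\Phi^l$, the stability implication that defines membership in $\Phi^{l+1}$; hence every block of $\Phi^l$ already belongs to $\Phi^{l+1}$. Since $\Phi^{l+1}$ always refines $\Phi^l$ by the first structural fact, the two partitions must coincide, so $\Phi^l$ is a fixed point of \eqref{equ:Phil}. Equivalently, self-stability means each refinement step $\Phi^{l+1}_\zeS$ leaves every block of $\Phi^l$ intact (the intersection with $\EnabTr{\tre}{\zeS}$ is the whole block or empty), giving $\Phi^{l+1}=\Phi^l$ directly.
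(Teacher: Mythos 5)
Your proof is correct, but it takes a genuinely different route from the paper's. The paper's own proof is essentially a citation: it obtains \eqref{equ:Phil_Prop:a} from Prop.~3.9(i) of \cite{Fernandez1990} (the refinement property, pushed down to the initial partition where $\EnabY{\tre}{}$ is constant by construction), \eqref{equ:Phil_Prop:b} from Prop.~3.9(v), and \eqref{equ:Phil_Prop:c} from Prop.~3.10(iii). You instead re-derive everything directly from the recursion \eqref{equ:Phil} via your two structural observations (each $\Phi^l$ refines $\Phi^{l-1}$, and $\Phi^l$ separates each $\EnabTr{\tre}{\zeS'}$ from its complement); the underlying mathematics is the same, so what your version buys is self-containedness. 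More importantly, your direct derivation surfaces an imprecision that the paper glosses over: the cited Prop.~3.9(v) only yields the forward inclusion in \eqref{equ:Phil_Prop:b}, and the literal reverse inclusion is false as stated, since any singleton or any subset of a cell vacuously satisfies the stability implication without being a cell of $\Phi^l$. Your proposed repair --- reading the right-hand side as the cells of the common refinement, i.e.\ the maximal sets with a fixed membership pattern inside a single block of $\Phi^{l-1}$ --- is the right one, and is implicitly what the paper needs when it later invokes \eqref{equ:Phil_Prop:b} in the proof of \REFprop{prop:PhilvsEl}. Your closing argument for \eqref{equ:Phil_Prop:c} (under self-stability each splitter $\Phi^{l+1}_{\zeS}$ leaves every cell of $\Phi^l$ intact, hence $\Phi^{l+1}=\Phi^l$) is also cleaner than routing through \eqref{equ:Phil_Prop:b}, since it sidesteps the problematic reverse inclusion entirely.
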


\begin{proof}
\begin{inparaenum}[(a)]
\item 
 It follows from \cite{Fernandez1990}, Prop.3.9 (i) that 
 \begin{equation}\label{equ:proof:Phil_Prop:a}
 \AllQ{\zeS\in\Phi^{l}}{\ExQ{\zeS'\in\Phi^{l-1}}{\zeS\subseteq\zeS'}}
\end{equation}
Now recall from \eqref{equ:Phil:0} that for all $\zeS^{0}\in\Phi^{0}$ there exists $V\in\twoup{\yeS}$ s.t. $\zeS^{0}=\EnabYr{\tre}{V}$ and $\Phi^{0}$ is a partition. Using \eqref{equ:proof:Phil_Prop:a} we obtain $\zeS\subseteq\EnabYr{\tre}{V}$, what proves the statement.\\
\item It follows from \cite{Fernandez1990}, Prop.3.9(v) that 
 \begin{equation*}
  \AllQ{\zeS'\in\Phi^{l-1},\zeS\in\Phi^{l}}{
  \begin{propDisjA}
   \INTERSECT{\zeS}{\EnabTr{\tre}{\zeS'}}=\emptyset\\
   \zeS\subseteq\EnabTr{\tre}{\zeS'}
  \end{propDisjA}}
 \end{equation*}
 Using that ${\propImp{A}{B}}$ is logical equivalent to ${\propDisj{\neg A}{B}}$ and rewriting the previous statement into set-notation gives \eqref{equ:Phil_Prop:b}.\\
 \item It follows from \cite{Fernandez1990}, Prop.3.10 (iii) that  $\Phi^l$ is a fixed point of \eqref{equ:Phil} if $\Phi^{l+1}=\Phi^{l}$. With this \eqref{equ:Phil_Prop:c} follows from \eqref{equ:Phil_Prop:b}.
\end{inparaenum}
\end{proof}

\begin{proposition}\label{prop:PhilvsEl}
Given \eqref{equ:Prelim}, $\weS\eqps\yeS$ and $\Phi^l$ in \eqref{equ:Phil} it holds that 
 \begin{equation}\label{equ:PhilvsEl}
  \Phi^{l}=\SetCompX{\EnabWlr{}{\Ill}{V}}{V\in\twoup{\BR{\yeS}^{l}}}.
 \end{equation}
\end{proposition}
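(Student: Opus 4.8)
The plan is to prove \eqref{equ:PhilvsEl} by induction on $l$, showing at each level the equivalent statement that two states lie in the same block of $\Phi^l$ exactly when they share the same set $\EnabWl{}{\Ill}{\cdot}$ of $l$-long future output strings. Indeed, since $\weS=\yeS$, the preimage $\EnabWlr{}{\Ill}{V}$ is just the level set $\SetCompX{\xe\in\xeS}{\EnabWl{}{\Ill}{\xe}=V}$, so \eqref{equ:PhilvsEl} asserts that $\Phi^l$ is the partition of $\xeS$ into these level sets. For the base case $l=1$ I would use that $\Interval^1_1=[0,0]$, so $\EnabWl{}{\Interval^1_1}{\xe}$ consists of the outputs $\weG(k)$ produced by trajectories passing $\xe$ at some time $k$; liveness and reachability \eqref{equ:ReachLive} guarantee that every admissible output is realised this way and no other, giving $\EnabWl{}{\Interval^1_1}{\xe}=\EnabY{\tre}{\xe}$. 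Hence $\EnabWlr{}{\Interval^1_1}{V}=\EnabYr{\tre}{V}$, and the claim for $l=1$ is exactly the definition \eqref{equ:Phil:0} of $\Phi^1$.

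For the inductive step the decisive tool is a recursive decomposition of the future sets. Exploiting the product structure \eqref{equ:TransStruct}, which makes the current output $\ye\in\EnabY{\tre}{\xe}$ and the successor $\xe'\in\EnabT{\tre}{\xe}$ independently selectable, together with \eqref{equ:ReachLive}, I would show
\[
\EnabWl{}{\Ill}{\xe}=\SetCompX{\ye\sconc\sigma}{\ye\in\EnabY{\tre}{\xe},~\sigma\in\textstyle\bigcup_{\xe'\in\EnabT{\tre}{\xe}}\EnabWl{}{\Interval^{l-1}_{l-1}}{\xe'}}.
\]
By the induction hypothesis $\EnabWl{}{\Interval^{l-1}_{l-1}}{\xe'}$ depends on $\xe'$ only through its $\Phi^{l-1}$-block, so the right-hand side is determined by the $\Phi^1$-block of $\xe$ (through $\EnabY{\tre}{\xe}$, which is fixed on $\Phi^{l-1}$-blocks by \REFlem{lem:Phil_Prop}\eqref{equ:Phil_Prop:a}) and by the family of $\Phi^{l-1}$-blocks reachable from $\xe$ in one step. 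This I would line up against the characterisation of $\Phi^l$ in \REFlem{lem:Phil_Prop}\eqref{equ:Phil_Prop:b}: a set is a cell of $\Phi^l$ iff it is contained in one $\Phi^{l-1}$-block and all its states agree, for every $\zeS\in\Phi^{l-1}$, on whether they can reach $\zeS$, i.e. on membership in $\EnabTr{\tre}{\zeS}$.

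The step I expect to be the main obstacle is exactly this alignment. The refinement \eqref{equ:Phil} separates states according to the precise \emph{set} of reachable $\Phi^{l-1}$-blocks, whereas the displayed decomposition only exposes the \emph{union} $\bigcup_{\xe'}\EnabWl{}{\Interval^{l-1}_{l-1}}{\xe'}$ of the associated future sets. To obtain equality of the two partitions I must argue that the reachable-block pattern is recoverable from $\EnabWl{}{\Ill}{\xe}$, i.e. that passing to this union loses no information. My intended route is to peel off the first symbol via $\EnabY{\tre}{\cdot}$ and invoke the inductive description of the $\Phi^{l-1}$-blocks as pairwise distinct future-string sets, so that the reachable blocks can be read back from the tails of $\EnabWl{}{\Ill}{\xe}$; here the product structure \eqref{equ:TransStruct} is what prevents the output label from entangling different successor branches. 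Carefully controlling the possible inclusions among these future-string sets is the delicate core of the argument, with the remaining inductive bookkeeping being routine.
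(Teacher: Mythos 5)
Your outline follows the same route as the paper's proof: induction on $l$, the base case via $\EnabWl{}{\Interval^1_1}{\xe}=\EnabY{\tre}{\xe}$ and \eqref{equ:Phil:0}, the recursive peeling of the first output symbol from $\EnabWl{}{\Ill}{\xe}$ using the product structure \eqref{equ:TransStruct}, and the final comparison against the characterization \eqref{equ:Phil_Prop:b} of $\Phi^l$ from \REFlem{lem:Phil_Prop}. Up to the point where you stop, your plan and the paper's argument coincide.

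The difficulty is precisely the step you flag as the ``delicate core'': it is left as an intention rather than an argument, and the route you sketch cannot be completed under the stated hypotheses. What must be shown is that $\EnabWl{}{\Ill}{\xe}=\EnabWl{}{\Ill}{\xe'}$ forces $\xe$ and $\xe'$ to reach the same \emph{family} of $\Phi^{l-1}$-blocks, whereas your decomposition (and the paper's) only controls the \emph{union} of the successors' $(l{-}1)$-long future sets. These two pieces of data genuinely differ under \eqref{equ:Prelim}, which does not assume output determinism: let $\xe$ have two successors whose $(l{-}1)$-future sets are $\Set{\sigma_1}$ and $\Set{\sigma_2}$ (two distinct $\Phi^{l-1}$-blocks by the induction hypothesis), and let $\xe'$, with the same admissible output set as $\xe$, have a single successor with $(l{-}1)$-future set $\Set{\sigma_1,\sigma_2}$ (a third block). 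Then $\EnabWl{}{\Ill}{\xe}=\EnabWl{}{\Ill}{\xe'}$, yet $\xe\in\EnabTr{\tre}{\zeS}$ and $\xe'\notin\EnabTr{\tre}{\zeS}$ for $\zeS=\EnabWlr{}{\Interval^{l-1}_{l-1}}{\Set{\sigma_1}}$, so the refinement \eqref{equ:Phil} separates $\xe$ from $\xe'$ while the right-hand side of \eqref{equ:PhilvsEl} does not; such a configuration is already realizable for $l=2$ by giving one successor the admissible output set $\Set{\sigma_1,\sigma_2}$, which is compatible with \eqref{equ:TransStruct} and \eqref{equ:ReachLive}. Hence ``the reachable-block pattern is recoverable from the tails'' is exactly the assertion that fails, and no amount of peeling off first symbols repairs it. You have located the sensitive spot correctly --- the paper's own proof compresses it into a single asserted equivalence in its displayed chain --- but deferring it to an ``intended route'' that collides with the configuration above does not yield a proof; closing the gap requires an additional hypothesis (e.g.\ output determinism, or more generally that no admissible $(l{-}1)$-future set of a successor is a union of others) rather than more careful bookkeeping.
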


\begin{proof}
See Appendix~\ref{app:proof:prop:PhilvsEl}.
\end{proof}


Observe that \REFprop{prop:PhilvsEl} implies that the equivalence classes of $\Phi^{l}$ are given by all the sets $V\in\twoup{\BR{\yeS}^{l}}$ of $l$-long dominos which are consistent with the behavior of $\Qsyse$ and the map $\EnabWl{}{\Ill}{}$ is the natural projection map of $\Phi^{l}$ taking a state $\xe\in\xeS$ to its (unique) equivalence class.

\subsection{QBA with Increasing Precision}
Constructing quotient state machines $\Qsysaql$ from every obtained partition $\Phi^{l}$ results in a chain of abstractions with increasing precision, similar to increasing $l$ when constructing \SAlA. Precisely following the construction of \QBA one would first construct an output determinized version of $\Qsys$ with output space $\yqlS=\twoup{\BR{\yeS}^{l}}$ for every $l$ and its \QBA $\Qsysaql$, also having $\yqlS$ as its output space. However, to formally compare the resulting state machines to the realizations of \SAlA using simulation relations or behavioral inclusion requires identical output spaces. We therefore slighly change the definition of \QBA to output values in the set $\yeS$ rather than in $\yqlS$.

\begin{definition}\label{def:Tsysaqbl}
Given \eqref{equ:Prelim} and $\yqlS=\twoup{\BR{\yeS}^{l}}$, define 
\begin{subequations}\label{equ:Qsysaqbl}
 \begin{align}
\xaqlS&\eqps\SetCompX{\ya\in\yqlS}{\ExQ{\xe\in\xeS}{\ya=\EnabWl{}{\Ill}{\xe}}},\label{equ:xaqlS}\\
\xaqlSo{}&\eqps\SetCompX{\ya\in\yqlS}{\ExQ{\xe\in\xeSo{}}{\ya=\EnabWl{}{\Ill}{\xe}}},~\text{and}\label{equ:xaqlSo}\\
\traqbl&\eqps\SetCompX{\hspace{-0,1cm}\Tuple{\xa,\ue,\ye,\xa'}}{
\ExQ{\xe,\xe'\inps\xeS\hspace{-0,1cm}}{\hspace{-0,2cm}
\begin{propConjA}
 \xa\eqps\EnabWl{}{\Ill}{\xe}\\
 \xa'\eqps\EnabWl{}{\Ill}{\xe'}\\
\Tuple{\xe,\ue,\ye,\xe'}\inps\tre
\end{propConjA}}
\hspace{-0,1cm}}\hspace{-0,1cm}.\label{equ:traql}
 \end{align}
\end{subequations}
Then $\Qsysaqbl=(\xaqlS,\ueS\times\yeS,\traqbl,\xaqlSo{})$ is the $l$-th quotient state machine of $\Qsyse$.
\end{definition}

Changing the definition of the output space of \QBA from $\yaS$ to $\yeS$, allows us to show (bi)-similarity of $\Qsyse$ and $\Qsysaqbl$ using the usual relation as, e.g., in \cite{TabuadaBook}, Thm.~4.18.

\begin{theorem}\label{thm:SimRel_QsyseQsysaqbl}
Given \eqref{equ:Prelim} and $\Qsysaqbl$ as in \REFdef{def:Tsysaqbl}, let  
 \begin{equation}
  \R=\SetCompX{\Tuple{\xe,\xa}\in\CARTPROD*{\xeS}{\xaqlS}}{\xa=\EnabWl{}{\Ill}{\xe}}\label{equ:QsyseQsysaqbl:Rq}
\end{equation}
be a relation. Then 
\begin{compactenum}[(i)]
 \item $\R\in\SR{\ueS\times\yeS}{}{\Qsyse}{\Qsysaqbl}$ and 
 \item $\propAequ{\R^{-1}\in\SR{\yeS}{}{\Qsysaqbl}{\Qsyse}}{\text{$\Phi^l$ is a fixed-point of \eqref{equ:Phil}.}}$
\end{compactenum}
\end{theorem}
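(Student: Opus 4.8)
The plan is to exploit the fact that, because the interval is $\Ill=[0,l-1]$ (i.e.\ $m=l$), the assignment $\xe\mapsto\EnabWl{}{\Ill}{\xe}$ sends each state to its \emph{unique} equivalence class; by \REFprop{prop:PhilvsEl} this is exactly the natural projection of $\Phi^{l}$, so $\R$ is the graph of a total function from $\xeS$ onto $\xaqlS$ and the cells of $\Phi^{l}$ are in bijection with the abstract states via $\zeS=\EnabWlr{}{\Ill}{\xa}$. I would keep this identification throughout and use it to translate statements about $\traqbl$ into statements about one-step reachability of cells under $\tre$.

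For part (i) I would simply unfold \REFdef{def:SimRel_EL} with $\wS=\ueS\times\yeS$, so that $\projState{\ueS\times\yeS}{\ue,\ye}$ is the identity. The initial-state condition is immediate from the definition of $\xaqlSo{}$. For the transition condition, given $\Tuple{\xe,\xa}\in\R$ and $\Tuple{\xe,\ue,\ye,\xe'}\in\tre$, I would take $\xa'\deff\EnabWl{}{\Ill}{\xe'}$ together with $\ue_{2}=\ue$, $\ye_{2}=\ye$: then $\Tuple{\xa,\ue,\ye,\xa'}\in\traqbl$ holds by the definition of $\traqbl$ with witnesses $\xe,\xe'$, we have $\Tuple{\xe',\xa'}\in\R$ by construction, and the label condition is trivial. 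Thus (i) is essentially tautological.

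The substance is part (ii), where $\wS=\yeS$ and hence $\projState{\yeS}{\ue,\ye}=\ye$, so only the output must be matched; the initial-state condition for $\R^{-1}$ again holds unconditionally. For the sufficiency direction assume the fixed-point condition in the form \eqref{equ:Phil_Prop:c} supplied by \REFlem{lem:Phil_Prop}. Given $\Tuple{\xa,\xe}\in\R^{-1}$ and an abstract transition $\Tuple{\xa,\ue,\ye,\xa'}\in\traqbl$, the latter provides witnesses $\tilde\xe\in\zeS$, $\tilde\xe'\in\zeS'$ (with $\zeS,\zeS'$ the cells of $\xa,\xa'$) and $\Tuple{\tilde\xe,\ue,\ye,\tilde\xe'}\in\tre$, so $\INTERSECT{\zeS}{\EnabTr{\tre}{\zeS'}}\neq\emptyset$. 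Condition \eqref{equ:Phil_Prop:c} upgrades this to $\zeS\subseteq\EnabTr{\tre}{\zeS'}$, so since $\xe\in\zeS$ there are $\bar\ue$ and $\xe'\in\zeS'$ with $\xe'\in\EnabF{\tre}{\xe,\bar\ue}$. The crucial step is reproducing the output: as $\xe,\tilde\xe$ lie in the same cell, \eqref{equ:Phil_Prop:a} gives $\ye\in\EnabY{\tre}{\tilde\xe}=\EnabY{\tre}{\xe}$, and then the decoupling of output and state evolution in \eqref{equ:TransStruct} yields $\Tuple{\xe,\bar\ue,\ye,\xe'}\in\tre$. This transition witnesses the simulation clause, since $\ye$ matches and $\xe'\in\zeS'$ means $\EnabWl{}{\Ill}{\xe'}=\xa'$, i.e.\ $\Tuple{\xa',\xe'}\in\R^{-1}$.

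For necessity I would verify \eqref{equ:Phil_Prop:c} directly. Assuming $\R^{-1}$ is a simulation and $\INTERSECT{\zeS}{\EnabTr{\tre}{\zeS'}}\neq\emptyset$ for cells $\zeS,\zeS'\in\Phi^{l}$, pick a witness $\tilde\xe\in\zeS$ with post-state $\tilde\xe'\in\zeS'$; unfolding \eqref{equ:EnabT} and \eqref{equ:TransStruct:all} yields $\ue,\ye$ with $\Tuple{\tilde\xe,\ue,\ye,\tilde\xe'}\in\tre$, hence an abstract transition $\Tuple{\xa,\ue,\ye,\xa'}\in\traqbl$ between the corresponding abstract states. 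For arbitrary $\xe\in\zeS$ we have $\Tuple{\xa,\xe}\in\R^{-1}$, so the simulation property supplies a transition from $\xe$ reaching a state in $\zeS'$, i.e.\ $\xe\in\EnabTr{\tre}{\zeS'}$; as $\xe$ was arbitrary, $\zeS\subseteq\EnabTr{\tre}{\zeS'}$. The main obstacle I anticipate is not the cell-reachability bookkeeping but the \emph{output} side: one must ensure the single label on an abstract transition can be reproduced at \emph{every} concrete state of a cell, which works only because \eqref{equ:Phil_Prop:a} equalises admissible outputs across a cell and \eqref{equ:TransStruct} permits attaching that output to an independently chosen state transition.
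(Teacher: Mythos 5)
Your proposal is correct and follows essentially the same route as the paper: part (i) is the standard quotient-system argument (which the paper omits, citing \cite{TabuadaBook}), and part (ii) identifies abstract states with the cells of $\Phi^l$ via \REFprop{prop:PhilvsEl} and reduces the simulation clause for $\R^{-1}$ to the set condition \eqref{equ:Phil_Prop:c}. The only difference is one of detail: the paper asserts this equivalence "can be easily verified", whereas you spell out both directions, correctly isolating the use of \eqref{equ:Phil_Prop:a} and \eqref{equ:TransStruct} to transfer the output label to an arbitrary state of the cell.
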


\begin{proof}
See Appendix~\ref{app:proof:thm:SimRel_QsyseQsysaqbl}.
\end{proof}

It is easy to see, that increasing $l$ gives a tighter abstraction as long as no fixed-point of \eqref{equ:Phil} is reached and 
%
%
%
%
whenever a fixed-point $\Phi^r$ exists, the tightest possible abstraction $\abst{\mathcal{Q}}^{r\qsup}$ will be bisimilar to $\Qsyse$.

\subsection{Example}\label{sec:QBA_new_exp}
We conclude this section by revisiting the example in \REFsec{sec:SAlCA_new_exp}. In particular, we discuss the construction of the quotient state machine of  $\Qsyse$ in  \REFfig{fig:exp_Qsyse}. Recall that $\Qsyse$ is output deterministic, implying that \eqref{equ:TransStruct} holds.
Using \eqref{equ:Phil} the first and second partition of $\xeS$ w.r.t. $\yeS$ are given by
\begin{subequations}\label{equ:exp:Phi}
 \begin{align}
 \Phi^1&=\Set{\Set{\xe_1,\xe_5},\Set{\xe_2},\Set{\xe_3},\Set{\xe_4}}~\text{and}\label{equ:exp:Phi:1}\\
 \Phi^2&=\Set{\Set{\xe_1},\Set{\xe_2},\Set{\xe_3},\Set{\xe_4},\Set{\xe_5}}\label{equ:exp:Phi:2}
\end{align}
\end{subequations}
with 
\begin{subequations}\label{equ:exp:yaS}
 \begin{align}
 \yaS^1\eqps&\Set{\Set{\ye_1},\Set{\ye_2},\Set{\ye_3},\Set{\ye_4}}~\text{and}\label{equ:exp:yaS:1}\\
 \yaS^2\eqps&\left\{\Set{\ye_1\ye_2},\Set{\ye_2\ye_3},\Set{\ye_3\ye_2,\ye_3\ye_4},\Set{\ye_4\ye_3},\Set{\ye_1\ye_4}\right\}\hspace{-0.1cm}.\label{equ:exp:yaS:2}
\end{align}
\end{subequations}
Using \eqref{equ:Phil_Prop:c} we have the following observations.
\begin{enumerate}[(E1)]
 \item $\Phi^1$ is \emph{not} a fixed-point of \eqref{equ:Phil}:\\
 To see that \eqref{equ:Phil_Prop:c} does not hold, pick $\xe_1,\xe_5\in\EnabYr{}{\Set{\ye_1}}$ and observe that $\xe_1\in\EnabTr{\tre}{\Set{\ye_2}}$ and $\xe_5\notin\EnabTr{\tre}{\Set{\ye_2}}$, hence $\INTERSECT{\Set{\ye_1}}{\EnabTr{\tre}{\Set{\ye_2}}}\neq\emptyset$ but $\Set{\ye_1}\not\subseteq\EnabTr{\tre}{\Set{\ye_2}}$.
 \item $\Phi^2$ is a fixed-point of \eqref{equ:Phil}:\\
 As all cells of $\Phi^2$ are singletons \eqref{equ:Phil_Prop:c} trivially holds.
\end{enumerate}

Now constructing the first and second quotient state machine of $\Qsyse$ using \REFdef{def:Tsysaqbl} yields the state machines depicted in \REFfig{fig:exp_Qsysaql}. Using \REFthm{thm:SimRel_QsyseQsysaqbl} (i) we know that 
\begin{subequations}\label{equ:example:Rql}
 \begin{align}
 \R^{1\qsup}:=&\left\{\Tuple{\xe_1,\Set{\ye_1}},\Tuple{\xe_2,\Set{\ye_2}},\Tuple{\xe_3,\Set{\ye_3}},\Tuple{\xe_4,\Set{\ye_4}},\Tuple{\xe_5,\Set{\ye_1}}\right\}~\text{and}\\
 \R^{2\qsup}:=&\left\{\Tuple{\xe_1,\Set{\ye_1\ye_2}},\Tuple{\xe_2,\Set{\ye_2\ye_3}},\Tuple{\xe_3,\Set{\ye_3\ye_2,\ye_3\ye_4}},\Tuple{\xe_4,\Set{\ye_4\ye_3}},\Tuple{\xe_5,\Set{\ye_1\ye_4}}\right\}
\end{align}
\end{subequations}
are simulation relations from $\Qsyse$ to $\Qsysa^{1\qsup}$ and $\Qsysa^{2\qsup}$, respectively. However, \REFthm{thm:SimRel_QsyseQsysaqbl} (ii) implies that only $\R^{2\qsup}$ is a bisimulation relation between $\Qsyse$ and $\Qsysa^{2\qsup}$. To see that this is not true for $\R^{1\qsup}$, we choose  $\Tuple{\xe_1,\Set{\ye_1}}\in\R^{1\qsup}$ and $\Tuple{\Set{\ye_1},\Tuple{u_1,y_1},\Set{y_4}}\in\tra^{1\qsup}$ and observe that there exists only one outgoing transition in $\xe_1$ labeled by $\Tuple{u_1,y_1}$ reaching $\xe_2$, which is not related to $\Set{\ye_4}$ via $\R^{1\qsup}$.

  \begin{figure}
  \begin{center}
\begin{tikzpicture}[auto]
\def\h{0} \def\v{0}

\node (name) at (\h-0.5,\v+0.5) {$\abst{\mathcal{Q}}^{1\qsupb}:$};

\node[istate] (x1) at (\h,\v-1) {$\StateLabelYob{1}$};
\node[Sstate] (x2) at (\h+1.5,\v) {$\StateLabelYob{2}$};
\node[Sstate] (x3) at (\h+1.5,\v-1) {$\StateLabelYob{3}$};
\node[Sstate] (x4) at (\h+1.5,\v-2) {$\StateLabelYob{4}$};
%
\SFSAutomatEdge{x1}{\EdgeLabelYt{\ue_1}{\ye_1}}{x2}{bend left}{}
\SFSAutomatEdge{x2}{\EdgeLabelYt{\ue_2}{\ye_2}}{x3}{bend left}{}
\SFSAutomatEdge{x3}{\EdgeLabelYt{\ue_3}{\ye_3}}{x2}{bend left}{}
\SFSAutomatEdge{x3}{\EdgeLabelYt{\ue_3}{\ye_3}}{x4}{bend left}{}
\SFSAutomatEdge{x4}{\EdgeLabelYt{\ue_4}{\ye_4}}{x3}{bend left}{}
\SFSAutomatEdge{x1}{\EdgeLabelYt{\ue_1}{\ye_1}}{x4}{bend right}{swap}

\def\h{3.5} \def\v{0}

\node (name) at (\h-1,\v+0.5) {$\abst{\mathcal{Q}}^{2\qsupb}:$};

\node[istate] (x1) at (\h,\v) {$\StateLabelYtb{1}{2}$};
\node[Sstate] (x2) at (\h+1.5,\v) {$\StateLabelYtb{2}{3}$};
\node[Sstate] (x3) at (\h+3,\v-1) {$\scriptstyle\{\kern-1pt\StateLabelYt{3}{2}\kern-1pt{,}\StateLabelYt{3}{4}\kern-1pt\}$};
\node[Sstate] (x4) at (\h+1.5,\v-2) {$\StateLabelYtb{4}{3}$};
\node[istate] (x5) at (\h,\v-2) {$\StateLabelYtb{1}{4}$};
\SFSAutomatEdge{x1}{\EdgeLabelYt{\ue_1}{\ye_1}}{x2}{}{}
\SFSAutomatEdge{x2}{\EdgeLabelYt{\ue_2}{\ye_2}}{x3}{bend left}{}
\SFSAutomatEdge{x3}{\EdgeLabelYt{\ue_4}{\ye_4}}{x2}{bend left}{pos=0.9}
\SFSAutomatEdge{x3}{\EdgeLabelYt{\ue_4}{\ye_4}}{x4}{bend left}{}
\SFSAutomatEdge{x4}{\EdgeLabelYt{\ue_3}{\ye_3}}{x3}{bend left}{pos=0.1}
\SFSAutomatEdge{x5}{\EdgeLabelYt{\ue_1}{\ye_1}}{x4}{}{swap}

\end{tikzpicture}
  \end{center}
  \caption{First and second quotient state machines of $\Qsyse$ in \eqref{equ:example:Qsyse}, where $\langle\kern-1pt{i}\kern-1pt{j}\kern-1pt\rangle:=\ye_i\ye_j$.
  }\label{fig:exp_Qsysaql}
 \end{figure}


\section{Comparison between \SAlA and \QBA}\label{sec:4:Comparison}
When it comes to comparing \QBA and \SAlA there are two interesting questions to be asked.
\begin{compactenum}[(i)]
 \item Does $\Qsysaqbl$ realize the unique \SAlA $\Eal=\Tuple{\Nbn,\yeS,\Behal}$ of $\Ee=\Tuple{\Nbn,\yeS,\BeheQ}$?
 \item Can we order the realizations $\Qsysaqbl$,  $\Qsysa^{\Ill}$, and $\Qsysa^{\Interval^l_0}$ in terms of simulation relations for $\weS=\yeS$?
\end{compactenum}

Unfortunately, none of the above statements is true in general. We will therefore derive necessary and sufficient conditions on the structure of $\Qsyse$ for those statements to hold.

\subsection{Comparing $\Qsysaqbl$ and $\Qsysa^{\Ill}$}
We start by giving the only comparing result that holds in general.

\begin{theorem}\label{thm:QsysaqblSAlA}
 Given \eqref{equ:Prelim} and \eqref{equ:BehalW} s.t. $\weS=\yeS$ and $\Qsysaqbl$ as in \REFdef{def:Tsysaqbl}, it holds that\footnote{As before,  $\Behe(\Qsysaqbl)$ denotes the extension of $\projState{\yeS}{\Behf(\Qsysaqbl)}$ to $\Zb$ as discussed in \REFsec{sec:prelim}.} $\Behe(\Qsysaqbl)\subseteq\Behal$.
\end{theorem}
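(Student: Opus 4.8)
The plan is to exploit that $\Behal$ is, by its domino-game construction, the \emph{largest} behavior satisfying \eqref{equ:BehalW}; concretely, a signal $\weG$ lies in $\Behal$ exactly when $\weG\ll{-l,0}\in\BeheQ\ll{-l,0}$ and $\weG\ll{k-l,k}\in\Pi_{l+1}(\BeheQ)$ for all $k\geq 1$. Hence it suffices to fix an arbitrary $\weG\in\Behe(\Qsysaqbl)$, coming from an abstract trajectory $\xa_0\xrightarrow{\ye_0}\xa_1\xrightarrow{\ye_1}\cdots$ in $\traqbl$ (with $\weS=\yeS$, so the external symbol at step $j$ is $\ye_j$), and to verify these two window-membership conditions. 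I would deliberately \emph{not} try to lift the whole abstract run to a single trajectory of $\Qsyse$ — this is precisely what fails in general and makes the inclusion strict — but instead argue one window of length $l+1$ at a time.

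The technical heart is a one-step \emph{prepend lemma}: for any transition $\Tuple{\xa,\ue,\ye,\xa'}\in\traqbl$ and any $\tau'\in\xa'$, the string $\ye\sconc\tau'\ll{0,l-2}$ lies in $\xa$. To prove it I would unfold \eqref{equ:traql} to get concrete witnesses $\Tuple{\xe,\ue,\ye,\xe'}\in\tre$ with $\xa=\EnabWl{}{\Ill}{\xe}$ and $\xa'=\EnabWl{}{\Ill}{\xe'}$; since $\Ill=[0,l-1]$ is the purely-future interval, $\tau'\in\EnabWl{}{\Ill}{\xe'}$ is the $l$-long future of $\xe'$ along some $(\weG',\xeG')\in\BeheSQ$, and, using reachability of $\xe$ from \eqref{equ:ReachLive:k}, I would splice together a run reaching $\xe$, the transition $\Tuple{\xe,\ue,\ye,\xe'}$, and the suffix of $(\weG',\xeG')$ past $\xe'$ into a new trajectory whose $l$-window at $\xe$ reads $\ye\sconc\tau'\ll{0,l-2}$. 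Iterating this lemma along the $l$ transitions $\xa_{k+1}\xrightarrow{\ye_{k+1}}\cdots\xrightarrow{\ye_{k+l}}\xa_{k+l+1}$ telescopes the output labels into a single string, yielding $\ye_{k+1}\ye_{k+2}\cdots\ye_{k+l}\in\xa_{k+1}$.

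With this in hand the interior windows are immediate: writing $\sigma=\ye_{k+1}\cdots\ye_{k+l}\in\xa_{k+1}=\EnabWl{}{\Ill}{\xe'}$ for the witness $\xe'$ of the $k$-th transition, the very splicing used in the prepend lemma produces a $\Qsyse$-trajectory whose length-$(l+1)$ output window equals $\ye_k\sconc\sigma=\ye_k\ye_{k+1}\cdots\ye_{k+l}$, so this window lies in $\Pi_{l+1}(\BeheQ)$. For the initial condition and the diamond-padded initial windows I would apply the same iteration to the first $l$ transitions to obtain $\ye_0\ye_1\cdots\ye_{l-1}\in\xa_0=\EnabWl{}{\Ill}{\xe_0}$ with $\xe_0\in\xeSo{}$ (as $\xa_0\in\xaqlSo{}$); because the future of a state does not depend on the time at which it is reached, liveness of initial states \eqref{equ:ReachLive:0} turns this future into an actual initial segment, giving $\ye_0\cdots\ye_{l-1}\in\BeheQ\ll{0,l-1}$ and hence $\weG\ll{-l,0}=\diamond^l\ye_0\in\BeheQ\ll{-l,0}$ together with all windows $\diamond^{l-k}\ye_0\cdots\ye_k\in\Pi_{l+1}(\BeheQ)$ for $1\leq k\leq l-1$.

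I expect the main obstacle to be conceptual rather than computational: one must resist trying to realize the whole abstract run concretely, and instead see that the $\Ill$-states record only an $l$-step future, so after $l$ steps the concrete ``memory'' may be safely re-anchored at a possibly different witness state sharing the same future set. This re-anchoring is exactly what the splicing step legitimizes, and it is what confines the argument to windows of length $l+1$. A secondary point to get right is the bookkeeping of the diamond padding for small $k$, which I would settle once and for all through the time-invariance/liveness observation above. Notably, no structural hypothesis on $\Qsyse$ (such as future uniqueness or state-based $l$-completeness) is needed here, consistent with this being the only comparison that holds unconditionally.
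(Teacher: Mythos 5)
Your proof is correct and follows essentially the same route as the paper's: the paper likewise first derives a one-step containment between consecutive abstract states and the emitted output (the auxiliary lemma establishing \eqref{equ:traqlversusDs}), iterates it to obtain $\yaG\ll{k,k+l-1}\in\xaG(k)$, and then splices concrete $\Qsyse$-trajectories at the transition witnesses via the state property \eqref{equ:StateSpaceDynamicalSystem:2} to place each length-$(l+1)$ window in $\Pi_{l+1}(\BeheQ)$, finishing with the diamond-padded initial windows exactly as you do. The only difference is that your prepend lemma is the elementwise form ($\ye\sconc\tau'\ll{0,l-2}\in\xa$ for every $\tau'\in\xa'$) of the paper's set-projection containments $\xa'\ll{0,l-2}\subseteq\xa\ll{1,l-1}$ and $\ye\in\xa\ll{0,0}$, which if anything makes the telescoping step more airtight.
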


\begin{proof}
 See Appendix~\ref{app:proof:thm:QsysaqblSAlA}.
\end{proof}

As behavioral inclusion is a necessary condition for the existence of a simulation relation from $\Qsysaqbl$ to $\Qsysa^{\Ill}$ (where the latter behavior is given by $\Behal$ from \REFthm{thm:behequ}) the natural next step is to try to find such a relation. However, thinking back to the results in \REFthm{thm:SimRel_QsyseQsysal} (i) and \REFthm{thm:SimRel_QsyseQsysaqbl} (i) there is not much hope for success, as the existence of such a relation would imply that we can also find a simulation relation from $\Qsyse$ to $\Qsysa^{\Ill}$ without the need for future-uniqueness of $\Qsyse$ w.r.t. $\Ill$. Not surprisingly, the latter condition will turn out to be necessary and sufficient for the naturally chosen relation from  $\Qsysaqbl$ to $\Qsysa^{\Ill}$ to be a simulation relation.\\
For the inverse relation to be a simulation relation from $\Qsysa^{\Ill}$ to $\Qsysaqbl$ the following property will turn out to be necessary and sufficient.
\begin{definition}\label{def:domconsist}
 Given \eqref{equ:Prelim} and \eqref{equ:Ds} s.t. $\weS=\yeS$, if 
  \begin{equation}\label{equ:domconsist}
  \AllQ{\zeG\inps\Ds{}{l+1},\ya\inps\yqlS}{
  \propImp{\zeG\ll{0,l-1}\inps\ya}{\ExQ{\xe\inps\EnabWlr{}{\Ill}{\ya}}{\zeG\inps\EnabWl{}{[0,l]}{\xe}}},}
 \end{equation}
 $\Qsyse$ is said to be \emph{domino consistent}.
\end{definition}
Intuitively, domino consistency of $\Qsyse$ implies that whenever a string $\zeG$ is part of an abstract state $\ya$, i.e., $\zeG\in\ya$, any domino $\zeG'\in\Ds{}{l+1}$ that can be attached to $\zeG$ in the domino game, i.e., $\zeG'\ll{0,l-1}=\zeG$, can be attached for this particular abstract state $\ya$, i.e., there exists a transition from $\ya$ to $\ya'$ s.t. $\zeG'\ll{1,l}\in\ya'$. As  $\Qsysa^{\Ill}$ can do all moves of the domino game, it becomes intuitively clear why the condition in \REFdef{def:domconsist} is needed to prove that $\Qsysaqbl$ can simulate $\Qsysa^{\Ill}$.

\begin{theorem}\label{thm:QsysalQsysaqbl}
 Given \eqref{equ:Prelim} s.t. $\weS=\yeS$ and 
 $\Qsysa^{\Ill}$ and $\Qsysaqbl$ as in \REFdef{def:QsysalW} and \REFdef{def:Tsysaqbl}, respectively, let
\begin{equation}
\R=\SetCompX{\Tuple{\zeG,\ya}\in\xaS^{\Ill}\times\xaqlS}{
\zeG\in\ya
}\label{equ:thm:QsysalQsysaqbl:R0}.
\end{equation}
Then
\begin{compactenum}[(i)]
 \item $ \propAequ{\R\in\SR{\yeS}{}{\Qsysa^{\Ill}}{\Qsysaqbl}}{\text{$\Qsyse$ is domino consistent}}$ and
 \item $ \propAequ{\R^{-1}\inps\SR{\yeS}{}{\Qsysaqbl}{\Qsysa^{\Ill}}}{\text{$\Qsyse$ is future unique w.r.t. $\Ill$.}}$
\end{compactenum}
\end{theorem}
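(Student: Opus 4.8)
The plan is to check the two defining clauses of \REFdef{def:SimRel_EL} for $\R$ and for $\R^{-1}$, after recording what the pure‑future case $m=l$ does to the definitions. For $m=l$ the two domino lines of \eqref{equ:tral} collapse to $\xa(0)=\projState{\weS}{\ue,\ye}$ together with the overlap $\xa\ll{1,l-1}=\xa'\ll{0,l-2}$, so an outgoing $\tra^{\Ill}$-transition of a string $\zeG\in\xaS^{\Ill}$ can carry \emph{only} the output $\zeG(0)$; the quotient transitions \eqref{equ:traql} carry no domino constraint and are inherited verbatim from $\tre$; and with $m=l$ the condition \eqref{equ:future_unique} reads $\zeG\ll{0,l-1}=\zeG'\ll{0,l-1}$, i.e.\ future uniqueness w.r.t.\ $\Ill$ is equivalent to every set $\EnabWl{}{\Ill}{\xe}$ (hence every state of $\Qsysaqbl$) being a singleton. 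Since $\weS=\yeS$, matching outputs means matching the $\ye$-component. The initial clauses \eqref{equ:SimRel_EL:a} are routine in both parts: any initial $\zeG\in\EnabWl{}{\Ill}{\xe}$ with $\xe\in\xeSo{}$ pairs with $\ya=\EnabWl{}{\Ill}{\xe}\in\xaqlSo{}$, and conversely each initial $\ya$ is nonempty by liveness \eqref{equ:ReachLive} and so contains an initial $\zeG$. I also isolate one reusable \emph{gluing} observation: given $(\xe,\ue,\ye,\xe')\in\tre$, any length‑$l$ future $\zeG'\in\EnabWl{}{\Ill}{\xe'}$ can be extended backwards by reaching $\xe$ (using \eqref{equ:ReachLive}) and prepending the transition, producing a trajectory through $\xe$ whose window $[0,l]$ equals $\ye\cdot\zeG'$; hence $\ye\cdot\zeG'\in\EnabWl{}{[0,l]}{\xe}\subseteq\Ds{}{l+1}$.

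For part (i)$(\Leftarrow)$ I take $(\zeG,\ya)\in\R$ and $(\zeG,\ue,\ye,\zeG')\in\tra^{\Ill}$, form the $(l{+}1)$-string $\sigma:=\zeG\cdot\zeG'(l-1)$, and use the overlap rule to get $\sigma\ll{0,l-1}=\zeG$ and $\sigma\ll{1,l}=\zeG'$; the gluing observation applied to the witnessing $(\xe,\ue,\ye,\xe')$ yields $\sigma\in\Ds{}{l+1}$. As $\sigma\ll{0,l-1}=\zeG\in\ya$, \REFdef{def:domconsist} supplies $\hat{\xe}\in\EnabWlr{}{\Ill}{\ya}$ with $\sigma\in\EnabWl{}{[0,l]}{\hat{\xe}}$; reading off the first step of a trajectory through $\hat{\xe}$ realizing $\sigma$ gives $(\hat{\xe},\ue',\ye,\hat{\xe}')\in\tre$ with $\EnabWl{}{\Ill}{\hat{\xe}}=\ya$ and $\zeG'\in\EnabWl{}{\Ill}{\hat{\xe}'}=:\ya'$, i.e.\ a quotient transition $(\ya,\ue',\ye,\ya')\in\traqbl$ with $\zeG'\in\ya'$, verifying \eqref{equ:SimRel_EL:b}. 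For (i)$(\Rightarrow)$, given $\sigma\in\Ds{}{l+1}$ and an abstract state $\ya$ with $\sigma\ll{0,l-1}\in\ya$, I produce from the trajectory realizing $\sigma$ (legitimate as $\Qsysa^{\Ill}$ realizes $\Eal$ by \REFthm{thm:behequ}) a transition $(\sigma\ll{0,l-1},\ue,\sigma(0),\sigma\ll{1,l})\in\tra^{\Ill}$, feed it through the assumed simulation $\R$ to obtain $(\ya,\ue',\sigma(0),\ya')\in\traqbl$ coming from some $\hat{\xe}$ with $\EnabWl{}{\Ill}{\hat{\xe}}=\ya$ and $\sigma\ll{1,l}\in\ya'$, and then apply the gluing observation to that $\hat{\xe}$-transition to conclude $\sigma\in\EnabWl{}{[0,l]}{\hat{\xe}}$ — exactly domino consistency for $\ya$.

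For part (ii)$(\Leftarrow)$, future uniqueness makes every $\ya=\EnabWl{}{\Ill}{\xe}$ a singleton $\{\zeG\}$, so $\R^{-1}$ is the graph of $\ya\mapsto\zeG$. A quotient transition $(\ya,\ue,\ye,\ya')$ comes from $(\xe,\ue,\ye,\xe')\in\tre$; decomposing the unique future of $\xe$ through $\xe'$ gives $\zeG=\ye\cdot\zeG'\ll{0,l-2}$ with $\{\zeG'\}=\ya'$, whence $\zeG(0)=\ye$ and $\zeG\ll{1,l-1}=\zeG'\ll{0,l-2}$, which are precisely the domino constraints, so $(\zeG,\ue,\ye,\zeG')\in\tra^{\Ill}$ with $\zeG'\in\ya'$. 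The substantive direction is (ii)$(\Rightarrow)$: assuming $\R^{-1}$ is a simulation I must force each $\EnabWl{}{\Ill}{\xe}$ to be a singleton, and I argue by the position of the first disagreement between two members. The base obstruction is disagreement at position $0$: the trajectory producing the second first‑symbol yields a $\traqbl$-transition whose output no $\tra^{\Ill}$-transition out of the first string can match (its only admissible output being its own first symbol), contradicting the simulation. For a first disagreement at position $j\ge1$ the first symbols coincide, and matching the $\traqbl$-transition carried by the genuine future of the successor state places, inside the common successor class $\EnabWl{}{\Ill}{\xe_1}$, a genuine future and a simulation‑supplied class member whose first disagreement has moved to position $j-1$; iterating $j$ times reaches the base case. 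I expect this inductive descent — keeping one genuine witness future and one arbitrary class member inside a single reachable quotient class while the disagreement index strictly decreases — to be the main obstacle, with the reachability‑gluing step of (i)$(\Rightarrow)$ a secondary technicality; both rely on liveness and reachability from \eqref{equ:ReachLive}.
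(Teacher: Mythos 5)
Your proposal is correct. Parts (i) and (ii)$(\Leftarrow)$ follow the paper's own route almost verbatim: the \enquote{gluing observation} is exactly the concatenation step the paper performs via \eqref{equ:Xxr} and the state property \eqref{equ:StateSpaceDynamicalSystem:2}, and the singleton reduction $\ya=\Set{\zeG}$ for (ii)$(\Leftarrow)$ is the paper's \eqref{equ:Rql:simple}. Where you genuinely diverge is (ii)$(\Rightarrow)$: the paper composes the always-valid relation $\Rq\in\SR{\yeS}{}{\Qsyse}{\Qsysaqbl}$ from \REFthm{thm:SimRel_QsyseQsysaqbl}~(i) with the assumed $\R^{-1}$, observes that $\Rq\circ\R^{-1}=\R^{\Ill}$, and invokes transitivity of simulation relations together with \REFthm{thm:SimRel_QsyseQsysal}~(i) to conclude future uniqueness in three lines. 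You instead run a self-contained descent on the index of the first disagreement between two members of $\EnabWl{}{\Ill}{\xe}$, with the output-rigidity of $\tra^{\Ill}$-transitions (every transition out of $\zeG$ must carry $\zeG(0)$ when $m=l$) as the base obstruction; this is sound, since the simulation-supplied successor $\zeG_1'$ and the genuine future of the successor state both land in the same reachable class $\EnabWl{}{\Ill}{\xe_1}$ with the disagreement index strictly decreased. In effect you have inlined the iterative propagation that the paper hides inside its proof of \REFthm{thm:SimRel_QsyseQsysal}~(i); the paper's composition argument buys brevity and reuse of established results, while yours buys independence from those theorems at the cost of redoing their combinatorial core.
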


\begin{proof}
See Appendix~\ref{app:proof:thm:QsysalQsysaqbl}.
\end{proof}

Combining the results from \REFthm{thm:QsysaqblSAlA} and \REFthm{thm:QsysalQsysaqbl} (i) we have the following answer to our first question.

\begin{corollary}\label{cor:QsysaqblSAlA}
Given \eqref{equ:Prelim} and \eqref{equ:BehalW} s.t. $\weS=\yeS$ and $\Qsysaqbl$ as in \REFdef{def:Tsysaqbl}, 
$\Qsysaqbl$ realizes $\Eal=\Tuple{\Nbn,\yeS,\Behal}$ if $\Qsyse$ is domino consistent.
\end{corollary}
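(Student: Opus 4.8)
The plan is to unfold the notion of realization (\REFdef{def:SAlA_Realization}), which under $\weS=\yeS$ amounts to the single behavioral equation $\Behe(\Qsysaqbl)=\Behal$, and to obtain it from two opposite set inclusions, each supplied by one of the cited theorems. The inclusion $\Behe(\Qsysaqbl)\subseteq\Behal$ holds unconditionally and is precisely \REFthm{thm:QsysaqblSAlA}, so no work is needed there. For the reverse inclusion I would bring in domino consistency, but only indirectly: by \REFthm{thm:QsysalQsysaqbl}~(i), domino consistency of $\Qsyse$ guarantees that the relation $\R$ from \eqref{equ:thm:QsysalQsysaqbl:R0} is a simulation relation, $\R\in\SR{\yeS}{}{\Qsysa^{\Ill}}{\Qsysaqbl}$, i.e.\ $\Qsysa^{\Ill}\kgl{\yeS}{}\Qsysaqbl$.

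From this simulation I would pass to the behavioral inclusion in the direction of the simulation, $\projState{\yeS}{\Behf(\Qsysa^{\Ill})}\subseteq\projState{\yeS}{\Behf(\Qsysaqbl)}$, which after the extension to $\Zb$ reads $\Behe(\Qsysa^{\Ill})\subseteq\Behe(\Qsysaqbl)$. It then remains to identify the left-hand side: applying \REFthm{thm:behequ} with $m=l$ (so that $\Ilm=\Ill$, which satisfies \eqref{equ:Prelim:I}) shows that $\Qsysa^{\Ill}$ realizes $\Eal$, i.e.\ $\Behe(\Qsysa^{\Ill})=\Behal$. Chaining the three facts yields
\begin{equation*}
 \Behal=\Behe(\Qsysa^{\Ill})\subseteq\Behe(\Qsysaqbl)\subseteq\Behal,
\end{equation*}
so $\Behe(\Qsysaqbl)=\Behal$ and hence $\Qsysaqbl$ realizes $\Eal$.

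The only link not already packaged as a cited result is the soundness of simulation relations, i.e.\ that a relation in $\SR{\yeS}{}{\Qsysa^{\Ill}}{\Qsysaqbl}$ forces the one-directional behavioral inclusion used above; this is where the (mild) remaining effort lies, though I do not expect a genuine obstacle. Given any full-behavior trajectory of $\Qsysa^{\Ill}$, I would construct a matching trajectory of $\Qsysaqbl$ by induction on the time axis $\Nbn$: the base case relates the initial states via \eqref{equ:SimRel_EL:a}, and the induction step extends the matched trajectory along a transition of $\Qsysaqbl$ provided by \eqref{equ:SimRel_EL:b}, reaching an $\R$-related state and agreeing on the external symbol. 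Since $\weS=\yeS$, this external symbol is just the output, so the two trajectories share the same $\yeS$-projection, which gives the required inclusion. The one point to handle with care is conceptual rather than technical: a one-sided simulation yields only a one-sided behavioral inclusion, and it is the combination with the unconditional reverse inclusion of \REFthm{thm:QsysaqblSAlA} that upgrades this to the equality defining a realization.
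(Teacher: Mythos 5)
Your proposal is correct and follows essentially the same route as the paper, which obtains the corollary by combining \REFthm{thm:QsysaqblSAlA} (the unconditional inclusion $\Behe(\Qsysaqbl)\subseteq\Behal$) with \REFthm{thm:QsysalQsysaqbl}~(i) (domino consistency yields $\Qsysa^{\Ill}\kgl{\yeS}{}\Qsysaqbl$, hence $\Behal=\Behe(\Qsysa^{\Ill})\subseteq\Behe(\Qsysaqbl)$ via \REFthm{thm:behequ}). The only detail you spell out that the paper leaves implicit is the standard fact that a simulation relation implies the one-directional behavioral inclusion, and your inductive argument for it is sound.
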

%

Even though, we have only given a sufficient condition in \REFthm{thm:QsysaqblSAlA} it should be noted that this condition is \enquote{almost} necessary in the following sense. The only reason for domino consistency to not be necessary for behavioral equivalence is that for any string $\yeG\in\Behal$ domino consistency is only required for all cells this string passes through. Since, in general, not every string passes though all cells that contain any of is $l$-long pieces, domino consistency is only necessary for the cells which are actually passed, i.e., for \enquote{almost all} cells.\\
To wrap up the comparison, it is interesting to note that future uniqueness of $\Qsyse$ w.r.t. $\Ill$ implies domino consistency and therefore also bisimilarity of $\Qsysa^{\Ill}$ and $\Qsysaqbl$. 

\begin{lemma}\label{lem:fuImpliesDc}
 Let $\Qsyse$ be a state machine satisfying \eqref{equ:TransStruct}, \eqref{equ:ReachLive}, and $\weS=\yeS$. Then 
 \begin{equation*}
  \propImp{\Qsyse~\text{is future unique w.r.t. $\Ill$}}{\Qsyse~\text{is domino consistent.}}
 \end{equation*}
\end{lemma}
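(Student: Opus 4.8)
The plan is to show that future uniqueness collapses every (realized) abstract state to a singleton, after which the domino-consistency witness falls out directly from the definition of $\Pi_{l+1}(\BeheQ)$ in \eqref{equ:Ds}. First I would unpack future uniqueness w.r.t.\ $\Ill$. Since $\Ill=[0,l-1]$ corresponds to $m=l$, the defining condition \eqref{equ:future_unique} reads $\zeG\ll{0,l-1}=\zeG'\ll{0,l-1}$ for all $\zeG,\zeG'\in\EnabWl{}{\Ill}{\xe}$; as these strings have length $l$, this says the entire strings coincide, so $\EnabWl{}{\Ill}{\xe}$ is a singleton for every $\xe\in\xeS$. Reading $\EnabWl{}{\Ill}{}$ as the natural projection onto the cells of $\Phi^l$ (the discussion following \REFprop{prop:PhilvsEl}), it follows that every abstract state actually realized by $\Qsyse$, i.e.\ every $\ya\in\xaqlS$, is a singleton.

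Next I would set up the goal \eqref{equ:domconsist}. Fix $\zeG\in\Ds{}{l+1}$ and $\ya$ with $\zeG\ll{0,l-1}\in\ya$. Because the conclusion quantifies over $\xe\in\EnabWlr{}{\Ill}{\ya}$, the only case needing work is $\EnabWlr{}{\Ill}{\ya}\neq\emptyset$, i.e.\ $\ya\in\xaqlS$. By the previous paragraph $\ya$ is then a singleton, and since it contains $\zeG\ll{0,l-1}$ we obtain $\ya=\{\zeG\ll{0,l-1}\}$. I would also record that, because $\Ill$ starts at the current time, every string in a realized set $\EnabWl{}{\Ill}{\xe}$ is the restriction of a trajectory to an interval $[k,k+l-1]$ with $k\in\Nbn$ and hence contains no $\diamond$; in particular $\zeG\ll{0,l-1}$ is $\diamond$-free.

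Finally I would construct the witness. As $\zeG\in\Pi_{l+1}(\BeheQ)$, by \eqref{equ:Ds} there are $\weG\in\BeheQ$ and $k$ with $\zeG=\weG\ll{k-l,k}$; lifting $\weG$ to a full trajectory of $\Behf(\Qsyse)$ yields a state trajectory $\xeG$ with $(\weG,\xeG)\in\BeheSQ$. Since $\zeG(0)=\weG(k-l)$ is $\diamond$-free, the window start satisfies $k-l\geq 0$, so $\xe:=\xeG(k-l)$ is a genuine state. By \eqref{equ:Xxr} we then have $\zeG\in\EnabWl{}{[0,l]}{\xe}$ and $\zeG\ll{0,l-1}\in\EnabWl{}{\Ill}{\xe}$; future uniqueness forces $\EnabWl{}{\Ill}{\xe}=\{\zeG\ll{0,l-1}\}=\ya$, whence $\xe\in\EnabWlr{}{\Ill}{\ya}$. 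This $\xe$ is exactly the witness demanded by \eqref{equ:domconsist}, so $\Qsyse$ is domino consistent.

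I expect the main obstacle to be conceptual rather than computational: the crux is recognizing that future uniqueness forces the cells of $\Phi^l$ to be singletons, so the premise $\zeG\ll{0,l-1}\in\ya$ pins $\ya$ down completely and the required witness is just the state sitting at the start of any occurrence of the domino $\zeG$. The only genuine technical care is the $\diamond$-padding bookkeeping, needed to guarantee that this occurrence begins at a nonnegative time so that the witness is an actual state of $\Qsyse$.
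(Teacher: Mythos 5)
Your proof is correct and follows essentially the same route as the paper's: future uniqueness w.r.t.\ $\Ill$ forces every realized set $\EnabWl{}{\Ill}{\xe}$ (hence every $\ya\in\xaqlS$) to be a singleton, so the premise $\zeG\ll{0,l-1}\in\ya$ pins $\ya$ down completely, and the state sitting at the start of any occurrence of $\zeG$ in $\BeheQ$ is exactly the witness required by \eqref{equ:domconsist}. The only difference is that you additionally verify that this occurrence begins at a nonnegative time via the $\diamond$-freeness of $\zeG\ll{0,l-1}$, a bookkeeping point the paper's proof passes over silently.
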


\begin{proof}
Using \eqref{equ:future_unique}, future uniqueness of $\Qsyse$ w.r.t. $\Ill$ implies that that for all $\xe\in\xeS$ holds
\begin{equation}\label{equ:proof:EnabWlg1}
 \propImp{\EnabWl{}{\Ill}{\xe}\neq\emptyset}{\length{\EnabWl{}{\Ill}{\xe}}=1}.
\end{equation}
Using \eqref{equ:xaqlS} this immediately implies $\length{\ya}=1$ for all $\ya\in\xaqlS$. 
Now pick $\zeG\in\Ds{}{l+1}$ and $\ya\in\xaqlS$ s.t. $\zeG'=\zeG\ll{0,l-1}\in\ya$, implying $\ya=\Set{\zeG'}$.
As $\zeG\in\Ds{}{l+1}$ we know that there exists $\Tuple{\weG,\xeG}\in\BehS{}$ and $k\in\Nbn$ s.t. $\zeG=\weG\ll{k,k+l}$ and therefore $\zeG\in\EnabWl{}{[0,l]}{\xeG(k)}$ and  $\zeG'\in\EnabWl{}{\Ill}{\xeG(k)}$.
Observe, that this immediately implies $\xeG(k)\in\EnabWlr{}{\Ill}{\ya}$, what proves the statement.
\end{proof}

It is interesting to note that the inverse implication does generally not hold, i.e., domino consistency is a weaker condition. Hence, $\Qsysa^{\Ill}$ might actually be a tighter abstraction than $\Qsysaqbl$ if $\Qsyse$ is not future unique w.r.t. $\Ill$. However, recall from \REFthm{thm:SimRel_QsyseQsysal} that in this case, $\Qsysa^{\Ill}$ does not simulate $\Qsyse$, i.e., might be \enquote{too tight} to suitably abstract $\Qsyse$.
However, if $\Qsyse$ is future unique w.r.t. $\Ill$, $\Qsysa^{\Ill}$ and $\Qsysaqbl$ are actually equivalent up to a trivial renaming of states and the following connections can be drawn between both settings. 


\begin{proposition}\label{prop:Bisim_QsysaqblQsysall}
Given \eqref{equ:Prelim} s.t. $\Qsyse$ is future unique w.r.t. $\Ill$, $\weS=\yeS$, and $\Qsysa^{\Ill}$ and $\Qsysaqbl$ as in \REFdef{def:QsysalW} and \REFdef{def:Tsysaqbl}, respectively, let 
 \begin{equation}\label{equ:cor:Bisim_QsysaqblQsysall}
 \R=\SetCompX{\Tuple{\zeG,V}\in\xaS^{\Ill}\times\xaqlS}{V=\Set{\zeG}}.
 \end{equation}
Furthermore, let $\R^\lsup$ and $\R^{l\qsupb}$ denote the relations defined in \eqref{equ:R0} and  \eqref{equ:QsyseQsysaqbl:Rq}, respectively. Then it holds that 
 \begin{enumerate}[(i)]
 \item $\Qsysaqbl$ realizes $\Ea^l$ w.r.t. $\yeS$,
 \item $\R\in\SR{\yeS}{}{\Qsysa^{\Ill}}{\Qsysaqbl}$ and $\R^{-1}\in\SR{\yeS}{}{\Qsysaqbl}{\Qsysa^{\Ill}}$,
  \item $\propAequ{
  \BR{\R^\lsup}^{-1}\in\SR{\yeS}{}{\Qsysa^{\Ill}}{\Qsyse}
  }{
  \BR{\R^{l\qsupb}}^{-1}\in\SR{\yeS}{}{\Qsysaqbl}{\Qsyse}
  }$, and 
  \item $\propAequ{\text{$\Qsyse$ is state-based asychronoulsy $l$-complete w.r.t. $\Ill$}}{\text{$\Phi^l$ is a fixed-point of \eqref{equ:Phil}}}$.
 \end{enumerate} 
\end{proposition}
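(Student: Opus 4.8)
The plan is to obtain parts~(i)--(iii) as consequences of results already in hand and to concentrate the real work on part~(iv). The common tool is the observation, established as \eqref{equ:proof:EnabWlg1} in the proof of \REFlem{lem:fuImpliesDc}, that future uniqueness w.r.t. $\Ill$ forces $\EnabWl{}{\Ill}{\xe}$ to be empty or a singleton for every $\xe\in\xeS$. Consequently every element of $\xaqlS$ is a singleton $\Set{\zeG}$, the assignment $\zeG\mapsto\Set{\zeG}$ is a bijection between $\xaS^{\Ill}$ and $\xaqlS$, and the condition $\zeG\in\ya$ defining \eqref{equ:thm:QsysalQsysaqbl:R0} becomes $\ya=\Set{\zeG}$. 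Hence the relation $\R$ of \eqref{equ:cor:Bisim_QsysaqblQsysall} coincides with the relation of \eqref{equ:thm:QsysalQsysaqbl:R0}, which is what lets me transport the earlier theorems verbatim.

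With this in place, part~(i) follows by first invoking \REFlem{lem:fuImpliesDc} to get domino consistency of $\Qsyse$ and then applying Corollary~\ref{cor:QsysaqblSAlA}, which gives that $\Qsysaqbl$ realizes $\Eal$. For part~(ii), since $\R$ equals the relation of \eqref{equ:thm:QsysalQsysaqbl:R0}, \REFthm{thm:QsysalQsysaqbl}~(i) (whose hypothesis, domino consistency, holds by \REFlem{lem:fuImpliesDc}) yields $\R\in\SR{\yeS}{}{\Qsysa^{\Ill}}{\Qsysaqbl}$, and \REFthm{thm:QsysalQsysaqbl}~(ii) (whose hypothesis, future uniqueness w.r.t. $\Ill$, is our standing assumption) yields $\R^{-1}\in\SR{\yeS}{}{\Qsysaqbl}{\Qsysa^{\Ill}}$. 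Part~(iii) is then a chaining argument: \REFthm{thm:SimRel_QsyseQsysal}~(ii) equates $\BR{\R^\lsup}^{-1}\in\SR{\yeS}{}{\Qsysa^{\Ill}}{\Qsyse}$ with state-based asynchronous $l$-completeness w.r.t. $\Ill$, \REFthm{thm:SimRel_QsyseQsysaqbl}~(ii) equates $\BR{\R^{l\qsupb}}^{-1}\in\SR{\yeS}{}{\Qsysaqbl}{\Qsyse}$ with $\Phi^l$ being a fixed point of \eqref{equ:Phil}, and part~(iv) supplies the equivalence between these two middle properties; composing the three equivalences gives~(iii).

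The substance is part~(iv). The plan is to read off the cells of $\Phi^l$ from \REFprop{prop:PhilvsEl}, which identifies them with the fibres of $\xe\mapsto\EnabWl{}{\Ill}{\xe}$; under future uniqueness each nonempty cell $\zeS$ has a single representative string $\zeG_{\zeS}$, shared by all its states. Fixing cells $\zeS,\zeS'$, I would establish the dictionary: (a) $\INTERSECT{\zeS}{\EnabTr{\tre}{\zeS'}}\neq\emptyset$ holds iff there is a valid domino $\zeG\in\Ds{}{l+1}$ with $\zeG\ll{0,l-1}=\zeG_{\zeS}$ and $\zeG\ll{1,l}=\zeG_{\zeS'}$; and (b) for a fixed state $\xe\in\zeS$ and this $\zeG$, the membership $\zeG\in\EnabWl{}{[0,l]}{\xe}$ is equivalent to $\xe\in\EnabTr{\tre}{\zeS'}$. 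Granting (a) and (b), the premise $\zeG\ll{0,l-1}\in\EnabWl{}{\Ill}{\xe}$ of \REFdef{def:SBalc} selects exactly the states of the cell $\zeS$ containing $\zeG_{\zeS}$, the existence of the quantified domino is the hypothesis $\INTERSECT{\zeS}{\EnabTr{\tre}{\zeS'}}\neq\emptyset$, and quantifying the conclusion $\zeG\in\EnabWl{}{[0,l]}{\xe}$ over all $\xe\in\zeS$ turns it into $\zeS\subseteq\EnabTr{\tre}{\zeS'}$. Matching premise with premise and conclusion with conclusion identifies \REFdef{def:SBalc} for $\Ill$ with the fixed-point characterization \eqref{equ:Phil_Prop:c} of \REFlem{lem:Phil_Prop}, proving~(iv).

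The main obstacle is the dictionary~(a)--(b), i.e. turning the step relation of $\Qsyse$ into statements about $(l+1)$-dominos and back. Two ingredients are delicate. First, when passing from a one-step transition $\xe\to\xe''$ (with $\xe\in\zeS$, $\xe''\in\zeS'$) to the domino $\zeG=\weG\ll{k,k+l}$, one must know the emitted external symbols are forced; this is where future uniqueness does its work, since it pins the future of $\xe$ to $\zeG_{\zeS}$ and that of $\xe''$ to $\zeG_{\zeS'}$, so the domino determined by $(\zeG_{\zeS},\zeG_{\zeS'})$ is exactly the one produced. Second, constructing the witnessing trajectory that realizes such a transition as an $(l+1)$-segment of some $\Tuple{\weG,\xeG}\in\BeheSQ$ (via \eqref{equ:Xxr}) requires liveness and reachability of $\Qsyse$. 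Once~(a)--(b) are airtight the remaining bookkeeping is purely the interval arithmetic relating $\Ill=[0,l-1]$, $[0,l]$, and the one-symbol shift $\zeG\ll{1,l}$ of the domino game, and parts~(i)--(iii) reduce to citations.
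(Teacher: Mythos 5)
Your proposal is correct, and parts (i) and (ii) are handled exactly as in the paper (domino consistency from \REFlem{lem:fuImpliesDc} feeding Cor.~\ref{cor:QsysaqblSAlA}, and the collapse of \eqref{equ:thm:QsysalQsysaqbl:R0} to \eqref{equ:Rql:simple} under future uniqueness feeding \REFthm{thm:QsysalQsysaqbl}). Where you genuinely diverge is in the order of (iii) and (iv): the paper proves (iii) \emph{directly}, by computing the compositions $\R^{-1}\circ\BR{\R^\lsup}^{-1}=\BR{\R^{l\qsupb}}^{-1}$ and $\R\circ\BR{\R^{l\qsupb}}^{-1}=\R^\lsup$ and invoking transitivity of simulation relations together with part (ii), and then obtains (iv) as a corollary of (iii) by chaining \REFthm{thm:SimRel_QsyseQsysal}~(ii) and \REFthm{thm:SimRel_QsyseQsysaqbl}~(ii); you run the same chain in the opposite direction, proving (iv) directly and deducing (iii) from it. Both routes are sound and rest on the same two characterization theorems, so neither is circular. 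The paper's version buys brevity: the composition identities are one-line set computations and all combinatorics is already encapsulated in the cited theorems. Your version buys a self-contained, definition-level explanation of \emph{why} state-based asynchronous $l$-completeness w.r.t. $\Ill$ and the fixed-point condition \eqref{equ:Phil_Prop:c} coincide, via the dictionary between $(l+1)$-dominoes and the pre-image sets $\EnabTr{\tre}{\zeS'}$ on the cells of $\Phi^l$; the price is that this dictionary largely re-derives work already packaged in \REFprop{prop:PhilvsEl} and the proof of \REFthm{thm:SimRel_QsyseQsysaqbl}, and it must be executed carefully (your own caveats about future uniqueness pinning the emitted symbol and about using \eqref{equ:StateSpaceDynamicalSystem:2} to splice witnessing trajectories are exactly the right ones). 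One small point common to both arguments: \REFlem{lem:Phil_Prop} literally states \eqref{equ:Phil_Prop:c} only as a \emph{sufficient} condition for $\Phi^l$ being a fixed point, whereas both you and the paper use it as an equivalence; the converse does follow from \eqref{equ:Phil_Prop:b}, but it is worth making that explicit if you write the argument out.
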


\begin{proof}
See Appendix~\ref{app:proof:prop:Bisim_QsysaqblQsysall}.
\end{proof}

\subsection{Comparing $\Qsysaqbl$ and $\Qsysa^{\Interval^l_0}$}
Up until now we have investigated when $\Qsysaqbl$ realizes the \SAlA $\Eal$ and how $\Qsysaqbl$ compares to $\Qsysa^{\Ill}$. However, recall from \REFthm{thm:behequ} that choosing $m=0$, i.e., constructing $\Qsysa^{\Interval^l_0}$ instead of $\Qsysa^{\Ill}$, results in the standard realization of \SAlA. Therefore, we want to conclude our comparison by investigating the connection between $\Qsysaqbl$ and $\Qsysa^{\Interval^l_0}$. For this setting, it is essential to note, that $\Qsyse$ being state-based asychronoulsy $l$-compl. w.r.t. $\Interval^l_l$ does not imply that the latter also holds for $m=0$. Hence, we obtain the following ordering of abstractions by combining the results from Prop.~\ref{prop:Bisim_QsysaqblQsysall} and \REFthm{thm:SimRel_QsysalmpQsysalm}.\smalllb


\begin{corollary}\label{cor:last}
Given the premises of \REFthm{thm:QsysalQsysaqbl} s.t. $\Qsyse$ is future unique w.r.t. $\Ill$, then
 ${\Qsysaqbl}\hgl{\yeS}{}{\Qsysa^{\Ill}}\kgl{\yeS}{}\Qsysa^{\Interval^l_0}$.
\end{corollary}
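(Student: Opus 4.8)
The plan is to split the claimed chain into its two links and discharge each by a result already established in the excerpt: the bisimilarity ${\Qsysaqbl}\hgl{\yeS}{}{\Qsysa^{\Ill}}$ is handled by \REFprop{prop:Bisim_QsysaqblQsysall}, while the simulation ${\Qsysa^{\Ill}}\kgl{\yeS}{}\Qsysa^{\Interval^l_0}$ is obtained by iterating \REFthm{thm:SimRel_QsysalmpQsysalm} and composing the resulting relations. Neither link requires fresh structural work, so the proof is essentially an assembly of cited facts.

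For the first link, I would observe that the stated hypotheses — namely \eqref{equ:Prelim} with $\weS=\yeS$ together with future uniqueness of $\Qsyse$ w.r.t. $\Ill$ — are exactly the premises of \REFprop{prop:Bisim_QsysaqblQsysall}. Part (ii) of that proposition asserts that the relation $\R$ of \eqref{equ:cor:Bisim_QsysaqblQsysall} satisfies both $\R\in\SR{\yeS}{}{\Qsysa^{\Ill}}{\Qsysaqbl}$ and $\R^{-1}\in\SR{\yeS}{}{\Qsysaqbl}{\Qsysa^{\Ill}}$. By \REFdef{def:SimRel_Symbols} this is precisely ${\Qsysa^{\Ill}}\hgl{\yeS}{}\Qsysaqbl$, and since bisimilarity is symmetric (interchanging the roles of $\R$ and $\R^{-1}$) this yields the stated direction ${\Qsysaqbl}\hgl{\yeS}{}\Qsysa^{\Ill}$.

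For the second link, I would note that $\Ill=\Interval^l_l$, so the goal is to connect the $m=l$ machine to the $m=0$ machine through the intermediate values of $m$. Applying \REFthm{thm:SimRel_QsysalmpQsysalm}(i) — which holds unconditionally for every $m<l$ — to each index $m\in\Set{0,1,\hdots,l-1}$ produces a simulation relation witnessing ${\Qsysa^{\Interval^l_{m+1}}}\kgl{\weS}{}\Qsysa^{\Interval^l_m}$. Concatenating these step-by-step refinements gives the chain ${\Qsysa^{\Interval^l_l}}\kgl{\weS}{}{\Qsysa^{\Interval^l_{l-1}}}\kgl{\weS}{}\hdots\kgl{\weS}{}\Qsysa^{\Interval^l_0}$, and specializing $\weS=\yeS$ and collapsing the chain delivers ${\Qsysa^{\Ill}}\kgl{\yeS}{}\Qsysa^{\Interval^l_0}$.

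The one point deserving care — and the only place where a genuinely new argument enters — is the transitivity used to collapse that chain: a single relation from $\Qsysa^{\Ill}$ to $\Qsysa^{\Interval^l_0}$ must be exhibited, namely the relational composition of the $l$ relations produced above. The standard verification that simulation relations compose carries over to the modified \REFdef{def:SimRel_EL} because all the machines $\Qsysa^{\Interval^l_m}$ share the input/output set $\CARTPROD{\ueS}{\yeS}$ and project identically onto $\yeS$; hence the compatibility condition $\projState{\yeS}{\u,\y}=\projState{\yeS}{\u',\y'}$ in \eqref{equ:SimRel_EL:b} chains through the intermediate machines, and the initial-state matching \eqref{equ:SimRel_EL:a} together with the transition-matching \eqref{equ:SimRel_EL:b} are preserved under composition. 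This is exactly the preorder property alluded to when \REFdef{def:SimRel_Symbols} introduces $\kgl{\yeS}{}$ as an \enquote{ordering}, so invoking transitivity is legitimate. I expect no step to present a serious obstacle; the effort lies entirely in lining up the two cited results and recording the compatibility of the signal spaces.
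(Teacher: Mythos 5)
Your proposal is correct and follows essentially the same route as the paper, which (without writing out a separate proof) obtains the bisimilarity link from \REFprop{prop:Bisim_QsysaqblQsysall}~(ii) and the simulation link by chaining \REFthm{thm:SimRel_QsysalmpQsysalm}~(i) over $m=l-1,\hdots,0$ and using transitivity of simulation relations (a fact the paper also invokes, e.g., in the proof of \REFthm{thm:QsysalQsysaqbl}). Your explicit remarks on the symmetry of bisimilarity and on why the composed relation still satisfies \eqref{equ:SimRel_EL:a} and \eqref{equ:SimRel_EL:b} only make the paper's implicit steps precise.
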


Even though future uniqueness of  $\Qsyse$ is a very strict requirement, it holds whenever $\Qsyse$ is output deterministic and $l=1$ is chosen. In particular, taking the viewpoint of \QBA and assuming that $\yeS$ can be arbitrarily chosen implies that we can always run the refinement algorithm in \REFdef{def:Phil} first, before applying \QBA and \SAlA. In this case, $\Qsyse$ is obviously output deterministic and choosing $l=1$ is sufficient, leading to bisimilar state machines $\Qsysa^{\Interval^1_1}$ and $\abst{\mathcal{Q}}^{1\qsup}$. However, it should be kept in mind that in this scenario the standard \QBA  $\abst{\mathcal{Q}}^{1\qsup}$ is usually tighter than the standard realization $\Qsysa^{\Interval^1_0}$ of \SAlA in terms of similarity.


\subsection{Example}
We conclude this section by revisiting the example in \REFsec{sec:SAlCA_new_exp} and \REFsec{sec:QBA_new_exp} to compare the abstractions constructed therein. Future uniqueness of $\Qsyse$ w.r.t. $\Ill$ was already investigated in \REFsec{sec:SAlCA_new_exp} and is given by the properties (B2) and (D2) for $l=1$ and $l=2$, respectively. Hence, $\Qsyse$ is future unique w.r.t. $\Interval^1_1$ but not w.r.t. $\Interval^2_2$. Concerning domino consistency, we have the following observations.
\begin{enumerate}[(F1)]
 \item $\Qsyse$ is domino consistent for $l=1$:\\
 Follows from (D1) and \REFlem{lem:fuImpliesDc}.
  \item $\Qsyse$ is domino consistent for $l=2$:\\
  Follows from the fact that every $2$-long string $\zeG\in\Ds{}{2}$ is only contained in one abstract state $\xa\in\xaqlS$. Therefore, \eqref{equ:domconsist} trivially holds.
\end{enumerate}

Now observe that (B2) and \REFprop{prop:Bisim_QsysaqblQsysall} (ii) implies that $\Qsysa^{\Interval^1_1}$ and $\Qsysa^{1\qsup}$ are identical up to the trivial renaming of states given by $\R$ in \eqref{equ:cor:Bisim_QsysaqblQsysall}. This is also obvious by investigating \REFfig{fig:exp_Qsysa1} (right) and \REFfig{fig:exp_Qsysaql} (left). It can be furthermore observed from \eqref{equ:example:Rl1} and \eqref{equ:example:Rql} that
\begin{equation*}
\R^{1\qsup}=\R^{\Interval^1_1}\circ\R
\end{equation*}
with $\R$ from \eqref{equ:cor:Bisim_QsysaqblQsysall}. This is actually always true if $\Qsyse$ is future unique w.r.t. $\Ill$ and was used to prove \REFprop{prop:Bisim_QsysaqblQsysall} (iii)-(iv).

As $\Qsyse$ is not future unique w.r.t. $\Interval^2_2$ from (D2) we cannot apply \REFprop{prop:Bisim_QsysaqblQsysall} for $l=2$. However, it follows from (F2) and Cor.~\ref{cor:QsysaqblSAlA} that $\Qsysa^{2\qsup}$ realizes the \SAlA $\Ea^l$ w.r.t. $\yeS$. This implies that for this particular example using \QBA yields a bisimilar abstraction of $\Qsyse$ (from (E2) and \REFthm{thm:SimRel_QsyseQsysaqbl}) which is a realization of the \SAlA $\ElaMax$ w.r.t. $\yeS$. However, this realization cannot coincide with any abstract state machine $\Qsysal$ as one of its states is given by a set $ V\in\twoup{\BR{\yeS}^l}$ with $|V|>1$. 
In particular, as $\Qsysa^{\Interval^2_0}$ only simulates $\Qsyse$ but is not bisimilar to the latter (from (C1) and \REFthm{thm:SimRel_QsyseQsysal}) and $\Qsysa^{\Interval^2_2}$ is only simulated by $\Qsyse$ but not vice versa (from (D2) and \REFthm{thm:SimRel_QsyseQsysal}) we have the following (strict) ordering of abstractions
\begin{equation*}
 \Qsysa^{\Interval^2_2}\kg{\yeS}\Qsysa^{2\qsup}\kg{\yeS}\Qsysa^{\Interval^2_0}\quad\text{with}\quad\Qsysa^{2\qsup}\hg{\yeS}\Qsyse.
\end{equation*}

%

\subsection{Some Comments on Control and Future Research}\label{sec:Compare:Control}
In this section we have compared finite state machine abstractions resulting from \SAlA and \QBA using the notion of simulation relations. The construction of those abstractions is usually motivated by a control problem involving a finite set of output symbols. To use the obtained comparison results it would therefore be interesting to investigate the usability of the constructed abstractions for control purposes.
Unfortunately, given the different settings of \SAlA and \QBA, the controller synthesis techniques applied in the literature also differ significantly, as they are usually tailored to the respective setting. Due to space limitations we are therefore not aiming at a profound comparison of the latter, but rather emphasize some observations from the example.

In the literature on \QBA so called \emph{alternating simulation relations} are used to evaluate if an abstraction is suitable for control (see e.g. \cite[Def.4.19]{TabuadaBook} for a formal definition). 
It is interesting to note that for any choice of $l$ and $m$ the inverse relation $\R^{-1}$ of $\R$ in \REFthm{thm:SimRel_QsyseQsysal} (resp. \REFthm{thm:SimRel_QsyseQsysaqbl}) is an alternating simulation relation from $\Qsysal$ (resp. $\Qsysaqbl$) to $\Qsyse$ iff $\R$ is a simulation relation from $\Qsyse$ to $\Qsysal$ (resp. $\Qsysaqbl$) and\footnote{$\EnabU{\tre}{\xe}:=\SetCompX{\ue\inps\ueS}{\ExQ{\ye\inps\yeS,\xe'\inps\xeS}{\Tuple{\xe,\ue,\ye,\xe'}\inps\tre}}$} 
\begin{equation}\label{equ:CondEnabU}
 \AllQ{\Tuple{\xe,\xa}\in\R}{\EnabU{\tra}{\xa}\subseteq\EnabU{\tre}{\xe}}.
\end{equation}
Hence, the abstraction must simulate $\Qsyse$ to be suitable for controller synthesis in the setting of \QBA.

In our example we have shown in \REFsec{sec:SAlCA_new_exp} that $\R^{\Interval^2_2}$ is not a simulation relation from $\Qsyse$ to $\Qsysa^{\Interval^2_2}$. Intuitively, this results from the observation that the abstraction has to \enquote{guess} non-deterministically when observing $\ye_2$ to which state to move to be able to \enquote{follow} the future evolution of $\Qsyse$. Interestingly, it can be observed in \REFfig{fig:exp_Qsysa1} (right) that the abstraction $\Qsysa^{\Interval^1_1}$ also needs to decide to either move to $\ye_2$ or $\ye_4$ from $\ye_1$ when observing the output $\ye_1$. 
However, it was shown in \REFsec{sec:SAlCA_new_exp} that $\R^{\Interval^1_1}$ is a simulation relation from $\Qsyse$ to $\Qsysa^{\Interval^1_1}$ and it can be easily observed that \eqref{equ:CondEnabU} holds for $\R^{\Interval^1_1}$ in \eqref{equ:example:Rl1}. Hence, $\Qsysa^{\Interval^1_1}$ is suitable for control in terms of alternating simulation relations, while $\Qsysa^{\Interval^2_2}$ is not. 
Intuitively, this is due to the fact that, using simulation relations, it is implicitly assumed that the abstraction \enquote{knows} to which state the original system moves. Therefore, $\Qsysa^{\Interval^1_1}$ can observe if $\Qsyse$ moves to $\xe_2$ or $\xe_4$ and can then pick the \enquote{right} state, i.e., the related one. Contrary, the states $\ye_3\ye_2$ and $\ye_3\ye_4$ of $\Qsysa^{\Interval^2_2}$ are related to the same state $\xe_3$. Therefore, knowing that $\Qsyse$ moves to $\xe_3$ does not help to decide which state to pick in $\Qsysa^{\Interval^2_2}$ when observing $\ye_2$.

The previous argument obviously
only works if the abstraction has full state information from the original system when \enquote{simulating} its moves. However, in the setting for \SAlA the controller (which is designed based on the abstraction and therefore usually given as a sub-machine of the latter) can only interact with the system through the (predefined) set of output symbols $\yeS$. As the state space of $\Qsyse$ is usually infinite while $\yeS$ is finite, this usually implies that no full state feedback is available. Intuitively, one would therefore need to require that non-determinism in $\Qsysa$ can be resolved without full state information in the setting of \SAlA. 

This issue was recently discussed in \cite{ReissigRungger2014} where it is shown that alternating simulation relations are not sufficient for abstraction based control if no full state feedback is available. To overcome this issue \cite{ReissigRungger2014} suggests \emph{feedback refinement relations} for a particular class of transition systems which allow for abstraction based controller synthesis using a predefined set of output events. As applying these ideas to the abstractions constructed in this paper would require a non-trivial extension of the relations in \cite{ReissigRungger2014}, we postpone this idea to future work.

However, even without this formal extension, we can draw the following conclusions from the construction of $\Ilm$-abstract state machines in \REFdef{def:QsysalW}. Observe, that choosing $m=0$, i.e., considering the original realization of \SAlA, will always result in a deterministic state machine, i.e., observing an output $\ye\in\yeS$ fully determines the next state of the abstraction $\Qsysa^{\Interval^l_0}$. Hence, the issue of unresolved non-determinism discussed above cannot occur in $\Qsysa^{\Interval^l_0}$ and $\R$ in \eqref{equ:example:Rl1} is always a simulation relation from $\Qsyse$ to $\Qsysa^{\Interval^l_0}$. Nevertheless, \eqref{equ:CondEnabU} still needs to hold to allow for an alternating simulation relation. For $\Qsysa^{\Interval^1_0}$ and $\Qsysa^{\Interval^2_0}$ constructed in
\REFsec{sec:SAlCA_new_exp} the latter is unfortunately not true as, e.g., $\EnabU{\tre}{\xe_2}=\Set{\ue_2}\subseteq\EnabU{\tra^{\Interval^1_0}}{\ye_1}=\Set{\ue_2,\ue_4}$ and $\Tuple{\xe_2,\ye_1}\in\R^{\Interval^l_0}$ (from \eqref{equ:example:Rl1}).

Interestingly, this observation draws a nice connection to the conditions for controller synthesis using \SlA in \cite{MoorRaischYoung2002}. Therein, the original system is required to have a \emph{free input}, i.e., 
\begin{equation}\label{equ:freeinput}
 \AllQ{\xe\in\xeS}{\EnabU{\tre}{\xe}=\ueS}.
\end{equation}
As \eqref{equ:freeinput} always implies \eqref{equ:CondEnabU}, assuming a free input implies that $\R^{\Interval^l_0}$ is an alternating simulation relation and no state information is needed for the abstraction to simulate the moves of the original system.

Using these insights it would be interesting to investigate which conditions on $\Qsyse$ allow for control based on a predefined set of input and output symbols using abstract state machines $\Qsysal$ with $m>0$ and quotient state machines $\Qsysaqbl$. As we have shown that increasing $m$ results in tighter abstractions this could be beneficial if $\Qsysa^{\Interval^l_0}$ is not tight enough for a particular controller synthesis problem and increasing $l$ does not refine the abstraction sufficiently.

\section{Conclusion}
In this paper we have compared finite state machine abstractions resulting from \SAlA and \QBA. For this purpose we have introduced a new parameter $m\in[0,l]$ to realize \SAlA by different state machines. We have shown that the choice $m=0$ corresponds to relating states in the original state machine $\Qsyse$ to their \emph{strict $l$-long past} of external symbols, reproducing the standard realization of \SAlA. On the other hand, choosing $m=l$ corresponds to relating states in the original state machine $\Qsyse$ to their \emph{$l$-long future} of external symbols. We have shown that this construction of realizations for \SAlA is closely related to the construction of \QBA, if the latter is obtained from a partition resulting from $l$ steps of the usual repartitioning algorithm.\\
Even if the latter observation renders both methods conceptually similar, we could show that they are generally incomparable. Only in the special case where the original system is future unique both abstractions are identical up to a renaming of states.


\appendices
\section{Proofs}
In this appendix we provide all remaining proofs.

\subsection{Preliminaries}
Given two signals $\wG_1,\wG_2\in\BR{\wS}^{\Nbn}$ and two time instants $t_1,t_2\in\Nbn$, their \textit{concatenation}  $\wG_3=\CONCAT{\wG_1}{t_1}{t_2}{\wG_2}$ is defined by
\begin{equation}\label{equ:concat}
\AllQ{t\in\Nbn}{\wG_3(t)=\DiCases{\wG_1(t)}{t< t_1}{\wG_2(t-t_1+t_2)}{t\geq t_1}}.
\end{equation}
Using this definition of concatenation, it can be shown (see e.g. \cite{SchmuckRaisch2014_ControlLetters}, Prop.2) that for the full behavior $\Behf(\Qsyse)$ of a state machine $\Qsyse$ defined in \eqref{equ:Behtr} the state property holds, i.e.,
 \begin{equation}\label{equ:StateSpaceDynamicalSystem:2}
 \AllQSplit{\Tuple{\ueG,\yeG,\xeG},\Tuple{\ueG',\yeG',\xeG'}\in\Behf(\Qsyse), k,k'\in\Nbn}{\propImp{\xeG(k)=\xeG'(k')}{\CONCAT{\Tuple{\ueG,\yeG,\xeG}}{k}{k'}{\Tuple{\ueG',\yeG',\xeG'}}\in\Behf(\Qsyse)}.}
\end{equation}
It is easy to see that \eqref{equ:StateSpaceDynamicalSystem:2} equivalently holds for the extension of $\Behf(\Qsyse)$ to $\Zb$ as discussed in \REFsec{sec:prelim}.\\
To simplify the subsequent proofs we now translate the conditions for a transition in $\Qsysal$ which were given in \REFdef{def:QsysalW} in terms of transitions of $\Qsyse$ into conditions of the domino-game.
 \begin{lemma}\label{lem:tralversusDs}
Given \eqref{equ:Prelim} and $\Qsysal$ as in \REFdef{def:QsysalW} it holds for all $\xa,\xa'\in\xaS$, $\ue\in\ueS$ and $\ye\in\yeS$ that
\begin{subequations}\allowdisplaybreaks
 \begin{align}
  &\Tuple{\xa,\ue,\ye,\xa'}\inps\tral\notag\\
  &\Leftrightarrow
   \begin{propConjA}
  \xa'\ll{0,l-m-1}\eqps\BR{\xa\ll{0,l-m-1}\sconcps\projState{\weS}{\ue,\ye}}\hspace{-0.1cm}\ll{1,l-m}\\
~\xa\ll{l\mips m,l\mips 1}=\BR{\projState{\weS}{\ue,\ye}\sconcps\xa'\ll{l\mips m,l\mips 2}}\ll{0,m-1}\\
  \xa|_{[0,l-m-1]}\sconcps\we\sconcps\xa'|_{[l-m,l-1]}\in\Ds{}{l+1}
     \end{propConjA}\label{equ:tralversusDs}\\
  &\Leftrightarrow
  \ExQ{\zeG\in\Ds{}{l+1}}{
   \begin{propConjA}
  \zeG\ll{0,l-1}=\xa\\
  \zeG\ll{1,l}=\xa'\\
  \we=\zeG(l-m)
   \end{propConjA}
   },\label{equ:tralversusDs:b}
 \end{align}
 \end{subequations}
 where $\we=\projState{\weS}{\ue,\ye}$.
\end{lemma}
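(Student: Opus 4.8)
The plan is to prove the two biconditionals separately, using the definition of $\tral$ from \REFdef{def:QsysalW} as the hinge. Unfolding that definition, $\Tuple{\xa,\ue,\ye,\xa'}\in\tral$ means that the two ``matching'' equations (the first two conjuncts, which involve $\ue,\ye$ only through $\we=\projState{\weS}{\ue,\ye}$) hold \emph{and} that there exist $\xe,\xe'\in\xeS$ with $\xa\in\EnabWl{}{\Ilm}{\xe}$, $\xa'\in\EnabWl{}{\Ilm}{\xe'}$ and $\Tuple{\xe,\ue,\ye,\xe'}\in\tre$. Since these matching equations are literally the first two conjuncts of \eqref{equ:tralversusDs}, the first biconditional reduces to showing that, \emph{under the matching equations}, the transition-existence clause is equivalent to the realizability clause $\xa\ll{0,l-m-1}\sconc\we\sconc\xa'\ll{l-m,l-1}\in\Ds{}{l+1}$; the second biconditional \eqref{equ:tralversusDs:b} is then purely combinatorial. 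Throughout I would keep the index dictionary fixed: by \eqref{equ:Xxr} a string in $\EnabWl{}{\Ilm}{\xe}$ is the external window over $[k+m-l,k+m-1]$ of a trajectory passing $\xe$ at time $k$, so the symbol emitted at the current time $k$ occupies window index $l-m$ (read as the index just past the window when $m=0$).

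For the forward implication of the first biconditional I would start from the transition-existence clause. The memberships $\xa\in\EnabWl{}{\Ilm}{\xe}$ and $\xa'\in\EnabWl{}{\Ilm}{\xe'}$ already supply two trajectories in $\BeheSQ$, one passing $\xe$ at some time $k_1$ and realizing $\xa$, the other passing $\xe'$ at some $k_2$ and realizing $\xa'$. Using $\Tuple{\xe,\ue,\ye,\xe'}\in\tre$ I would splice the strict past of the first trajectory (up to $\xe$) to the future of the second (from $\xe'$ onward) by inserting this very transition; that the spliced sequence again lies in $\Behf(\Qsyse)$ is exactly the fusion/state property \eqref{equ:StateSpaceDynamicalSystem:2} (equivalently, a clause-by-clause check of \eqref{equ:Behtr}). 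Reading off the external window of length $l+1$ centred so that $\we$ sits at index $l-m$ then exhibits the composite string as a length-$(l+1)$ external window of the spliced trajectory, i.e. as an element of $\Ds{}{l+1}$.

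For the reverse implication I would run this construction backwards. Membership in $\Ds{}{l+1}$ yields a trajectory of $\BeheSQ$ and a time $\kappa$ whose external window $[\kappa-l,\kappa]$ equals the composite string; setting $k:=\kappa-m$, $\xe:=\xeG(k)$ and $\xe':=\xeG(k+1)$, the two sub-windows $[\kappa-l,\kappa-1]$ and $[\kappa-l+1,\kappa]$ are precisely $\zeG\ll{0,l-1}$ and $\zeG\ll{1,l}$, which the matching equations identify with $\xa$ and $\xa'$, so $\xa\in\EnabWl{}{\Ilm}{\xe}$ and $\xa'\in\EnabWl{}{\Ilm}{\xe'}$. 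The trajectory delivers a transition at time $k$ carrying the external symbol $\weG(k)=\zeG(l-m)=\we$, which equals the label $\Tuple{\ue,\ye}$ precisely when $\weS=\ueS\times\yeS$; this yields $\Tuple{\xe,\ue,\ye,\xe'}\in\tre$ and closes the transition-existence clause. The projection $\we=\projState{\weS}{\ue,\ye}$ is exactly the point where the external symbol of the transition is fixed.

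It remains to settle the combinatorial biconditional \eqref{equ:tralversusDs:b}, which I would do by a direct decoding of the two matching equations. Writing them out symbol by symbol shows that they are equivalent to the single overlap identity $\xa\ll{1,l-1}=\xa'\ll{0,l-2}$ together with the pinning $\we=\xa(l-m)=\xa'(l-m-1)$, and that these are in turn equivalent to saying that the composite string $\zeG:=\xa\ll{0,l-m-1}\sconc\we\sconc\xa'\ll{l-m,l-1}$ satisfies $\zeG\ll{0,l-1}=\xa$ and $\zeG\ll{1,l}=\xa'$, with $\we=\zeG(l-m)$ holding by construction; conversely, any $\zeG$ meeting the three listed constraints is forced to coincide with this composite, closing the equivalence. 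The main obstacle I expect is not any single step but the bookkeeping: carrying the index dictionary cleanly through the three regimes (strict past, current symbol, strict future) and checking the degenerate endpoints $m=0$ and $m=l$, where one of the sub-windows $\xa\ll{0,l-m-1}$ or $\xa'\ll{l-m,l-1}$ collapses to the empty string, together with the fusion-closure splicing in the forward direction, which is where the behavioural content of the lemma actually resides.
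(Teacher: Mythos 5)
Your proposal follows essentially the same route as the paper's proof: unfold the last clause of \eqref{equ:tral} via \eqref{equ:Xxr} into the existence of trajectories in $\BeheSQ$ realizing $\xa$ at $\xe$ and $\xa'$ at $\xe'$, splice them together with the transition using the state/fusion property \eqref{equ:StateSpaceDynamicalSystem:2} so that the composite string $\xa\ll{0,l-m-1}\sconc\we\sconc\xa'\ll{l-m,l-1}$ appears as an $(l\plps1)$-window of a single element of the behavior (hence lies in $\Ds{}{l+1}$), recover $\xe=\xeG(k)$, $\xe'=\xeG(k\plps1)$ from a single realizing trajectory for the converse, and dispose of \eqref{equ:tralversusDs:b} by the same index bookkeeping (overlap identity plus pinning $\we=\zeG(l\mips m)$). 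Your explicit caveat that the recovered transition label equals $\Tuple{\ue,\ye}$ only when $\weS=\ueS\times\yeS$ flags a step the paper asserts without comment (it concludes $\Tuple{\xe,\ue,\ye,\xe'}\inps\tre$ from matching the projection $\we$ alone, which for $\weS=\yeS$ pins down $\ye$ but not $\ue$), so on that point you are, if anything, more careful than the source.
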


\begin{proof}
 \begin{inparaitem}
  \item[\enquote{$\Rightarrow$}:]\\
   \item Observe from \eqref{equ:tral} that $\Tuple{\xa,\ue,\ye,\xa'}\in\tral$ iff the first two lines of the conjunction in \eqref{equ:tral} (resp. \eqref{equ:tralversusDs}) are fulfilled and 
 there exist $\xe,\xe' \in\xeS$ s.t. 
\begin{subequations}\label{equ:proof:behequ}
  \begin{align}\allowdisplaybreaks
   &\ExQ{\Tuple{\weG,\xeG}\inps\BeheSQ,k}{
   \begin{propConjA}
    \xeG(k)=\xe\\
    \xa=\weG|_{[k+m-l,k+m-1]}
   \end{propConjA}},\label{equ:proof:behequ:a}\\
    &\ExQ{\Tuple{\weG',\xeG'}\inps\BeheSQ,k'}{
   \begin{propConjA}
    \xeG'(k')=\xe'\\
    \xa'=\weG'|_{[k'+m-l,k'+m-1]}
   \end{propConjA}},\label{equ:proof:behequ:b}\\
   &\ExQ{\Tuple{\weG'',\xeG''}\inps\BeheSQ,k''}{
   \begin{propConjA}
    \xeG''(k'')=\xe\\
    \weG''(k'')=\we\\
    \xeG''(k''+1)=\xe'
   \end{propConjA}}.\label{equ:proof:behequ:c}
  \end{align}
 \item Now observe from \eqref{equ:proof:behequ} that $\xeG(k)=\xe=\xeG''(k'')$ and $\xeG'(k')=\xe'=\xeG''(k''+1)$. Using \eqref{equ:StateSpaceDynamicalSystem:2} we therefore obtain
 \begin{equation}\label{equ:proof:behequ:2:a}
  \Tuple{\tilde{\weG},\tilde{\xeG}}=\CONCAT{\Tuple{\weG,\xeG}}{k}{k''}{\CONCAT{\Tuple{\weG'',\xeG''}}{k''+1}{k'}{\Tuple{\weG',\xeG'}}}\in\BeheSQ
 \end{equation}
  \end{subequations}
giving
 \begin{align*}
  \tilde{\weG}|_{[k+m-l,k+m]}
  &=\weG|_{[k+m-l,k-1]}\sconc\weG''(k'')\sconc\weG'|_{[k',k'+m-1]}\\
  &=\xa|_{[0,l-m-1]}\sconc\we\sconc\xa'|_{[l-m,l-1]}\\
  &\in\BeheQ|_{[k+m-l,k+m]}\subseteq\Ds{}{l+1},
 \end{align*}
 hence \eqref{equ:tralversusDs} holds.\\
\item Now let $\zeta=\tilde{\weG}|_{[k+m-l,k+m]}\in\Ds{}{l+1}$ and observe that $\we=\zeG(l-m)$. With this choice of $\zeta$ the first two lines of the conjunction in \eqref{equ:tral} (resp. \eqref{equ:tralversusDs}) immediately imply $\zeG\ll{0,l-1}=\xa$ and $\zeG\ll{1,l}=\xa'$, hence \eqref{equ:tralversusDs:b} holds.\\
\item[\enquote{$\Leftarrow$}:]\\
\item Pick $\zeG\in\Ds{}{l+1}$ and $\ue,\ye,\xa$ and $\xa'$ s.t. the right side of \eqref{equ:tralversusDs:b} holds.\\
\item It is easy to see that the first two lines of the conjunction in \eqref{equ:tral} (resp. \eqref{equ:tralversusDs})  hold with this choice and $\zeta=\xa|_{[0,l-m-1]}\sconc\we\sconc\xa'|_{[l-m,l-1]}\in\Ds{}{l+1}$, hence \eqref{equ:tralversusDs} holds.\\
\item Using \eqref{equ:Ds} there exist $\Tuple{\tilde{\weG},\tilde{\xeG}}$ and $\tilde{k}\in\Nbn$ s.t. $\zeG=\tilde{\weG}\ll{\tilde{k}+m-l,\tilde{k}+m-1}$.
We can therefore choose all signals in \eqref{equ:proof:behequ:a} and \eqref{equ:proof:behequ:b} equivalent to $\Tuple{\tilde{\weG},\tilde{\xeG}}$ and $\xe=\tilde{\xeG}(\tilde{k})$ as well as $\xe'=\tilde{\xeG}(\tilde{k}+1)$, giving $\xa\in\EnabWl{}{\Ilm}{\xe}$, $\xa'\in\EnabWl{}{\Ilm}{\xe}$ and $\Tuple{\xe,\ue,\ye,\xe'}\in\tre$. With this the last line of the conjunction in \eqref{equ:tral} holds, hence $\Tuple{\xa,\ue,\ye,\xa'}\in\tral$.
 \end{inparaitem}
\end{proof}

\subsection{Proof of \REFthm{thm:behequ}}\label{proof:thm:behequ}
%
\begin{inparaitem}
 \item[1.)] Show\footnote{As before,  $\Behe(\Qsysal)$ denotes the extension of $\projState{\weS}{\Behf(\Qsysal)}$ to $\Zb$.}  ${\Behal\subseteq\Behe(\Qsysal)}$:\\
 Pick $\waG\in\Behal$, $\uaG,\yaG$ s.t. $\projState{\weS}{\uaG,\yaG}=\waG\llr{0,\infty}$ and $\xaG$ s.t. $\AllQ{k\in\Nbn}{\xaG(k)=\waG\ll{k-l+m,k+m-1}}$. 
 To show the first line of \eqref{equ:Behtr}, recall that for all $k<0$ and $\weG\in\BeheQ$ we have $\weG(k)=\diamond$. Therefore, \eqref{equ:BehalW} and \eqref{equ:Ds} imply $\waG|_{[m-l,m]}\in\BeheQ\ll{m-l,m}$. Hence, there exists $\Tuple{\weG',\xeG'}\inps\BeheSQ$ s.t. $\weG'|_{[m-l,m]}=\waG|_{[m-l,m]}$ and therefore $\waG|_{[m-l,m-1]}\in\EnabWl{}{\Ilm}{\xeG'(0)}$ with $\xeG'(0)\in\xeSo{}$ (from \eqref{equ:Xxr}), hence $\xaG(0)\in\xalSo{}$ (from \eqref{equ:xalSo}).
  The second line of \eqref{equ:Behtr} follows from the choice of $\xaG$ and \eqref{equ:tralversusDs:b}, as \eqref{equ:BehalW} implies $\AllQ{k\in\Nbn}{\waG\ll{k-l+m,k+m}\in \Ds{}{l+1}}$.\\
\item[2.)] Show $\Behe(\Qsysal)\subseteq\Behal$:\\
Pick $\Tuple{\uaG,\yaG,\xaG}\in\Behf(\Qsysal)$ and $\waG$ s.t. $\projState{\weS}{\uaG,\yaG}=\waG\llr{0,\infty}$ and $\AllQ{k<0}{\waG(k)=\diamond}$.
To show $\waG\in\Behal$, observe that the second line of \eqref{equ:BehalW} follows directly from \eqref{equ:tralversusDs:b} and the second line of \eqref{equ:Behtr}, if we pick $\xaG$ accordingly. Using $\AllQ{k<0}{\waG(k)=\diamond}$ the second line of \eqref{equ:BehalW} immediately implies the first. 
\end{inparaitem}

\subsection{Proof of \REFthm{thm:Qsysalo_equ}}\label{proof:thm:Qsysalo_equ}

First observe from \eqref{equ:Xxr} and \eqref{equ:xalS} that
 \begin{equation}\label{equ:proof:xalSDs}
  \xalS=\bigcup_{k\in\Nbn}\BeheQ\ll{k-l+m,k+m-1}\subseteq\UNION{\Set{\diamond}^l}{\Ds{}{l}}
 \end{equation}
where equality only holds for $m=0$. Using \eqref{equ:BehalW:b} we therefore have $\xaS^{\Interval^l_0}=\UNION{\Set{\diamond}^l}{\Pi_{l}(\Behal)}=\xaS^l$. Furthermore, observe from \eqref{equ:Xxr} that $\xaSo{}^{\Interval^l_0}=\BeheQ\ll{-l,-1}=\Set{\diamond}^l=\xaSo{}^l$. Finally, it follows from \REFlem{lem:tralversusDs} that $\Tuple{\xa,\ue,\ye,\xa'}\in\tra^{\Interval^l_0}$ iff
$\xa'=\BR{\xa\ll{0,l-1}\sconc\we}\ll{1,l}=\BR{\xa\sconc\we}\ll{1,l}$ and $\xa|_{[0,l-1]}\sconc\we=\xa\sconc\we\in\Ds{}{l+1}=\Pi_{l+1}(\Behal)$, what proves the statement.

\subsection{Proof of \REFthm{thm:SimRel_QsyseQsysal}}\label{proof:thm:SimRel_QsyseQsysal}



\begin{inparaitem}
 \item[(i)] To see that \eqref{equ:SimRel_EL:a} always holds for $\R$, observe that for all  $\xe\in\xeSo{}$ there exists $\zeG\in\EnabWl{}{\Ilm}{\xe}$ (from \eqref{equ:ReachLive:0} and \eqref{equ:Xxr}), hence $\zeG\in\xalSo{}$ (from \eqref{equ:xalSo}). It remains to show that \eqref{equ:SimRel_EL:b} holds for $\R$ and $\ueS\times\yeS$ iff $\Qsyse$ is future unique w.r.t. $\Ilm$. We show both directions separately:\\
 \item[\enquote{$\Leftarrow$}]:\\ 
 Pick $\Tuple{\xe,\xa}\in\Rn$, i.e., $\xa\in\EnabWl{}{\Ilm}{\xe}$ and $\ue,\ye,\xe'$ s.t. $\Tuple{\xe,\ue,\ye,\xe'}\in\tre$. Then it follows from \eqref{equ:Xxr} and \eqref{equ:Behtr} that \eqref{equ:proof:behequ:a} and \eqref{equ:proof:behequ:c} holds, hence 
   \begin{align}\label{equ:proof:wxtilde}
  &\Tuple{\tilde{\weG},\tilde{\xeG}}=\CONCAT{\Tuple{\weG,\xeG}}{k}{k''}{\Tuple{\weG'',\xeG''}}\in\BeheSQ.
   \end{align}
Now pick $\xa'=\tilde{\weG}|_{[k+m-l+1,k+m]}$ and observe that $\tilde{\xeG}(k+1)=\xeG''(k+1)=\x'$ implies $\xa'\in\EnabWl{}{\Ilm}{\xe'}$. Furthermore, $\tilde{\xeG}(k)=\xeG(k)=\x$ implies $\tilde{\weG}|_{[k+m-l,k+m-1]}\in\EnabWl{}{\Ilm}{\xe}$. Now observe that $\tilde{\weG}|_{[k+m-l,k-1]}=\weG|_{[k+m-l,k-1]}$. Therefore using \eqref{equ:future_unique} implies $\xa=\weG|_{[k+m-l,k+m-1]}=\tilde{\weG}|_{[k+m-l,k+m-1]}$, hence $\xa\ll{1,l-1}=\xa'\ll{0,l-2}$. With this and $\tilde{\weG}(k)=\projState{\weS}{\ue,\ye}$ follows that \eqref{equ:tral} holds, hence $\Tuple{\xa,\ue,\ye,\xa'}\in\tral$. \\
%
 \item[\enquote{$\Rightarrow$}]:\\
  \begin{inparaitem}[-]
 \item Pick $\xe\in\xeS$, $m>0$ (as  $\Qsyse$ is always future unique w.r.t. $\Interval^l_0$) and $\zeta,\zeta'\in\EnabWl{}{\Ilm}{\xe}$. Using \eqref{equ:Xxr} there exist $\Tuple{\weG,\xeG}\inps\BeheSQ$ and $k$ s.t. $\xe=\xeG(k)$ and $\zeta'=\weG\ll{k+m-l,k+m-1}$. \\
 \item Now pick $\xe'=\xeG'(k+1)$ and $\ue,\ye$ s.t. $\projState{\weS}{\ue,\ye}=\weG(k)=\zeta'(l-m)$ and observe that $\Tuple{\xe,\ue,\ye,\xe'}\in\tre$  and
   $\tilde{\zeta}=\zeta'\ll{1,l-1}\sconc\weG'(k+m)\in\EnabWl{}{\Ilm}{\xe'}$.\\
  \item As $\R\in\SR{\ueS\times\yeS}{}{\Qsyse}{\Qsysal}$, there exists $\zeta''\in\EnabWl{}{\Ilm}{\xe'}$ s.t. $\Tuple{\zeta,\ue,\ye,\zeta''}\in\tral$. Using \eqref{equ:tral} this implies that $\zeta(l-m)=\projState{\weS}{\ue,\ye}$, hence $\zeta(l-m)=\zeta'(l-m)$, and $\zeta\ll{1,l-1}=\zeta''\ll{0,l-2}$.\\
  \item As $\tilde{\zeta},\zeta''\in\EnabWl{}{\Ilm}{\xe'}$ we can apply the same reasoning as before (substituting $\xe$ by $\xe'$ and $\zeta,\zeta'$ by $\tilde{\zeta},\zeta''$) and immediately obtain $\zeta''(l-m)=\tilde{\zeta}(l-m)=\zeta'(l-m+1)$. As $\zeta\ll{1,l-1}=\zeta''\ll{0,l-2}$ we therefore have $\zeta(l-m+1)=\zeta'(l-m+1)$. Applying this process iteratively therefore yields $\zeta\ll{l-m,l-1}=\zeta'\ll{l-m,l-1}$, what proves the statement.\\
\end{inparaitem}

 \item[(ii)] First observe that \eqref{equ:SimRel_EL:a} always holds for $\R^{-1}$ as we can pick $\xa\in\xalSo{}$ and \eqref{equ:xalSo} implies the existence of $\xe\in\xeSo{}$ s.t. $\zeG\in\EnabWl{}{\Ilm}{\xe}$. It remains to show that \eqref{equ:SimRel_EL:b} holds for $\R^{-1}$ and $\yeS$ iff $\Qsyse$ is state-based asynchronously $l$-complete w.r.t. $\Ilm$. We show both directions separately:\\
 \item[\enquote{$\Leftarrow$}]:\\
  \begin{inparaitem}[-]
 \item  Pick $\Tuple{\xa,\xe}\in\R^{-1}$, i.e., $\xa\in\EnabWl{}{\Ilm}{\xe}$ and $\ue,\ye,\xa'$ s.t. $\Tuple{\xa,\ue,\ye,\xa'}\in\tral$. Then it follows from \eqref{equ:tralversusDs:b} that there exists $\zeG\in\Ds{}{l+1}$ s.t. $\zeG\ll{0,l-1}=\xa$, $\zeG\ll{1,l}=\xa'$ and $\projState{\yeS}{\ue,\ye}=\zeG(l-m)$. Using \eqref{equ:dominolcomplete} this implies that $\zeG\in\EnabWl{}{[m-l,m]}{\xe}$, hence there exists $\Tuple{\weG,\xeG}\in\BeheSQ$ and $k\in\Nbn$ s.t. $\xe=\xeG(k)$ and $\zeta=\weG\ll{k+m-l,k+m}$. \\
 \item Now pick $\xe'=\xeG(k+1)$ and observe that $\xa'\in\EnabWl{}{\Ilm}{\xe'}$, hence $\Tuple{\xa',\xe'}\in\R^{-1}$. Furthermore, \eqref{equ:Behtr} implies the existence of $\ue'$ and $\ye'$ s.t. $\Tuple{\xe,\ue',\ye',\xe'}\in\tre$ and $\projState{\weS}{\ue',\ye'}=\we$, what proves the statement.\\
  \end{inparaitem}
 \item[\enquote{$\Rightarrow$}]:\\ 
 \begin{inparaitem}[-]
 \item  Pick $\xe\in\xeS$ and $\zeG\in\Pi_{l+1}(\BeheQ)$ s.t. $\zeG\ll{0,l-1}\in\EnabWl{}{\Ilm}{\xe}$. Furthermore define $\xa=\zeG\ll{0,l-1}$ and $\xa'=\zeG\ll{1,l}$ and $\we=\zeG(l-m)$ and observe from \eqref{equ:tralversusDs:b} that there exists $\ue$ and $\ye$ s.t. $\Tuple{\xa,\ue,\ye,\xa'}\in\tral$ and $\projState{\weS}{\ue,\ye}=\we$ and observe that $\Tuple{\xa,\xe}\in\R^{-1}$.\\
 \item As $\R^{-1}\in\SR{\weS}{}{\Qsysal}{\Qsyse}$ we know that there exist $\xe'$ and $\ue'$ and $\ye'$ s.t. $\Tuple{\xe,\ue',\ye',\xe'}\in\tre$, $\projState{\weS}{\ue',\ye'}=\we$ and $\xa'\in\EnabWl{}{\Ilm}{\xe'}$.\\
 \item With this we know that \eqref{equ:proof:behequ} holds, hence $\zeG=\tilde{\weG}\ll{k+m-l,k+m}$ and $\xe=\tilde{\xeG}(k)$ and therefore $\zeG\in\EnabWl{}{[m-l,m]}{\xe}$.
 \end{inparaitem}
  \end{inparaitem}

\subsection{Proof of \REFthm{thm:SimRel_QsysalpmQsysalm}}\label{proof:thm:SimRel_QsysalpmQsysalm}

\begin{lemma}\label{lem:NewProperty2}
Given \eqref{equ:Prelim} let
 \begin{equation}\label{equ:NewProperty2}
  \AllQ{\zeG\in\BR{\UNION{\Set{\diamond}}{\weS}}^{l+2}}{
  \propImp{
  \begin{propConjA}
   \zeG\ll{0,l}\in\UNION{\Set{\diamond}^l}{\Ds{}{l+1}}\\
   \zeG\ll{1,l+1}\in\Ds{}{l+1}\\
  \end{propConjA}
  }{\zeG\in\Ds{}{l+2}}.}
 \end{equation}
Then
\begin{equation*}
 \propAequ{\text{\eqref{equ:NewProperty2}}}{\Behal=\Beha^{l+1}}.
\end{equation*}
\end{lemma}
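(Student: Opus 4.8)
The plan is to reduce everything to the window characterisation of $l$-complete approximations supplied by \REFprop{prop:SforLcomplete} (equivalently \REFthm{thm:behequ}): a signal $\waG$ (with $\waG(j)=\diamond$ for all $j<0$) lies in $\Behal$ if and only if every length-$(l+1)$ window $\waG\ll{k-l,k}$, $k\in\Nbn$, belongs to $\Pi_{l+1}(\BeheQ)$, and analogously $\waG\in\Beha^{l+1}$ iff every length-$(l+2)$ window $\waG\ll{k-l-1,k}$ belongs to $\Pi_{l+2}(\BeheQ)$. I would also use the two standing facts recorded in the text, namely $\Beha^{l+1}\subseteq\Behal$ and $\Pi_{l+2}(\Beha^{l+1})=\Pi_{l+2}(\BeheQ)$ (the latter being \eqref{equ:BehalW:b} read at level $l+1$). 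With these, both implications become statements about windows.

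For the direction \eqref{equ:NewProperty2}$\Rightarrow\Behal=\Beha^{l+1}$: since $\Beha^{l+1}\subseteq\Behal$ always holds, it suffices to prove $\Behal\subseteq\Beha^{l+1}$. Given $\waG\in\Behal$, I would show that every length-$(l+2)$ window $\zeG=\waG\ll{k-l-1,k}$ lies in $\Pi_{l+2}(\BeheQ)$. Its first sub-block $\zeG\ll{0,l}=\waG\ll{k-l-1,k-1}$ is either all diamonds (when $k=0$) or a genuine window of $\waG$, hence lies in $\Set{\diamond}^{l+1}\cup\Pi_{l+1}(\BeheQ)$; its second sub-block $\zeG\ll{1,l+1}=\waG\ll{k-l,k}$ is a window ending at $k\geq0$, hence lies in $\Pi_{l+1}(\BeheQ)$. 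Applying \eqref{equ:NewProperty2} to $\zeG$ yields $\zeG\in\Pi_{l+2}(\BeheQ)$, and since $k$ was arbitrary, $\waG\in\Beha^{l+1}$.

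For the converse $\Behal=\Beha^{l+1}\Rightarrow$\eqref{equ:NewProperty2}: take $\zeG$ satisfying the two hypotheses of \eqref{equ:NewProperty2}; I must exhibit $\zeG\in\Pi_{l+2}(\BeheQ)$. The key step is to build a signal $\waG\in\Behal$ containing $\zeG$ as a length-$(l+2)$ window, for then $\zeG\in\Pi_{l+2}(\Behal)=\Pi_{l+2}(\Beha^{l+1})=\Pi_{l+2}(\BeheQ)$ by the standing fact. If $\zeG\ll{0,l}$ is all diamonds, then $\zeG=\diamond^{l+1}a$ with $\diamond^{l}a=\zeG\ll{1,l+1}\in\Pi_{l+1}(\BeheQ)$, so $\diamond^{l}a$ is a legal initial domino and $\zeG\in\BeheQ\ll{-l-1,0}\subseteq\Pi_{l+2}(\BeheQ)$ directly. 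Otherwise $d_1:=\zeG\ll{0,l}\in\Pi_{l+1}(\BeheQ)$ and $d_2:=\zeG\ll{1,l+1}\in\Pi_{l+1}(\BeheQ)$ share the length-$l$ overlap $d_1\ll{1,l}=d_2\ll{0,l-1}=\zeG\ll{1,l}$. I would pick $\weG^1,\weG^2\in\BeheQ$ and times $k_1,k_2\in\Nbn$ realising $\weG^1\ll{k_1-l,k_1}=d_1$ and $\weG^2\ll{k_2-l,k_2}=d_2$, and glue them symbol-wise by setting $\waG(j):=\weG^1(j)$ for $j\leq k_1$ and $\waG(j):=\weG^2(j-k_1-1+k_2)$ for $j\geq k_1+1$. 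By construction $\waG\ll{k_1-l,k_1+1}=\zeG$ and $\waG$ inherits the diamond padding from $\weG^1$.

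The hard part will be verifying $\waG\in\Behal$, i.e. that every length-$(l+1)$ window of $\waG$ is a valid domino. Windows ending at $j\leq k_1$ are windows of $\weG^1$ and windows beginning at $j\geq k_1+1$ are windows of $\weG^2$, so both lie in $\Pi_{l+1}(\BeheQ)$ at once; the genuine obstacle is the $l$ windows straddling the junction, $\waG\ll{k_1+i-l,k_1+i}$ for $1\leq i\leq l$. For these I would use the overlap identity $d_1\ll{1,l}=d_2\ll{0,l-1}$ to rewrite each such window—whose front part $d_1\ll{i,l}$ comes from $\weG^1$ and whose tail comes from $\weG^2$—as the single contiguous window $\weG^2\ll{k_2-l+i-1,k_2+i-1}$ of $\weG^2$, which therefore lies in $\Pi_{l+1}(\BeheQ)$. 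This establishes the gluing lemma, which is in fact unconditional; the hypothesis $\Behal=\Beha^{l+1}$ enters only through $\Pi_{l+2}(\Behal)=\Pi_{l+2}(\BeheQ)$ in the final line, giving $\zeG\in\Pi_{l+2}(\BeheQ)$ and completing the equivalence.
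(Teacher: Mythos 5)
Your proof is correct and follows essentially the same route as the paper: the forward direction applies \eqref{equ:NewProperty2} window-by-window, and the converse splices two trajectories realising $\zeG\ll{0,l}$ and $\zeG\ll{1,l+1}$ at their length-$l$ overlap and then uses $\Pi_{l+2}(\Beha^{l+1})=\Pi_{l+2}(\BeheQ)$. The only difference is cosmetic: you verify the straddling windows of the glued signal explicitly, whereas the paper delegates this to the concatenation operator and the state property \eqref{equ:StateSpaceDynamicalSystem:2} with the remark that it is ``easily verified.''
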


\begin{proof}
 \begin{inparaitem}
  \item[\enquote{$\Rightarrow$}]: Recall that $\Beha^{l+1}\subseteq\Beha^{l}$ always holds. To prove $\Behal\subseteq\Beha^{l+1}$ we pick $\weG\in\Behal$ and recall from \eqref{equ:BehalW} that $\AllQ{k\inps\Nbn}{\wG\ll{k-l,k}\inps \Ds{}{l+1}}$. Therefore, \eqref{equ:NewProperty2} implies $\AllQ{k\inps\Nbn}{\wG\ll{k-l-1,k}\inps \Ds{}{l+2}}$, hence $\weG\in{\Beha}^{l+1}$.\\
  \item[\enquote{$\Leftarrow$}]: 
  First observe that \eqref{equ:NewProperty2} always holds if $\zeG\ll{0,l}\in\Set{\diamond}^l$. We therefore pick $\zeG$ s.t. $\zeG\ll{0,l}\in\Ds{}{l+1}$ and $\zeG\ll{1,l+1}\in\Ds{}{l+1}$. Using \eqref{equ:BehalW:b}, this implies the existence of $\weG,\weG'\in\Behal$ and $k,k'\in\Nbn$ s.t. $\wG\ll{k-l,k}=\zeG\ll{0,l}$ and $\wG'\ll{k'-l,k'}=\zeG\ll{1,l+1}$. Picking $\weG''=\CONCAT{\weG}{k}{k'}{\weG}$ it is easily verified that $\weG''\in\Behal$ and $\wG''\ll{k-l,k+1}=\zeG$. As $\Behal=\Beha^{l+1}$ we obtain $\zeG=\wG''\ll{k-l,k+1}\in\Beha^{l+1}\ll{k-l,k+1}\subseteq\pi_{l+2}\BR{\Beha^{l+1}}=\Ds{}{l+2}$, what proves the statement
 \end{inparaitem}.
\end{proof}

\subsubsection*{Proof of \REFthm{thm:SimRel_QsysalpmQsysalm}}
We show both statements separately.\\
%
 \begin{inparaitem}
\item[(i)] To show that \eqref{equ:SimRel_EL:a} holds for $\R$, let $\xa_{l+1}\in\xaSo{}^{{\Ilpm}}$ and pick $\xa_l=\xa_{l+1}\ll{1,l}$, i.e., $\Tuple{\xa_{l+1},\xa_l}\in\R$. Now it follows from \eqref{equ:xalSo} that there exists $\xe\in\xeSo{}$ s.t. $\xa_{l+1}\in\EnabWl{}{{\Ilpm}}{\xe}$. Now \eqref{equ:Xxr} implies that $\xa_{l}\in\EnabWl{}{{\Ilm}}{\xe}$, hence $\xa_{l}\in\xaSo{}^{{\Ilm}}$ (from \eqref{equ:xalSo}).\\
To show that \eqref{equ:SimRel_EL:b} holds for $\R$, we pick $\Tuple{\xa_{l+1},\xa_l}\in\R$, $\ue$, $\ye$, $\we$ and $\xa_{l+1}'$, $\xa_{l}'$ s.t. $\Tuple{\xa_{l+1},\ue,\ye,\xa_{l+1}'}\in\tra^{\Ilpm}$, $\xa_{l}'=\xa_{l+1}'\ll{1,l}$ (i.e., $\Tuple{\xa_{l+1}',\xa_l'}\in\R$) and $\we=\projState{\weS}{\ue,\ye}$. Now using \eqref{equ:tralversusDs:b} for $l+1$ gives
 \begin{align*}
  &\xa_{l+1}'\ll{0,l-m}=\BR{\xa_{l+1}\ll{0,l-m}\sconc\we}\ll{1,l+1-m}\\
  &\xa_{l+1}\ll{l+1-m,l}=\BR{\we\sconc\xa_{l+1}'\ll{l+1-m,l-1}}\ll{0,m-1}\\
  &\xa_{l+1}|_{[0,l-m]}\sconc\we\sconc\xa_{l+1}'|_{[l+1-m,l]}\in\Ds{}{l+2}.
 \end{align*}
Now using $\xa_{l}=\xa_{l+1}\ll{1,l}$ and $\xa_{l}'=\xa_{l+1}'\ll{1,l}$ yields
 \begin{align*}
  &\xa_{l}'\ll{0,l-m-1}=\BR{\xa_{l}\ll{0,l-m-1}\sconc\we}\ll{1,l-m}\\
&\xa_{l}\ll{l-m,l-1}=\BR{\we\sconc\xa_{l}'\ll{l-m,l-2}}\ll{0,m-1}\\
  &\xa_{l}|_{[0,l-m-1]}\sconc\we\sconc\xa_{l}'|_{[l-m,l-1]}\in\Ds{}{l+1}.
 \end{align*}
 By using \REFlem{lem:tralversusDs} again this proves the statement.\\

\item[(ii)] To see that \eqref{equ:SimRel_EL:a} always holds for $\R^{-1}$ we pick $\xa_{l}\in\xaSo{}^{{\Ilm}}$ and observe from \eqref{equ:xalSo} that there exists an $\xe\in\xeSo{}$ s.t. $\xa_{l}\in\EnabWl{}{{\Ilm}}{\xe}$. Using \eqref{equ:Xxr} this implies the existence of $\Tuple{\weG,\xeG}\inps\BeheSQ$ s.t. $\xeG(0)=\xe$ and $\xa_l=\weG\ll{m-l,m-1}$. Now pick $\xa_{l+1}=\weG\ll{m-l-1,m-1}$ and observe that $\xa_{l+1}\in\EnabWl{}{{\Ilpm}}{\xe}$, hence $\xa_{m+1}\in\xaSo{}^{{\Ilpm}}$ (from \eqref{equ:xalSo}) and $\Tuple{\xa_l,\xa_{l+1}}\in\R^{-1}$ (from \eqref{equ:R_l}). It remains to show that \eqref{equ:SimRel_EL:b} holds for $\R^{-1}$ iff \eqref{equ:NewProperty2} holds. We show both directions separately:\\
\item[\enquote{$\Leftarrow$}]\\
 \begin{inparaitem}[-]
 \item Pick $\Tuple{\xa_l,\xa_{l+1}}\in\R^{-1}$, $\ue$, $\ye$, $\we$ and $\xa_{l+1}'$, $\xa_{l}'$ s.t. $\Tuple{\xa_{l},\ue,\ye,\xa_{l}'}\in\tra^{\Ilm}$, $\xa_{l+1}'=\xa_l(0)\sconc\xa_{l}'$ (i.e., $\Tuple{\xa_{l+1}',\xa_l'}\in\R$) and $\we=\projState{\weS}{\ue,\ye}$.\\
 \item Now using \eqref{equ:tralversusDs:b} implies the existence of $\zeG\in\Ds{}{l+1}$ s.t. $\zeG\ll{0,l-1}=\xa_l$, $\zeG\ll{1,l}=\xa_l'$ and $\we=\zeG(l-m)$, hence $\xa_{l+1}'=\zeG\in\Ds{}{l+1}$. Furthermore, recall that $\xa_{l+1}\in\xaS^{\Ilpm}\subseteq\UNION{\Set{\diamond}^l}{\Ds{}{l+1}}$ (from \eqref{equ:proof:xalSDs}). Therefore, we can apply \REFlem{lem:NewProperty2} and obtain $\zeG'\in\Ds{}{l+2}$ s.t. $\zeG'\ll{0,l}=\xa_{l+1}$, $\zeG'\ll{1,l+1}=\xa_{l+1}'$ and $\we=\zeG'(l+1-m)$.\\
 \item Using  \eqref{equ:tralversusDs:b} again this implies the existence of $\ue'$, $\ye'$ s.t. $\we=\projState{\weS}{\ue',\ye'}$ and $\Tuple{\xa_{l+1},\ue',\ye',\xa_{l+1}'}\in\tra^{\Ilpm}$, what proves the statement.\\
 \end{inparaitem}
\item[\enquote{$\Rightarrow$}]\\
 \begin{inparaitem}[-]
 \item Pick $\zeG$ s.t. $\zeG\ll{0,l}\in\UNION{\Set{\diamond}^l}{\Ds{}{l+1}}$ and $\zeG\ll{1,l+1}\in\Ds{}{l+1}$. Furthermore, define $\xa_l=\zeG\ll{1,l}$ and $\xa_l'=\zeG\ll{2,l}$ and $\we=\zeG(l-m+1)$.\\
 \item With this choice \eqref{equ:tralversusDs:b} implies the existence of $\ue$, $\ye$ s.t. $\we=\projState{\weS}{\ue,\ye}$ and $\Tuple{\xa_{l},\ue,\ye,\xa_{l}'}\in\tra^{\Ilm}$. \\
 \item Now observe that $\Tuple{\xa_l,\zeG\ll{0,l}}\in\R^{-1}$ and recall that $\R^{-1}\inps\SR{\weS}{}{\Qsysa^{\Ilm}}{\Qsysa^{\Ilpm}}$, hence we know that there exists $\zeG'$, $\ue'$, $\ye'$ s.t. $\Tuple{\zeG\ll{0,l},\ue',\ye',\zeG'}\in\tra^{\Ilpm}$, $\projState{\weS}{\ue',\ye'}=\we=\zeG(l-m+1)$ and $\xa_{l}'=\zeG'\ll{1,l}$, hence $\zeG'=\zeG\ll{1,l+1}$.\\
 \item Now using \eqref{equ:tralversusDs:b} for $l+1$ implies $\zeG\in\Ds{}{l+2}$, what proves the statement.
 \end{inparaitem}
\end{inparaitem}

\subsection{Proof of \REFthm{thm:SimRel_QsysalmpQsysalm}}\label{proof:thm:SimRel_QsysalmpQsysalm}

\begin{lemma}\label{lem:NewProperty}
Given \eqref{equ:Prelim} let
 \begin{equation}\label{equ:NewProperty}
  \AllQ{\zeG,\zeG'\inps\Ds{}{l+1}}{
  \propImp*{\zeG\ll{0,l-1}\eqps\zeG'\ll{0,l-1}}{\zeG\eqps\zeG'}}.
 \end{equation}
Then
\begin{equation*}
 \propAequ{\text{\eqref{equ:NewProperty}}}{
 \begin{propConjA}
  \text{$\Qsyse$ is future unique w.r.t. $\Ilmp$}\\
  \text{$\Qsyse$ is state-based asynch. $l$-complete w.r.t. $\Ilm$}
 \end{propConjA}}
\end{equation*}
\end{lemma}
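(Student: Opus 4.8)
The plan is to prove the equivalence by first reformulating \eqref{equ:NewProperty} as a \emph{forward determinism} property: it says precisely that the map $\Ds{}{l+1}\to\Ds{}{l}$, $\zeG\mapsto\zeG\ll{0,l-1}$, is injective, i.e. every length-$l$ domino has at most one one-step forward extension inside the domino set. The single technical workhorse I would isolate as an auxiliary claim is that, under \eqref{equ:NewProperty}, agreement of two behaviors on a length-$l$ window propagates forward forever: if $\weG,\weG'\in\BeheQ$ satisfy $\weG\ll{a-l,a-1}=\weG'\ll{b-l,b-1}$, then applying \eqref{equ:NewProperty} to the dominos $\weG\ll{a-l,a}$ and $\weG'\ll{b-l,b}$ gives $\weG(a)=\weG'(b)$, and shifting the window by one and iterating yields $\weG(a+j)=\weG'(b+j)$ for every $j\geq0$.

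For the direction ``\eqref{equ:NewProperty}$\Rightarrow$ both properties'', state-based asynchronous $l$-completeness w.r.t. $\Ilm$ is almost immediate: given $\xe$ and $\zeG\in\Ds{}{l+1}$ with $\zeG\ll{0,l-1}\in\EnabWl{}{\Ilm}{\xe}$, I would pick a realizing pair $\Tuple{\weG,\xeG}\in\BeheSQ$ with $\xeG(k)=\xe$ and $\weG\ll{k+m-l,k+m-1}=\zeG\ll{0,l-1}$ (from \eqref{equ:Xxr}); then $\weG\ll{k+m-l,k+m}\in\Ds{}{l+1}$ shares its length-$l$ prefix with $\zeG$, so \eqref{equ:NewProperty} forces $\weG\ll{k+m-l,k+m}=\zeG$, which is exactly $\zeG\in\EnabWl{}{[m-l,m]}{\xe}$ as required by \eqref{equ:dominolcomplete}. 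Future uniqueness w.r.t. $\Ilmp$ is where the forward-propagation claim does the work: given $\zeta,\zeta'\in\EnabWl{}{\Ilmp}{\xe}$ realized by $\Tuple{\weG,\xeG}$ at time $k$ and $\Tuple{\weG',\xeG'}$ at time $k'$ with $\xeG(k)=\xe=\xeG'(k')$, I would use the state property \eqref{equ:StateSpaceDynamicalSystem:2} to form $\Tuple{\tilde\weG,\tilde\xeG}=\CONCAT{\Tuple{\weG,\xeG}}{k}{k'}{\Tuple{\weG',\xeG'}}\in\BeheSQ$, so that $\tilde\weG$ agrees with $\weG$ at all times $\leq k-1$ and with $\weG'$ (shifted by $k'-k$) at all times $\geq k$. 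Since $\weG$ and $\tilde\weG$ share the length-$l$ window ending at time $k-1$, the forward-propagation claim gives $\weG\ll{k,k+m}=\tilde\weG\ll{k,k+m}=\weG'\ll{k',k'+m}$, which after translating back through \eqref{equ:Xxr} is exactly the required equality $\zeta\ll{l-m-1,l-1}=\zeta'\ll{l-m-1,l-1}$ of \eqref{equ:future_unique}.

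For the converse ``both properties $\Rightarrow$ \eqref{equ:NewProperty}'', I would take $\zeG,\zeG'\in\Ds{}{l+1}$ with $\zeG\ll{0,l-1}=\zeG'\ll{0,l-1}$ and must show $\zeG(l)=\zeG'(l)$. First I would realize the common prefix at a single state: choosing a state $\xe$ with $\zeG\ll{0,l-1}\in\EnabWl{}{\Ilm}{\xe}$ and $\zeG\in\EnabWl{}{[m-l,m]}{\xe}$ (this $\xe$ comes from a trajectory realizing $\zeG$, with the state sitting at offset $m$), state-based $l$-completeness applied to $\zeG'$ at the \emph{same} $\xe$ (legitimate because $\zeG'\ll{0,l-1}=\zeG\ll{0,l-1}\in\EnabWl{}{\Ilm}{\xe}$) yields $\zeG'\in\EnabWl{}{[m-l,m]}{\xe}$. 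Restricting both $\zeG$ and $\zeG'$ to their last $l$ symbols then gives $\zeG\ll{1,l},\zeG'\ll{1,l}\in\EnabWl{}{\Ilmp}{\xe}$, and future uniqueness w.r.t. $\Ilmp$ forces their last entries to coincide, i.e. $\zeG(l)=\zeG'(l)$; together with the shared prefix this gives $\zeG=\zeG'$.

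I expect the main obstacle to be the bookkeeping around the time offset $m$ and the $\diamond$-padding convention rather than the combinatorial core. Concretely, the delicate points are (a) guaranteeing that a realizing state $\xe$ at the correct offset actually exists, which uses liveness and reachability \eqref{equ:ReachLive} together with the $\Zb$-extension of $\BeheSQ$, and needs care for $\diamond$-heavy dominos occurring only near the initial segment, where the ``state at offset $m$'' falls at a nonnegative time only after choosing the occurrence appropriately; and (b) checking that \eqref{equ:NewProperty}, being quantified over \emph{all} of $\Ds{}{l+1}$ including diamond-padded dominos, is matched correctly by the two state-based properties on these boundary strings. The forward-propagation claim and the concatenation step via \eqref{equ:StateSpaceDynamicalSystem:2} are the conceptual heart, so I would state the propagation claim as a small standalone sublemma in order that both the future-uniqueness implication and the converse can invoke it uniformly.
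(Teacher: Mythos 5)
Your proposal is correct and follows essentially the same route as the paper's proof: the state-based $l$-completeness direction via realizing the prefix and invoking \eqref{equ:NewProperty} once, the future-uniqueness direction via concatenation at the common state using \eqref{equ:StateSpaceDynamicalSystem:2} and iterated forward propagation of window agreement, and the converse by realizing the shared prefix at a single state and combining \eqref{equ:dominolcomplete} with \eqref{equ:future_unique} on the shifted strings. Isolating the forward-propagation step as an explicit sublemma is a presentational refinement of what the paper does inline, not a different argument.
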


\begin{proof} We show all statements separately.\\
 \begin{inparaitem}
  \item Show $\propImp{\text{\eqref{equ:NewProperty}}}{\text{\eqref{equ:future_unique}}}$:\\
   \begin{inparaitem}
  \item Pick $\xe\in\xeS$ and $\zeG,\zeG'\in\EnabWl{}{\Ilmp}{\xe}$ and observe from \eqref{equ:Xxr} that there exist $\Tuple{\weG,\xeG},\Tuple{\weG',\xeG'}\inps\BeheSQ$ and $k,k'\in\Nbn$ s.t. $\xeG(k)=\xeG(k')=\xe$, $\weG\ll{k-l+m+1,k+m}=\zeG$ and $\weG'\ll{k'-l+m+1,k'+m}=\zeG'$, hence 
   $\Tuple{\weG'',\xeG''}=\CONCAT{\Tuple{\weG,\xeG}}{k}{k'}{\Tuple{\weG',\xeG'}}\inps\BeheSQ$.
  \item Now pick $\tilde{\zeG}=\weG\ll{k-l,k}$ and $\tilde{\zeG}'=\weG''\ll{k-l,k}$ and observe that $\tilde{\zeG}'\ll{0,l-1}=\tilde{\zeG}\ll{0,l-1}$.
  Using \eqref{equ:NewProperty} we therefore have $\tilde{\zeG}=\tilde{\zeG}'$, hence $\weG(k)=\zeG(l-m+1)=\zeG'(l-m+1)=\weG''(k)$.\\
  \item Now we can pick  $\tilde{\zeG}=\weG\ll{k-l+1,k+1}$ and $\tilde{\zeG}'=\weG''\ll{k-l+1,k+1}$ and (by reusing the above argument) obtain  $\weG(k)=\zeG(l-m+2)=\zeG'(l-m+2)=\weG''(k)$.
  Iteratively applying the above reasoning therefore yields $\zeG\ll{l-m+1,l}=\zeG'\ll{l-m+1,l}$.\\
  \end{inparaitem}
  \item Show $\propImp{\text{\eqref{equ:NewProperty}}}{\text{\eqref{equ:dominolcomplete}}}$:\\
  \begin{inparaitem}
  \item Pick $\xe\in\xeS$, $\zeG\in\Ds{}{l+1}$ s.t. $\zeG\ll{0,l-1}\in\EnabWl{}{\Ilm}{\xe}$ and observe from \eqref{equ:Xxr} that there exist $\Tuple{\weG,\xeG}\inps\BeheSQ$ and $k\in\Nbn$ s.t. $\xeG(k)=\xeG(k')=\xe$ and $\weG\ll{k-l+m,k+m-1}=\zeG\ll{0,l-1}$.\\
  \item Now pick $\zeG'=\weG\ll{k-l+m,k+m}$ and observe that $\zeG'\in\EnabWl{}{[m-l,m]}{\xe}$, $\zeG\ll{0,l-1}=\zeG'\ll{0,l-1}$ and $\zeG'\in\Ds{}{l+1}$.\\
  \item Using \eqref{equ:NewProperty} we have $\zeG=\zeG'$ and therefore  $\zeG\in\EnabWl{}{[m-l,m]}{\xe}$.\\
 \end{inparaitem}
   \item Show $\propImp{\propConj{\text{\eqref{equ:future_unique}}}{\text{\eqref{equ:dominolcomplete}}}}{\text{\eqref{equ:NewProperty}}}$:\\
   \begin{inparaitem}
   \item Pick $\zeG,\zeG'\in\Ds{}{l+1}$ s.t. $\zeG\ll{0,l-1}=\zeG'\ll{0,l-1}$ and observe that this implies $\zeG\ll{0,l-1}\in\Ds{}{l}$. Hence, there exists $\xe\in\xeS$ s.t. $\zeG\ll{0,l-1}\in\EnabWl{}{\Ilm}{\xe}=\EnabWl{}{[m-l,m-1]}{\xe}$.\\
   \item Using \eqref{equ:dominolcomplete} we know that $\zeG,\zeG'\in\EnabWl{}{[m-l,m]}{\xe}$, which implies $\zeG\ll{1,l},\zeG'\ll{1,l}\in\EnabWl{}{\Ilmp}{\xe}$ (from \eqref{equ:Xxr}). With this \eqref{equ:future_unique} implies $\zeG\ll{1,l}=\zeG'\ll{1,l}$, hence $\zeG=\zeG'$.
   \end{inparaitem}
\end{inparaitem}
\end{proof}

\subsubsection*{Proof of \REFthm{thm:SimRel_QsysalmpQsysalm}}
We show both statements separately.\\
%
%
 \begin{inparaitem}
\item[(i)] To show that \eqref{equ:SimRel_EL:a} holds for $\R$ pick $\xa_{m+1}\in\xaSo{}^{{\Ilmp}}$ and $\xa_m=\diamond\sconc\xa_{m+1}\ll{0,l-2}$, implying $\xa_{m+1}\ll{0,l-2}=\xa_m\ll{1,l-1}$ (i.e., $\Tuple{\xa_{m+1},\xa_m}\in\R$). Now it follows from \eqref{equ:xalSo} that there exists an $\xe\in\xeSo{}$ s.t. $\xa_{m+1}\in\EnabWl{}{{\Ilmp}}{\xe}$ and it can be easily observed from \eqref{equ:Xxr} that $\xa_{m}\in\EnabWl{}{{\Ilm}}{\xe}$, hence $\xa_{m}\in\xaSo{}^{{\Ilm}}$ (from \eqref{equ:xalSo}).
It remains to  show that \eqref{equ:SimRel_EL:b} holds for $\R$.\\
  \begin{inparaitem}
 \item Pick $\Tuple{\xa_{m+1},\xa_m}\in\R$, $\ue$, $\ye$, $\we$ and $\xa_{m+1}'$ s.t. $\Tuple{\xa_{m+1},\ue,\ye,\xa_{m+1}'}\in\tra^{\Ilmp}$ and $\we=\projState{\weS}{\ue,\ye}$. Using \eqref{equ:tral} this implies $\xa_{m+1}\ll{1,l-1}=\xa_{m+1}'\ll{0,l-2}$ and $\xa_{m+1}(l-m-1)=\we$.\\
 \item Now pick $\xa_{m}'=\xa_{m+1}$ implying that the first two lines in \eqref{equ:tralversusDs} hold and $\Tuple{\xa_{m}',\xa_m}\in\R$ . Therefore \eqref{equ:R_m}  and \eqref{equ:Xxr} imply the existence of $\Tuple{\weG,\xeG},\Tuple{\weG',\xeG'}\inps\BeheSQ$, $k,k'\in\Nbn$ s.t. $\xeG(k)=\xeG'(k')=\xe$, $\xa_m\eqps\weG|_{[k+m-l,k+m-1]}$ and $\xa_{m}'\eqps\weG'|_{[k'+m+1-l,k'+m]}$, hence $\Tuple{\tilde{\weG},\tilde{\xeG}}=\CONCAT{\Tuple{\weG,\xeG}}{k}{k'}{\Tuple{\weG',\xeG'}}\in\BeheSQ$ and $\tilde{\weG}\ll{k+m-l,k+m}=\xa_{m}\ll{0,l-m-1}\sconc\xa_{m}'\ll{l-m-1,l-1}$.\\
 \item As $\xa_m'(l-m-1)=\xa_{m+1}(l-m-1)=\we$ this implies
  $\xa_m|_{[0,l-m-1]}\sconc\we\sconc\xa_m'|_{[l-m,l-1]}\in\Ds{}{l+1}$.
Using \eqref{equ:tralversusDs} this implies the existence of $\ue'$ and $\ye'$ s.t. $\projState{\weS}{\ue',\ye'}=\we$ and $\Tuple{\xa_{m},\ue',\ye',\xa_{m}'}\in\tra^{\Ilm}$. It remains to show that $\Tuple{\xa_{m+1}',\xa_m'}\in\R$.\\
\begin{inparaitem}[$\cdot$]
\item Recall that $\xa_{m+1}'\ll{0,l-2}=\xa_{m+1}\ll{1,l-1}=\xa_m'\ll{1,l-1}$, hence the first line in \eqref{equ:R_m} holds.\\
 \item To see that the second line in \eqref{equ:R_m}  also holds, observe that $\Tuple{\xa_{m+1},\ue,\ye,\xa_{m+1}'}\in\tra^{\Ilmp}$ and \eqref{equ:tralversusDs:b} implies the existence of $\zeta\in\Ds{}{l+1}$ s.t. $\xa_{m+1}=\zeG\ll{0,l-1}$ and $\xa_{m+1}'=\zeG\ll{1,l}$. We can therefore pick $\Tuple{\tilde{\weG},\tilde{\xeG}}\in\BeheSQ$, $\tilde{\xe}$ and $\tilde{k}$ s.t. $\tilde{\weG}\ll{\tilde{k}+(m+1)-l,\tilde{k}+(m+1)}=\xa_{m+1}|_{[0,0]}\sconc\xa'_{m+1}$ and $\tilde{\xe}=\tilde{\xeG}(k+1)$ and have $\xa_{m+1}'\in\EnabWl{}{\Ilmp}{\tilde{\xe}}$ and $\xa_{m}'=\xa_{m+1}\in\EnabWl{}{\Ilm}{\tilde{\xe}}$.\\
\end{inparaitem}
 \end{inparaitem}
 \item[(ii)] To see that  \eqref{equ:SimRel_EL:a} always holds for $\R^{-1}$, pick $\xa_{m}\in\xaSo{}^{{\Ilm}}$ and recall from \eqref{equ:xalSo} that there exists an $\xe\in\xeSo{}$ s.t. $\xa_{m}\in\EnabWl{}{{\Ilm}}{\xe}$. Using \eqref{equ:Xxr} this implies the existence of $\Tuple{\weG,\xeG}\inps\BeheSQ$ s.t. $\xeG(0)=\xe$ and $\xa_m=\weG\ll{m-l,m-1}$. Using $\xa_{m+1}=\weG\ll{m+1-l,m}$ therefore yields $\xa_{m+1}\in\EnabWl{}{{\Ilmp}}{\xe}$, hence $\xa_{m+1}\in\xaSo{}^{{\Ilmp}}$ (from \eqref{equ:xalSo}) and $\Tuple{\xa_m,\xa_{m+1}}\in\R^{-1}$ (from \eqref{equ:R_m}). It remains to show that \eqref{equ:SimRel_EL:b} holds for $\R^{-1}$ iff \eqref{equ:NewProperty} holds. We show both statements separately.\\
 \item[\enquote{$\Leftarrow$}:]
   \begin{inparaitem}
   \item Pick $\Tuple{\xa_m,\xa_{m+1}}\in\R^{-1}$, $\ue$, $\ye$, $\we$ and $\xa_{m}'$ s.t. $\Tuple{\xa_{m},\ue,\ye,\xa_{m}'}\in\tra^{\Ilm}$ and $\we=\projState{\weS}{\ue,\ye}$. Using \eqref{equ:tralversusDs:b} this implies the existence of $\zeG\in\Ds{}{l+1}$ s.t. $\xa_m=\zeG\ll{0,l-1}$, $\xa_m'=\zeG\ll{1,l}$ and $\we=\zeG(l-m)$. \\
   \item Now let $\zeG'=\xa_m\ll{0,0}\sconc\xa_{m+1}$ and observe that $\zeG\ll{0,l-1}=\zeG'\ll{0,l-1}$ and therefore $\zeG=\zeG'$ (from \eqref{equ:NewProperty}), hence $\xa_{m+1}=\xa_m'$.\\
   \item As $\zeG'\in\Ds{}{l+1}$ we can pick  $\Tuple{\tilde{\weG},\tilde{\xeG}}\in\BeheSQ$, $\tilde{\xe}$ and $\tilde{k}$ s.t. $\tilde{\weG}\ll{\tilde{k}+m+-l,\tilde{k}+m+}=\zeG$. Now pick $\zeG''=\tilde{\weG}\ll{\tilde{k}+(m+1)-l,\tilde{k}+(m+1)}$ and observe that $\xa_{m+1}=\zeG''\ll{0,l-1}$ and $\zeG''(l-m-1)=\we$. Therefore choosing $\xa_{m+1}'=\zeG''\ll{1,l}$ and using \eqref{equ:tralversusDs:b} implies the existence of $\ue'$ and $\ye'$ s.t. $\Tuple{\xa_{m+1},\ue',\ye',\xa_{m+1}'}\in\tra^{\Ilmp}$ and $\projState{\weS}{\ue',\ye'}=\we$. \\
   \item Furthermore, observe that choosing $\tilde{\xe}=\tilde{\xeG}(\tilde{k}+1)$ implies $\xa_{m+1}'\in\EnabWl{}{\Ilmp}{\tilde{\xe}}$ and $\xa_{m}'=\xa_{m+1}\in\EnabWl{}{\Ilm}{\tilde{\xe}}$, hence $\Tuple{\xa_m',\xa_{m+1}'}\in\R^{-1}$.\\
   \end{inparaitem}
 \item[\enquote{$\Rightarrow$}:]
  \begin{inparaitem}
   \item Pick $\zeG,\zeG'\in\Ds{}{l+1}$ s.t. $\zeG\ll{0,l-1}=\zeG'\ll{0,l-1}$ and pick  $\xa_m=\zeG\ll{0,l-1}=\zeG'\ll{0,l-1}$, $\xa_m'=\zeG\ll{1,l}$, $\xa_{m+1}=\zeG'\ll{1,l}$ and $\we=\zeG(l-m)$. Using \eqref{equ:tralversusDs:b} this implies the existence of $\ue,\ye$ s.t.  $\Tuple{\xa_{m},\ue,\ye,\xa_{m}'}\in\tra^{\Ilm}$ and $\we=\projState{\weS}{\ue,\ye}$. Using the same reasoning as before it furthermore holds that $\Tuple{\xa_m,\xa_{m+1}}\in\R^{-1}$ as $\zeG'\in\Ds{}{l+1}$. \\
   \item As $\R^{-1}\in\SR{\weS}{}{\Qsysa^{\Ilm}}{\Qsysa^{\Ilmp}}$ we know that there exist $\ue',\ye',\xa_{m+1}'$ s.t. $\Tuple{\xa_{m+1},\ue',\ye',\xa_{m+1}'}\in\tra^{\Ilm}$, $\we=\projState{\weS}{\ue',\ye'}$ and $\Tuple{\xa_m',\xa_{m+1}'}\in\R^{-1}$.\\
   \item Now let $l=1$ implying $m=0$ (as $m<l$). Then \eqref{equ:tral} implies $\we=\xa_{m+1}(1)$ and therefore
   $\zeG(1)=\we=\xa_{m+1}(1)=\zeG'(1)$ implying $\zeG=\zeG'$.\\
   \item Now let $l>1$ and observe that $\xa_{m+1}\ll{1,l-1}=\xa_{m+1}'\ll{0,l-2}=\xa_m'\ll{1,l-1}$ (from \eqref{equ:tral} and \eqref{equ:R_m}). Therefore $\zeG(l)=\xa_m'(l-1)=\xa_{m+1}(l-1)=\zeG(l)'$ holds, giving $\zeG=\zeG'$.
 \end{inparaitem}
  \end{inparaitem}

\subsection{Proof of \REFprop{prop:PhilvsEl}}\label{app:proof:prop:PhilvsEl}
\begin{inparaitem}
 \item $l=1$: Recall that $\Interval^1_1=[0,0]$ and observe that \eqref{equ:TransStruct}, \eqref{equ:Behtr}, \eqref{equ:Xxr} and $\weS=\yeS$ implies $\EnabWl{}{\Interval^1_1}{\xe}=\EnabY{\tre}{\xe}$. Hence, \eqref{equ:PhilvsEl} holds for $l=1$ from \eqref{equ:Phil:0}.\\
\item $(l-1)\fun l$: Assume that
 \begin{equation}\label{equ:proof:EH:0}
  \Phi^{l-1}=\SetCompX{\EnabWlr{}{\Interval^{l-1}_{l-1}}{V}}{V\in\twoup{\BR{\yeS}^{l-1}}}
 \end{equation}
holds. 
 Using \eqref{equ:TransStruct}, \eqref{equ:Behtr}, \eqref{equ:Xxr} and $\weS=\yeS$ again we obtain
 {\allowdisplaybreaks
  \begin{align*}
  \zeG\inps\EnabWl{}{[0,l-1]}{\xe}
  %
  &\Leftrightarrow
  \ExQ{\tilde{\xe}\inps\xeS,\ue\inps\ueS}{
  \begin{propConjA}
  \Tuple{\xe,\ue,\zeG(0),\tilde{\xe}}\inps\tre\\
  \zeG\ll{1,l-1}\inps\EnabWl{}{[0,l-2]}{\tilde{\xe}}
  \end{propConjA}}
 \end{align*}}
implying
{\allowdisplaybreaks
 \begin{align*}
  &\EnabWl{}{[0,l-1]}{\xe}=\EnabWl{}{[0,l-1]}{\xe'}\\
  &\Leftrightarrow
  \AllQ{\zeG\in\BR{\yeS}^l}{\propImp*{\zeG\in\EnabWl{}{[0,l-1]}{\xe}}{\zeG\in\EnabWl{}{[0,l-1]}{\xe'}}}\\
   %
   %
  &\Leftrightarrow
  \AllQSplit{\ye\in\yeS,V\in\twoup{\BR{\yeS}^{l-1}}}{
  \propImpSplit{
  \ExQ{\tilde{\xe}\in\EnabWlr{}{[0,l-2]}{V},\ue\in\ueS}{
  \Tuple{\xe,\ue,\ye,\tilde{\xe}}\in\tre}
  }{
  \ExQ{\tilde{\xe}'\in\EnabWlr{}{[0,l-2]}{V},\ue'\in\ueS}{
  \Tuple{\xe',\ue',\ye,\tilde{\xe}'}\in\tre}
  }}\\
 &\Leftrightarrow
   \begin{propConjA}
  \EnabY{\tre}{\xe}=\EnabY{\tre}{\xe'}\\
  \AllQ{\zeS'\in\Phi^{l-1}}{
  \propImpSplit{
 \INTERSECT{\Set{\xe}}{\EnabTr{\tre}{\zeS'}}\neq\emptyset
  }{
  \INTERSECT{\Set{\xe'}}{\EnabTr{\tre}{\zeS'}}\neq\emptyset
  }}  
  \end{propConjA}
 \end{align*}}
 where the last equality follows from \eqref{equ:proof:EH:0} and \eqref{equ:TransStruct}.
 With this we obtain from \eqref{equ:Phil_Prop:b} that
 {\allowdisplaybreaks
 \begin{align*}
 &\zeS\in\SetCompX{\EnabWlr{}{[0,l-1]}{V}}{V\in\twoup{\BR{\yeS}^{l}}}\\
  &\Leftrightarrow
  \AllQ{\xe,\xe'\in\zeS}{\EnabWl{}{[0,l-1]}{\xe}=\EnabWl{}{[0,l-1]}{\xe'}}\\
  &\Leftrightarrow
     \begin{propConjA}
  \AllQ{\xe,\xe'\in\zeS}{\EnabY{\tre}{\xe}=\EnabY{\tre}{\xe'}}\\
  \AllQ{\zeS'\inps\Phi^{l-1}}{
  \propImp*{
   \INTERSECT{\zeS}{\EnabTr{\tre}{\zeS'}}\neq\emptyset}{
   \zeS\subseteq\EnabTr{\tre}{\zeS'}}}
   \end{propConjA}
  \\
  &\Leftrightarrow
   \begin{propConjA}
  \AllQ{\xe,\xe'\in\zeS}{\EnabY{\tre}{\xe}=\EnabY{\tre}{\xe'}}\\
  \zeS\in\Phi^l
  \end{propConjA}\\
  &\Leftrightarrow\zeS\in\Phi^l
 \end{align*}}
 where the last equality follows from \eqref{equ:Phil_Prop:a} .
%
\end{inparaitem}

\subsection{Proof of \REFthm{thm:SimRel_QsyseQsysaqbl}}\label{app:proof:thm:SimRel_QsyseQsysaqbl}

The proof of part (i) follows the same lines as the proof in \cite{TabuadaBook}, Thm.~4.18. and is therefore omitted. For part (ii) first observe that \eqref{equ:SimRel_EL:a} always holds for $\R^{-1}$, as we can pick $\xa\in\xaqlSo{}$ and obtain from \eqref{equ:xaqlSo} that there exists $\xe\in\xeSo{}$ s.t. $\xa=\EnabWl{}{\Ill}{\xe}$. To prove that \eqref{equ:SimRel_EL:b} holds for $\R^{-1}$ iff $\Phi^l$ is a fixed-point of \eqref{equ:Phil}, observe that \eqref{equ:SimRel_EL:b} holds for $\R^{-1}$ and $\yeS$ iff for all $\xa,\xa',\ue,\ye,\xe$ holds
 \begin{equation*}
   \propImp{
   \begin{propConjA}
   \xa=\EnabWl{}{\Ill}{\xe}\\
   \Tuple{\xa,\ue,\ye,\xa'}\inps\traqbl
   \end{propConjA}}{
   \ExQ{\xe',\ue'}{
   \begin{propConjA}
   \xa'=\EnabWl{}{\Ill}{\xe'}\\
   \Tuple{\xe,\ue',\ye,\xe'}\inps\tre
   \end{propConjA}
   }}.
 \end{equation*}
Now let $\zeS=\EnabWlr{}{\Ill}{\xa}$ and $\zeS'=\EnabWlr{}{\Ill}{\xa'}$. As \eqref{equ:TransStruct} holds for $\Qsyse$, using \eqref{equ:traql}, it can be easily verified that the previous statement is equivalent to 
 \begin{align*}
   \AllQ{\zeS,\zeS'}{
  \propImp{
  \INTERSECT{\zeS}{\EnabTr{\tre}{\zeS'}\neq\emptyset}}{
  \zeS\subseteq\EnabTr{\tre}{\zeS'}
  }}.
 \end{align*}
Using \REFprop{prop:PhilvsEl} and \eqref{equ:Phil_Prop:c} this proves the statement.

\subsection{Proof of \REFthm{thm:QsysaqblSAlA}}\label{app:proof:thm:QsysaqblSAlA}

\begin{lemma}
Given \eqref{equ:Prelim}  and $\Qsysaql$ as in \REFdef{def:Tsysaqbl},it holds that
\begin{subequations}\allowdisplaybreaks
 \begin{align}
 \AllQ{V\inps\twoup{\BR{\yeS}^l},r\hspace{-0.1cm}<\hspace{-0.1cm}l}{\EnabWlr{}{\Ill}{V}\subseteq\EnabWlr{}{\Interval^{l\mips r}_{l\mips r}}{V\ll{0,l-r-1}}}\label{equ:project:include}\\
 \text{and}~\propImp{\Tuple{\xa,\ue,\ye,\xa'}\in\traql}{
 \begin{propConjA}
  \xa'\ll{0,l-2}\subseteq \xa\ll{1,l-1}\\
  \ye\in \xa\ll{0,0}
 \end{propConjA}}.\label{equ:traqlversusDs}
\end{align}
\end{subequations}
\end{lemma}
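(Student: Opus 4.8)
The plan is to reduce both claims to a single \emph{truncation identity} for the map $\EnabWl{}{\Ilm}{}$ of \eqref{equ:Xxr}, and, for the harder half of \eqref{equ:traqlversusDs}, to the trajectory--splicing provided by the state property \eqref{equ:StateSpaceDynamicalSystem:2}. Throughout I would unwind the definitions \eqref{equ:Xxr} and \eqref{equ:traql} and use $\weS=\yeS$.

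First I would record that truncating a length-$l$ future of a state to its first $l-r$ symbols yields exactly a length-$(l-r)$ future, i.e.\ for every $\xe\in\xeS$ and $r<l$
\[
\EnabWl{}{\Interval^{l-r}_{l-r}}{\xe}=\EnabWl{}{\Ill}{\xe}\ll{0,l-r-1},
\]
where the right-hand side denotes the set of length-$(l-r)$ prefixes of the strings in $\EnabWl{}{\Ill}{\xe}$. This is immediate from \eqref{equ:Xxr}: since $[k,k]+\Ill=[k,k+l-1]$ and $[k,k]+\Interval^{l-r}_{l-r}=[k,k+l-r-1]$, restricting a witnessing signal to the shorter interval coincides with first forming $\weG|_{[k,k+l-1]}$ and then cutting it to the positions $0,\dots,l-r-1$; the reverse inclusion holds because every witnessing signal lives on the infinite axis $\Nbn$ and therefore extends any length-$(l-r)$ future to a length-$l$ future. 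Statement \eqref{equ:project:include} then follows at once: if $\xe\in\EnabWlr{}{\Ill}{V}$, that is $\EnabWl{}{\Ill}{\xe}=V$, the identity gives $\EnabWl{}{\Interval^{l-r}_{l-r}}{\xe}=V\ll{0,l-r-1}$, hence $\xe\in\EnabWlr{}{\Interval^{l-r}_{l-r}}{V\ll{0,l-r-1}}$.

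Next, for \eqref{equ:traqlversusDs} I would unfold \eqref{equ:traql}: a transition $\Tuple{\xa,\ue,\ye,\xa'}\in\traql$ yields states $\xe,\xe'\in\xeS$ with $\xa=\EnabWl{}{\Ill}{\xe}$, $\xa'=\EnabWl{}{\Ill}{\xe'}$ and $\Tuple{\xe,\ue,\ye,\xe'}\in\tre$. The claim $\ye\in\xa\ll{0,0}$ is the easy half: the truncation identity with $r=l-1$ gives $\xa\ll{0,0}=\EnabWl{}{\Interval^1_1}{\xe}$, while the proof of \REFprop{prop:PhilvsEl} (case $l=1$) shows $\EnabWl{}{\Interval^1_1}{\xe}=\EnabY{\tre}{\xe}$; since \eqref{equ:TransStruct} forces $\ye\in\EnabY{\tre}{\xe}$, we obtain $\ye\in\xa\ll{0,0}$.

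The remaining inclusion $\xa'\ll{0,l-2}\subseteq\xa\ll{1,l-1}$ is the main obstacle, and this is where a trajectory must be built. I would fix an arbitrary $\zeG'\in\xa'=\EnabWl{}{\Ill}{\xe'}$, so by \eqref{equ:Xxr} there are $\Tuple{\weG',\xeG'}\in\BeheSQ$ and $k'$ with $\xeG'(k')=\xe'$ and $\zeG'=\weG'|_{[k',k'+l-1]}$, and I aim to produce a length-$l$ future of $\xe$ whose length-$(l-1)$ suffix equals $\zeG'\ll{0,l-2}=\weG'|_{[k',k'+l-2]}$. Using reachability \eqref{equ:ReachLive:k} to obtain a signal reaching $\xe$, appending the transition $\Tuple{\xe,\ue,\ye,\xe'}$, and then gluing $\Tuple{\weG',\xeG'}$ at the common state $\xe'$ by the state property \eqref{equ:StateSpaceDynamicalSystem:2} --- the same concatenation technique used in the proof of \REFlem{lem:tralversusDs} --- yields $\Tuple{\tilde{\weG},\tilde{\xeG}}\in\BeheSQ$ and a time $t$ with $\tilde{\xeG}(t)=\xe$, $\tilde{\xeG}(t+1)=\xe'$ and $\tilde{\weG}|_{[t+1,t+l-1]}=\weG'|_{[k',k'+l-2]}$. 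Then $\tilde{\weG}|_{[t,t+l-1]}\in\EnabWl{}{\Ill}{\xe}=\xa$, and its suffix $\tilde{\weG}|_{[t,t+l-1]}\ll{1,l-1}=\zeG'\ll{0,l-2}$ lies in $\xa\ll{1,l-1}$, as required. The only delicate point is the legitimacy of inserting the single transition $\Tuple{\xe,\ue,\ye,\xe'}$ between the two spliced signals; this is justified because $\xe$ is reachable, $\xe'$ admits an infinite continuation (again by \eqref{equ:ReachLive:k}), and all consecutive tuples along the assembled signal lie in $\tre$, so it is a genuine element of $\Behf(\Qsyse)$ and hence of $\BeheSQ$.
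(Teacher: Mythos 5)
Your proof is correct, and for the only nontrivial part of \eqref{equ:traqlversusDs} it takes a genuinely different route from the paper. For \eqref{equ:project:include} and for $\ye\in\xa\ll{0,0}$ you do essentially what the paper does (your explicit truncation identity $\EnabWl{}{\Interval^{l-r}_{l-r}}{\xe}=\EnabWl{}{\Ill}{\xe}\ll{0,l-r-1}$, with both inclusions spelled out, is in fact slightly more careful than the paper's one-directional remark). The divergence is in proving $\xa'\ll{0,l-2}\subseteq\xa\ll{1,l-1}$: the paper argues at the level of partition cells, setting $\zeS=\EnabWlr{}{\Ill}{\xa}$ and $\tilde{\zeS}'=\EnabWlr{}{\Interval^{l-1}_{l-1}}{\xa'\ll{0,l-2}}$, invoking the stability property \eqref{equ:Phil_Prop:b} of the Fernandez refinement (via \REFlem{lem:Phil_Prop} and \REFprop{prop:PhilvsEl}) to get $\zeS\subseteq\EnabTr{\tre}{\tilde{\zeS}'}$, and then concluding from the identity $\xa\ll{1,l-1}=\bigcup_{\xe'\in\EnabT{\tre}{\zeS}}\EnabWl{}{[0,l-2]}{\xe'}$; you instead work at the level of individual strings, splicing a trajectory reaching $\xe$, the transition $\Tuple{\xe,\ue,\ye,\xe'}$ from \eqref{equ:traql}, and a witness for $\zeG'\in\xa'$ via \eqref{equ:StateSpaceDynamicalSystem:2} to exhibit $\ye\sconc\zeG'\ll{0,l-2}\in\xa$ directly. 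Your argument is self-contained and bypasses the partition-refinement machinery entirely (indeed, the paper's union identity for $\xa\ll{1,l-1}$ silently relies on the same splicing you carry out explicitly), whereas the paper's version buys a tighter link to the already-established correspondence with $\Phi^l$. Your handling of the one delicate point --- that inserting a single transition between two spliced full trajectories still yields an element of $\Behf(\Qsyse)$, justified by reachability, liveness, and a direct check of \eqref{equ:Behtr} --- is sound.
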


\begin{proof}
\begin{inparaitem}
\item[(a)]  Pick $\xe\in\EnabWlr{}{\Ill}{V}$. Using \eqref{equ:Xxr} this implies that for all $\zeG\in V$ there exists $\Tuple{\weG,\xeG}\in\BeheSQ$, $k\in\Nbn$ s.t. $\xe=\xeG(k)$ and $\zeG=\weG\ll{k,k+l-1}$. Now observe, that for every choice of $\zeG$ it holds that $\zeG\ll{0,l-r-1}\in V\ll{0,l-r-1}$. Using the same choice of signals $\Tuple{\weG,\xeG}$ and $k$ this immediately implies that $\xe\in\EnabWlr{}{\Interval^{l\mips r}_{l\mips r}}{V\ll{0,l-r-1}}$, what proves the statement.\\
\item[(b)] Pick $\xa,\xa',\ue,\ye$ s.t. $\Tuple{\xa,\ue,\ye,\xa'}\in\traqbl$ and define $\zeS=:\EnabWlr{}{\Ill}{\xa}$ and $\zeS'=:\EnabWlr{}{\Ill}{\xa'}$.\\
\begin{inparaitem}[-]
\item Using \eqref{equ:traql} this implies that there exists $\xe\in\zeS$ s.t. $\ye\in\EnabY{\tre}{\xe}$. Furthermore, using \eqref{equ:project:include} we know that $\xe\in\EnabWlr{}{\Interval^1_1}{\xa\ll{0,0}}$. As $\EnabWl{}{\Interval^1_1}{\xe}=\EnabY{\tre}{\xe}$ this implies $\ye\in\xa\ll{0,0}$.\\
\item Let $\tilde{\zeS}'=\EnabWlr{}{\Interval^{l-1}_{l-1}}{\xa'\ll{0,l-2}}$ and recall from \eqref{equ:project:include} that $\zeS'\subseteq\tilde{\zeS}'$ implying $\EnabTr{\tre}{\zeS'}\subseteq\EnabTr{\tre}{\tilde{\zeS}'}$. Using \eqref{equ:traql} and \eqref{equ:EnabT} we know that $\Tuple{\xa,\ue,\ye,\xa'}\in\traqbl$ implies $\INTERSECT{\zeS}{\EnabTr{\tre}{\zeS'}}\neq\emptyset$. Hence,  $\INTERSECT{\zeS}{\EnabTr{\tre}{\tilde{\zeS}'}}\neq\emptyset$ and therefore (using \eqref{equ:Phil_Prop:b}) $\zeS\subseteq\EnabTr{\tre}{\tilde{\zeS}'}$. 
This implies that 
   \begin{equation}\label{equ:proof:subsetsV}
    \AllQ{\xe\in\zeS}{\ExQ{\xe'\in\EnabT{\tre}{\xe}}{\xa'\ll{0,l-2}=\EnabWl{}{[0,l-2]}{\xe'}}}.
   \end{equation}
Now it follows from \eqref{equ:Xxr} that
    $\xa\ll{1,l-1}=\bigcup_{\xe\in\EnabT{\tr}{\zeS}}\EnabWl{}{[0,l-2]}{\xe}$
   implying $\xa'\ll{0,l-2}\subseteq \xa\ll{1,l-1}$.
 \end{inparaitem}
\end{inparaitem}
\end{proof}

\subsubsection*{Proof of \REFthm{thm:QsysaqblSAlA}}
Pick $\yaG\in\Behe(\Qsysaqbl)$ and observe from \eqref{equ:Behtr} that there exist $\Tuple{\uaG,\xaG}$ s.t. $\xaG(0)\in\xaqlSo{}$ and \linebreak$\AllQ{k\in\Nbn}{\Tuple{\xaG(k),\uaG(k),\yaG(k),\xaG(k+1)}\in\traqbl}$. Using \eqref{equ:traqlversusDs} we know that for all $k\in\Nbn$ it holds that $\xaG(k+1)\ll{0,l-2}\subseteq \xaG(k)\ll{1,l-1}$ and $\yaG(k)\in \xaG(k)\ll{0,0}$. Applying these equations iteratively yields $\yaG\ll{k,k+l-1}\in\xaG(k)$.\\
Furthermore, we can use \eqref{equ:traql} and \eqref{equ:Behtr} to pick $\Tuple{\yeG',\xeG'}\inps\BehS{}(\Qsyse)$ and $k'\in\Nbn$ s.t. $\yeG'(k')\eqps\yaG(k)$, $\xaG(k)\eqps\EnabWl{}{\Ill}{\xeG'(k')}$ and $\xaG(k+1)\eqps\EnabWl{}{\Ill}{\xeG'(k'+1)}$.
%
Using these signals, $\yaG\ll{k,k+l-1}\in\xaG(k)$ implies
\begin{align*}
  &\ExQ{\Tuple{\yeG'',\xeG''}\inps\BehS{}(\Qsyse),k''}{
  \begin{propConjA}
   \yeG''\ll{k'',k''+l-1}=\yaG\ll{k,k+l-1}\\
   \xeG''(k'')=\xeG'(k')
   \end{propConjA}}~\text{and}\\
   &\ExQ{\Tuple{\yeG''',\xeG'''}\inps\BehS{}(\Qsyse),k'''}{
  \begin{propConjA}
   \yeG'''\ll{k''',k'''+l-1}=\yaG\ll{k+1,k+l}\\
   \xeG'''(k''')=\xeG'(k'+1)
   \end{propConjA}}
\end{align*}
Using \eqref{equ:StateSpaceDynamicalSystem:2} we now obtain
  $\tilde{\yeG}=\CONCAT{\yeG''}{k''}{k'}{\CONCAT{\yeG'}{k'+1}{k'''}{\yeG'''}}\in\BeheQ$
 where $\yaG\ll{k,k+l}=\tilde{\yeG}\ll{k',k'+l}$, hence $\yaG\ll{k,k+l}\in\Ds{}{l+1}$.\\
 Using \eqref{equ:BehalW} it remains to show that $\yaG\ll{-l,0}\in\BeheQ\ll{-l,0}$. 
Observe from \eqref{equ:xaqlSo} and \eqref{equ:Xxr} that $\xaG(0)\in\xaqlSo{}$ implies $\xaG(0)\subseteq\projState{\yeS}{\Behf(\Qsyse)}\ll{0,l-1}$, hence $\yaG\ll{0,l-1}\in\BeheQ\ll{0,l-1}$. As $\AllQ{k<0}{\yaG(k)=\diamond}$ we therefore have $\yaG\ll{-l,0}\in\BeheQ\ll{-l,0}$. 

\subsection{Proof of \REFthm{thm:QsysalQsysaqbl}}\label{app:proof:thm:QsysalQsysaqbl} 
 \begin{inparaitem}
  \item[(i)] To see that \eqref{equ:SimRel_EL:a} always holds for $\R$ pick $\zeG\in\xaSo{}^{\Ill}$. Then it follows from \eqref{equ:xalSo} that there exists $\xe\in\xeSo{}$ s.t. $\zeG\in\EnabWl{}{\Ill}{\xe}$. Now using $\ya=\EnabWl{}{\Ill}{\xe}$ implies $\ya\in\xaqlSo{}$.
  It remains to show that \eqref{equ:SimRel_EL:b} holds for $\R$ and $\weS=\yeS$ iff $\Qsysaqbl$ is domino consistent. We show both directions separately:\\
  \item[\enquote{$\Rightarrow$}] 
Pick $\tilde{\zeta}\in\Ds{}{l+1}$ and $\ya\in\yaS^l$ s.t. $\zeta=\tilde{\zeta}\ll{0,l-1}\in\ya$ (hence $\Tuple{\zeta,\ya}\in\R$) and pick $\zeta'=\tilde{\zeta}\ll{1,l}$ and $\ye=\tilde{\zeta}(0)$. Then it follows from \eqref{equ:tralversusDs:b} that there exits $\ue$ s.t. $\Tuple{\zeG,\ue,\ye,\zeG'}\inps\tra^{\Ill}$. As $\R\in\SR{\yeS}{}{\Qsysa^{\Ill}}{\Qsysaqbl}$ it follows from \eqref{equ:SimRel_EL:b} and \eqref{equ:traql} that there exists $\xe\inps\EnabWlr{}{\Ill}{\ya}$, $\ue'$ and $\xe'$ s.t. $\Tuple{\xe,\ue,\ye,\xe'}\inps\tre$ and $\zeta'\in\EnabWl{}{\Ill}{\xe'}$. Now it follows immediately from \eqref{equ:Xxr} that $\ye\sconc\zeta'=\tilde{\zeta}\in\EnabWl{}{[0,l]}{\xe}$, what proves the statement.\\
 \item[\enquote{$\Leftarrow$}] Pick $\Tuple{\zeG,\ya}\in\R$, i.e., $\zeG\in\ya$ and $\ue,\ye,\zeG'$ s.t. $\Tuple{\zeG,\ue,\ye,\zeG'}\in\tra^{\Ill}$. Now it follows from \eqref{equ:tralversusDs:b} that there exists  $\zeG''=\ye\sconc\zeG'\in\Ds{}{l+1}$ with $\zeG''\ll{0,1}=\zeG\in\ya$. Using \eqref{equ:domconsist} therefore implies the existence of $\xe\inps\EnabWlr{}{\Ill}{\ya}$ s.t. $\ye\sconc\zeG'\inps\EnabWl{}{[0,l]}{\xe}$. Using \eqref{equ:Xxr} this implies that there exists $\Tuple{\ueG',\yeG',\xeG'}\in\BeheSQ$ and  $k'\in\Nbn$ s.t.
$\ya=\EnabWl{}{\Ill}{\xeG'(k')}$ and $\zeG'\in\EnabWl{}{\Ill}{\xeG'(k'+1)}$. Choosing $\ya'=\EnabWl{}{\Ill}{\xeG'(k'+1)}$ therefore implies $\zeG'\in\ya'$ (hence $\Tuple{\zeta',\ya'}\in\R$). Moreover, using \eqref{equ:traql} with $\xe=\xeG'(k')$, $\xe'=\xeG'(k'+1)$ and $\ue'=\ueG'(k')$ immediately implies $\Tuple{\ya,\ue',\ye,\ya'}\inps\traqbl$, what proves the statement.\\

\item[(ii)] We show both directions separately.\\
\item[\enquote{$\Rightarrow$}] 
Let $\R^{\Ill}$ and $\Rq$ be equivalent to the relations in \eqref{equ:R0} and \eqref{equ:QsyseQsysaqbl:Rq} (with $m=l$), respectively. Using $\R$ as in \eqref{equ:thm:QsysalQsysaqbl:R0} it is easily verified that 
 \begin{align*}
  \Rq\hspace{-0.1cm}\circ\hspace{-0.1cm}\R^{-1}\hspace{-0.1cm}:=\hspace{-0.1cm}\SetCompX{\Tuple{\xe,\xa}\inps\xeS\timesps\xaS^{\Ill}}{
  \ExQ{\ya\inps\xaqlS\hspace{-0.2cm}}{\hspace{-0.2cm}
  \begin{propConjA}
   \Tuple{\xe,\ya}\inps\Rq\\
   \Tuple{\ya,\xa}\inps\R^{-1}
  \end{propConjA}
  }\hspace{-0.1cm}}
  \eqps\R^{\Ill}
 \end{align*}
 Using the transitivity of  simulation relations therefore gives
  \begin{align*}
  \begin{propConjA}
   \Rq\inps\SR{\yeS}{}{\Qsyse}{\Qsysaqbl}\\
   \R^{-1}\inps\SR{\yeS}{}{\Qsysaqbl}{\Qsysa^{\Ill}}
  \end{propConjA}
  &\Rightarrow
\Rq\circ\R^{-1}\eqps\R^{\Ill}\inps\SR{\yeS}{}{\Qsyse}{\Qsysa^{\Ill}}\\
  &\Rightarrow\text{$\Qsyse$ is future unique w.r.t. $\Ill$}
 \end{align*}
 where the last implication follows from \REFthm{thm:SimRel_QsyseQsysaqbl}.\\
 \item[\enquote{$\Rightarrow$}] 
It follows from \eqref{equ:future_unique} and $m=l$ that for all $\xe\in\xeS$ holds $\propImp{\EnabWl{}{\Ill}{\xe}\neq\emptyset}{\length{\EnabWl{}{\Ill}{\xe}}=1}$.
Using \eqref{equ:xaqlS} this immediately implies $\length{\ya}=1$ for all $\ya\in\xaqlS$. Therefore \eqref{equ:thm:QsysalQsysaqbl:R0} becomes
 \begin{equation}\label{equ:Rql:simple}
 \R=\SetCompX{\Tuple{\zeG,\ya}\in\xaS^{\Ill}\times\xaqlS}{\ya=\Set{\zeG}}.
 \end{equation}
 To see that \eqref{equ:SimRel_EL:a} holds for $\R^{-1}$ pick $\ya\in\xaqlSo{}$ and observe from \eqref{equ:xaqlSo} that there exists $\xe\in\xeSo{}$ s.t. $\ya=\EnabWl{}{\Ill}{\xe}$. By choosing $\zeG\in\ya$ we obtain $\zeG\in\EnabWl{}{\Ill}{\xe}$, i.e., $\zeG\in\xaSo{}^{\Ill}$ (from \eqref{equ:xalSo}).\\
To sow that \eqref{equ:SimRel_EL:b} holds for $\R^{-1}$ we pick $\Tuple{\ya,\zeG}\in\R^{-1}$, i.e., $\ya=\Set{\zeG}$ and $\ue,\ye,\ya',\zeG'$ s.t. $\Tuple{\ya,\ue,\ye,\ya'}\in\traqbl$ and $\ya'=\Set{\zeG'}$. Using \eqref{equ:Rql:simple} this immediately implies that $\Tuple{\ya',\zeG'}\in\R^{-1}$. Now \eqref{equ:traql} implies the existence of $\xe,\xe'$ s.t. $\Set{\zeG}=\EnabWl{}{\Ill}{\xe}$, $\Set{\zeG'}=\EnabWl{}{\Ill}{\xe'}$ and $\Tuple{\xe,\ue,\ye,\xe'}\in\tre$. Using \eqref{equ:tral} this immediately implies $\Tuple{\zeG,\ue,\ye,\zeG'}\in\tra^{\Ill}$, what proves the statement.
 \end{inparaitem}

\subsection{Proof of \REFprop{prop:Bisim_QsysaqblQsysall}}\label{app:proof:prop:Bisim_QsysaqblQsysall} 
First observe that (i) follows from Cor.~\ref{cor:QsysaqblSAlA}, (ii) follows from \eqref{equ:Rql:simple} in the proof of \REFthm{thm:QsysalQsysaqbl} and (iv) follows from (iii) using  \REFthm{thm:SimRel_QsyseQsysal} and  \REFthm{thm:SimRel_QsyseQsysaqbl}. Hence, we only prove (iii).
Let $\R$ be defined as in \eqref{equ:cor:Bisim_QsysaqblQsysall} and observe that \eqref{equ:Rql:simple} implies
 \begin{align*}
  \R^{-1}\hspace{-0.1cm}\circ\hspace{-0.1cm}\BR{\R^\lsup}^{-1}
  &\eqps\SetCompX{\Tuple{V,\xe}\inps\xaqlS\timesps\xeS}{\ExQ{\zeG\inps\xaS^{\Ill}}{
  \begin{propConjA}
   \zeG\inps\EnabWl{}{\Ill}{\xe}\\
   V\eqps\Set{\zeG}
  \end{propConjA}}}\\
  &\eqps\SetCompX{\Tuple{V,\xe}\inps\xaqlS\timesps\xeS}{V\eqps\EnabWl{}{\Ill}{\xe}}
  =\BR{\R^{l\qsupb}}^{-1}\\
  \R\hspace{-0.1cm}\circ\hspace{-0.1cm}\BR{\R^{l\qsupb}}^{-1}
  &\eqps\SetCompX{\Tuple{\zeG,\xe}\inps\xaS^{\Ill}\timesps\xeS}{\ExQ{V\inps\xaqlS}{
  \begin{propConjA}
   V\eqps\EnabWl{}{\Ill}{\xe}\\
   V\eqps\Set{\zeG}
  \end{propConjA}}}\\
  &\eqps\SetCompX{\Tuple{\zeG,\xe}\inps\xaS^{\Ill}\timesps\xeS}{\zeG\inps\EnabWl{}{\Ill}{\xe}}
  =\R^\lsup
 \end{align*}
With this observations (iii) follows immediately from the transitivity of simulation relations.

%
%

\end{document}
